\documentclass[runningheads]{llncs}
\usepackage[T1]{fontenc}

\usepackage{newtxtext}
\usepackage[varvw]{newtxmath}

\usepackage{graphicx}
\usepackage{hyperref}
\usepackage{color}

\usepackage{comment}
\usepackage{amsmath}
\usepackage{mathtools}
\usepackage{bbm}
\usepackage{booktabs}
\usepackage{multirow}
\usepackage{tikz}
\usepackage{xcolor}
\usepackage{microtype}
\usepackage{float}
\usepackage{algorithm}
\usepackage[noend]{algpseudocode}
\usepackage[numbers,sort&compress]{natbib}
\usepackage{thmtools}
\usepackage{cleveref}
\usepackage{todonotes}
\makeatletter
\define@key{todonotes}{petr}[true]{\setkeys{todonotes}{color=orange!25}}
\define@key{todonotes}{martin}[true]{\setkeys{todonotes}{color=blue!25}}
\makeatother

\graphicspath{{figures/}}
\usepackage{wrapfig}

\usepackage{subcaption}
\usepackage{etoolbox}

\newtoggle{arxiv}
\toggletrue{arxiv}

\renewcommand{\cref}{\Cref}

\newif\ifshowcomments
\showcommentstrue

\newcommand{\E}{\mathbb{E}}
\newcommand{\Prob}{\mathbb{P}}
\newcommand{\R}{\mathbb{R}}
\newcommand{\N}{\mathbb{N}}
\newcommand{\cF}{\mathcal{F}}
\newcommand{\KL}{\mathrm{KL}}

\newcommand{\ind}[1]{\mathbbm{1}\!\left[#1\right]}
\DeclareMathOperator*{\argmax}{arg\,max}
\DeclareMathOperator*{\argmin}{arg\,min}
\DeclareMathOperator{\supp}{supp}
\DeclareMathOperator{\clip}{clip}

\newcommand{\cX}{\mathcal{X}}
\newcommand{\cG}{\mathcal{G}}
\newcommand{\cC}{\mathcal{C}}

\newcommand{\cP}{\mathcal{P}}
\newcommand{\cU}{\mathcal{U}}
\newcommand{\cA}{\mathcal{A}}
\newcommand{\qstar}{q^{*}}
\newcommand{\RIPr}{\mathrm{RIPr}}

\newcommand{\feat}{\varphi}

\newcommand{\monitor}{\nu}
\newcommand{\cH}{\mathcal{H}}
\newcommand{\HH}{H}

\newcommand{\dist}[1]{\Delta(#1)}

\newcommand{\model}{\delta}
\newcommand{\LM}{\mathrm{ML}}
\spnewtheorem*{excontinued}{Example}{\itshape}{\rmfamily}

\newenvironment{excont}[1]{\def\excontnum{\ref{#1}}\begin{excontinued}[continued]}{\end{excontinued}}

\begin{document}
\title{Anytime-Valid Monitoring of Robust Stochastic Models}
\nottoggle{arxiv} {
    \author{Hidden for the review process.}
} {
    \author{
        Martin Kurečka\inst{1} \and
        Roman Andriushchenko\inst{1,2} \and
        Milan Češka\inst{2} \and
        Nicola Paoletti\inst{3} \and
        Petr~Novotný\inst{1}
    }
}
\authorrunning{M. Kurečka et al.}
\institute{
    Masaryk University, Brno, Czechia \and
    Brno University of Technology, Brno, Czechia \and
    King's College London, London, UK
}
\maketitle
\begin{abstract}
Formal methods provide a principal approach to obtaining safety guarantees for autonomous agents. Their key bottleneck is their reliance on a formal model (e.g., a Markov decision process) describing the agent-environment interaction. Learning or crafting the model is an error-prone process, which undermines the guarantees offered by formal methods.
Robust models account for modelling uncertainty, but may still fail to capture the true environment.
This issue can be addressed through \emph{runtime model monitoring}, which aims to raise an alarm when a likely model-environment discrepancy is detected, while keeping the probability of false alarm bounded.
Existing approaches impose various assumptions
and do not clearly explain the rationale behind their design choices, leaving the broader design space for these monitors somewhat opaque.
In this paper, we formalise the design space of monitors for robust stochastic models through the lens of \emph{parameterised sequential betting} algorithms, highlighting the importance of \emph{observation feature} design (choosing what information about the history the monitor uses) and \emph{bet design} (how rich a family of betting rules the monitor can represent).
We show that our framework generalizes the existing approaches, and we use its formalization to propose a new \emph{mixture of linear monitors (ML)} design.
We prove that these ML monitors achieve an asymptotically optimal detection time within the class of linear monitors, making them a principled choice for our monitoring problem.
Our experiments confirm that ML monitors significantly reduce detection times compared to existing monitors and can effectively exploit more complex observation features.
\end{abstract}

\section{Introduction}
\label{sec:introduction}
\paragraph{Probabilistic Formal Methods: Promises and Bottlenecks}
Obtaining safety guarantees on the behavior of autonomous agents is a key objective of both formal methods and AI~\cite{seshia22CACMverifiedai,dalrymple2024guaranteedsafeaiframework}. Verification of sequential decision-making frameworks offers a well-developed toolbox of methods that can prove an agent's adherence to a specified objective~\cite{pmc,katoenMC}. A key bottleneck of this toolbox is its requirement for a \emph{model} of the agent--environment system, e.g., in the form of a Markov decision process (MDP).
The guarantees offered by such formal methods hold \emph{with respect to the model}, and any discrepancy between the model and the actual system might weaken or even invalidate the guarantees during real-world deployment. For complex domains and real-world interactions, model construction is inevitably error-prone: a model crafted by an expert, if feasible at all, is susceptible to oversights, while a model learned from real-world data is subject to statistical noise~\cite{robustanytime}. Consequently, agreement between the model and the modeled system is an important concern in contemporary formal-methods research~\cite{henzinger2025alignment}.

\paragraph{Anytime-Valid Model Monitoring}

A principled way to address model--system discrepancies is \emph{runtime model monitoring}~\cite{henzinger2025alignment}. A monitor observes the system's trajectory during deployment and raises an alarm when it detects a significant discrepancy between the model and the observed behavior. It thereby identifies situations in which model-based guarantees might no longer hold, prompting an appropriate response from, e.g., a supervisory safety-control layer or an overseeing authority.
A primary requirement of a monitor is that it is \emph{anytime \( \alpha \)-valid} for a given significance level \( \alpha \): this means that the probability of a false alarm (i.e., wrongly raising an alarm in the case that the true environment matches the model) along a trajectory sampled from the environment is at most \( \alpha \).
This guarantee concerns the whole, possibly unbounded, run.
This makes the monitoring problem an instance of the sequential hypothesis testing framework~\cite{ramdas2023gametheoretic}, with the model as the null hypothesis.

\paragraph{Monitoring Robust Models}
As noted above, a model learned from finite data is unlikely to capture the true environment exactly, due to statistical noise.
Thus, monitoring a single learned model is uninformative, since we are \emph{a priori} nearly certain that an alarm should be raised. However, learning procedures often produce a (potentially infinite) \emph{ensemble} of models that plausibly contains the true environment. Such ensembles can be represented finitarily, e.g., by a \emph{robust} or \emph{interval} MDP, which specifies, for each state-action pair, an \emph{uncertainty set} of plausible successor distributions~\cite{iyengar,Nilim2005RobustCO,wiessmanrmdp,GIVANinterval,intervaljair,rv26monitoring}.
The monitoring task is then to raise an alarm in cases when the true environment is not contained in the ensemble (a \emph{composite null} case in sequential hypothesis testing~\cite{sethi2026markovian,ramdas2023gametheoretic,Grunwald2024-pk,Lardy2024-gy}).

\paragraph{Our Motivation} Several approaches to runtime model monitoring have recently been developed \cite{sethi2026markovian,Fields2025-gu,henzinger2025alignment}
Given this literature, it is natural to ask why further model monitors are necessary. Our key motivation is to \emph{formalise the design landscape} of model monitors with a focus on effective monitoring of \emph{robust models}. The existing works impose diverse assumptions on the models or the environment and do not comprehensively expose the rationale behind the design choices of their monitors:

For instance, \cite{sethi2026markovian} considers monitoring of robust Markov chains against an ergodic Markovian environment. Their \emph{transition-wise} monitors estimate the environment transition probabilities in every state and use these estimates to detect model mismatch. While they prove the asymptotic optimality of their monitors w.r.t. the expected alarm raise time, their monitors require a large number of state revisits to produce estimates precise enough to provide evidence of model mismatch.
This can delay detection when discrepancies are systematic but small for individual transitions, or in episodic reinforcement learning tasks where each state may be visited only a few times, if at all.

On the other hand, the monitors in~\cite {henzinger2025alignment} observe the value of a fixed scoring rule that measures per-step discrepancies between model predictions and observed outcomes.
The scoring rule distills the discrepancies into a single scalar signal pooled across all states, which may enable faster gathering of evidence if the scoring rule consistently captures the mismatch in dynamics.
However, we show (\Cref{ex:tradeoff}) that a monitor based on scoring rules may be blind to certain misspecifications that are systematic, but are visible only through features that are not correlated with the transition probabilities predicted by the model.
As the authors state, their focus is on estimation,\footnote{As a matter of fact, the monitors of~\cite{henzinger2025alignment} do not provide a test for model-environment consistency, but a confidence sequence~\cite{howard2021timeuniform} for the average score.
}
and thus the paper does not provide a theoretical analysis of how scoring rule choice may affect detection power.
The resulting monitors come without explicit alarm time guarantees and do not support monitoring of robust models out of the box.

The aforementioned approaches can be seen as points on a spectrum of design choices.
In this paper, we formalise this spectrum and use the formalization to make principled design choices for a new class of monitors.

\subsection{Our Approach and Contributions}

\paragraph{General problem formulation} Using the theory of sequential testing~\cite{Wald1945-gv,ramdas2023gametheoretic}, we provide a general formulation of the model monitoring problem as testing the hypothesis that the environment is \emph{consistent} with a given \emph{robust model} (\Cref{sec:robust-model}). A robust model assigns, to each history of observations, an \emph{uncertainty set} of distributions over next-step observations. An environment is consistent with the model if, after every history \( h \), its next-step distribution lies in the uncertainty set of \( h \). Apart from this constraint, the environment's distribution for each history \( h \) can be chosen independently of choices for other \( h' \neq h \), i.e., our robust models are \emph{rectangular}, as is usual in the literature~\cite{iyengar,Nilim2005RobustCO,wiessmanrmdp}. We focus on a natural class of \emph{polytopic} rectangular models~\cite{ijcai2024polytopic,aaai25polytopic,concur26polytopic}, where each uncertainty set is a convex polytope. On the other hand, we make no Markov or time-homogeneity assumptions about the environment or the model, and our framework subsumes, e.g., monitoring robust rectangular (partially observable) MDPs.

\paragraph{Betting Monitors: Complete Language for Monitoring of Robust Models} A special class of monitors arises from \emph{test supermartingale} processes in sequential testing theory~\cite{ramdas2023gametheoretic}. These have a natural \emph{betting} interpretation, in which the monitor accumulates wealth by betting against the model and raises an alarm when its wealth becomes sufficiently large. Here, the wealth is a product of one-step \emph{fair bets}: functions assigning numerical payoffs to successor observations whose expectation is at most one under every successor distribution the model admits for the current history~(\Cref{sec:betting-intro}). The idea is that under the model, the bettor cannot expect to increase their wealth, but if the true environment persistently diverges from the model, suitable bets let the bettor's wealth grow unboundedly. For general composite hypotheses (monitoring model ensembles), this approach has limited expressive power: there are cases where no test supermartingale/betting monitor can detect model mismatch, and one must resort to the more general \emph{e-processes}~\cite{howard2021timeuniform,Ramdas2022-jn,ramdas2020admissible,Grunwald2024-pk}. We show that for polytopic robust models, every $\alpha$-valid monitor is \emph{dominated} by an \( \alpha \)-valid betting monitor, i.e., the latter raises an alarm no later than the former \emph{on every trajectory} (\cref{thm:betting-monitor}). Due to this \emph{completeness} result, we can restrict our attention to betting monitors.

\paragraph{Design Choice Spectrum}
Designing a monitor for a polytopic robust model thus reduces to designing a sequential betting algorithm, a process we decompose into four choices (\Cref{sec:betting-algorithms,sec:constructions}): the choice of (1)~\emph{features}, i.e., the summary of the history available to the monitor; (2)~the \emph{parameterisation}, a function mapping the features and a parameter vector to weights expressing how strongly the monitor bets on each possible next observation; (3)~the \emph{bet construction}, turning the weights into a bet that is fair
under the uncertainty set of the current history;
and (4)~the \emph{learning component}, adapting the parameters along the run so as to optimize the bets using the information learned about the true environment. The monitor's validity is ensured by the bet construction alone, so the remaining components can be chosen freely to maximize detection power. The existing monitors discussed above occupy particular points in this design space.

\paragraph{Bet Constructions and their Analysis}
Two standard constructions of fair bets from the optimal betting literature are the \emph{scale} construction, known in the literature on testing by betting~\cite{Ramdas2025-gq}, and the \emph{reverse information projection} (RIPr), which stems from the study of optimal bets against a known distribution~\cite{li1999mixture,Grunwald2024-pk,Lardy2024-gy}. We show how to compute these constructions efficiently in the interval model setting~(\Cref{sec:computation}), while for general polytopic models, we provide new theoretical insights into the RIPr construction: First, RIPr bets can dominate (pointwise) any fair bets (\Cref{thm:ripr-completeness}). Second, we provide a geometric interpretation of the two constructions that shows how RIPr offers more flexibility in producing the bets, which renders it a more suitable choice without other assumptions on the weights parametrisation.

\paragraph{Mixture of Linear Monitors: Algorithm and Pathwise Guarantees}
To propose a concrete monitor design, we instantiate the remaining design choices. The monitors we propose can use arbitrary domain-specific features, e.g., the probabilities predicted by the model or, in a gridworld, the agent's coordinates.
As for parametrisation and learning, we propose a \emph{mixture of linear monitors} (\Cref{sec:algorithm}). Here, the weights are \emph{log-linear} in the features, with a parameter vector $\theta\in\R^d$ used to pool evidence across transitions with similar features. For the learning component, we use the method of mixtures, averaging the wealth processes induced by all $\theta$ using a Gaussian mixture, approximated by a finite number of particles without affecting validity.
We perform a regret and detection time analysis in~\Cref{thm:mixture-regret}  and \Cref{cor:mixture-detection-time} showing that, on every trajectory,
the mixture of linear monitors asymptotically collects the evidence at least as fast as any linear monitor associated with a fixed choice of parameter $\theta^*$.

\paragraph{Experiments} We performed a thorough experimental evaluation that included 925 monitoring instances over 23 environments~(\Cref{sec:experiments}). The results confirmed the correctness of the proposed monitor construction and its practical advantage over existing approaches. In particular, i) the RIPr construction for robust models generally improves the mismatch detection times, ii) our sequential betting formulation instantiated by the mixture of linear monitors outperforms the existing monitors from the literature, and most importantly, iii) it can effectively exploit the observation features, significantly reducing the number of undetected model misalignments.

\section{Setup and Notation}
\label{sec:setup}

{
We write $\ind{\cdot}$ for the indicator function (Iverson bracket), which equals $1$ if its argument holds and $0$ otherwise.
The set of probability distributions on a finite set $\cX$ is the probability simplex
\(
  \dist{\cX}=\{p\in[0,1]^{\cX}:\sum_{x\in\cX}p(x)=1\}.
\)
The support of ${p\in\dist{\cX}}$ is $\supp(p)=\{x\in\cX:p(x)>0\}$, and the
support of a set $\cU\subseteq\dist{\cX}$ is
$\supp(\cU)=\bigcup_{p\in\cU}\supp(p)$. For a function
$f\colon\cX\to\R$ and $p\in\dist{\cX}$, we write
$\E_p[f]$ to denote the expectation of $f$ under $p$.
A \emph{polytope} in $\dist{\cX}$ is a convex hull of finitely many distributions.
For $q,p\in\dist{\cX}$, the
Kullback--Leibler divergence of $q$ from $p$ is
$\KL(q\Vert p)=\sum_{x\in\supp(q)}q(x)\log\frac{q(x)}{p(x)}$
if $\supp(q)\subseteq\supp(p)$, and $\KL(q\Vert p)=\infty$ otherwise.
For a finite set $\cX$, we identify functions $\cX\to\R$ with vectors in
$\R^{|\cX|}$ and compare them component-wise.
For a vector $v\in\R^d$, $\lVert v\rVert_\infty=\max_i|v_i|$. We use
the convention $\inf\varnothing=\infty$, $\frac{0}{0}=0$, and $0 \cdot \infty = 0$.
}

\subsection{Controlled Stochastic Processes}
Let $\cX$ and $\cA$ be finite sets of states and actions, respectively.
The interaction with the environment generates the stochastic process
  $(X_t,A_t)_{t\geq0} = X_0,A_0,X_1,A_1,X_2,\ldots,$
where the $\cX$-valued random variable $X_t$ represents the state observed
at time $t$ and the $\cA$-valued random variable $A_t$ represents the action
taken at time $t$. We call the process  $(X_t,A_t)_{t\geq0}$ the \emph{true process}.
A \emph{history} comprising $t+1$
state-action pairs is a sequence
$
  h_t=(x_0,a_0,x_1,\ldots,x_t,a_t).
$
We use $\cH$ to denote the set of all histories. We further write
$\HH_t=(X_0,A_0,X_1,\ldots,X_t,A_t)$
to denote the random history.
We call a process $(Y_t)_{t\geq0}$ \emph{adapted} if $Y_t$ is a function of $(\HH_{t-1}, X_t)$ for every $t\geq0$, where $\HH_{-1}$ is the empty history.

We make no Markov or time-homogeneity assumption about the true process. Its
one-step transition function after a history $h_t$ is
  $\qstar_{t+1}(x\mid h_t)
  =\Prob(X_{t+1}=x\mid \HH_t=h_t)$ for
$x\in\cX.$
Unless stated otherwise, $\Prob$ and $\E$ refer to probability and
expectation in the probability space underlying the process.

\begin{remark}
For the sake of clarity, we present the theory under the usual MDP interpretation
where $X_t$ is the state of the environment and $A_t$ is the action of the
agent.
All results stand as long as $\cX$ and $\cA$ remain finite, $\cX$ is the quantity predicted by the model, and $\cA$ represents
any other available information. For instance, the same theory applies to Markov chains where $\cA$ becomes empty or to partially observable MDPs, where $\cX$ becomes the space of partial observations.%
\end{remark}

\subsection{Robust Model}
\label{sec:robust-model}

A \emph{robust model} is a function $\model$ that assigns to each prediction
history $h_t \in \cH$ a nonempty set $\model(h_t) \subseteq \dist{\cX}$.  We call the model \emph{simple} if $\model(h_t)$ is a singleton for every history.%
We call $\model(h_t)$ an \emph{uncertainty set}, and use $\cU$ to denote
a generic uncertainty set. This set contains the one-step transition
functions that the model considers plausible for $X_{t+1}$ after $h_t$,
and is the predictive object that is exposed to formal analysis.

We say that a stochastic process is \emph{consistent with the model} if, after
every history that can occur under the process, its conditional one-step
transition function is admitted by the model:
$  \qstar_{t+1}(\cdot \mid h_t)\in\model(h_t)$
  for every $h_t\in\cH$ with $\Prob(\HH_t=h_t)>0.$
  Thus, a process consistent with the model may use a different admissible
transition function at every time and after every history; it need not be
Markov or time-homogeneous.
We say a history $h_t$ \emph{is consistent with $\model$} if $\Prob(\HH_t = h_t) > 0$ for at least one process consistent with $\model$.

We focus on \emph{polytopic robust models}~\cite{ijcai2024polytopic}, for which
$\model(h_t)$ is a convex polytope for every history $h_t$. Some results require
the more specific class of \emph{interval robust models}~\cite{GIVANinterval}: for every $h_t$,
there are lower and upper bounds $a,b\in[0,1]^\cX$ with $a\leq b$ such that
  $\model(h_t) = \dist{\cX} \cap \prod_{x\in\cX}[a_x, b_x].$ In particular, our definition subsumes rectangular robust MDPs~\cite{iyengar}.
\begin{example}
\label{ex:line-world}
We consider a random walk in a one-dimensional windy gridworld as our running example, illustrated in~\cref{fig:line-world} in Appendix~\ref{sec:examples}.
The states are the cells $\cX=\{1,\dots,N\}$ and the actions are $\cA=\{-1,+1\}$ (left/right).
From cell $x$, the agent either reaches the intended cell $x+a$ or slips to the opposite neighbour $x-a$; a move leaving the line keeps the agent in place; the intended move succeeds with probability $p(x+a\mid x,a)=0.75$ for both actions.
The robust model $\model$ is an interval MDP that widens each transition probability $p$ to $[p-0.1, p+0.1]$, capturing uncertainty in the slip probability.
In the real environment, the wind shifts probability $0.2$ to the right neighbour, so the intended move succeeds with probability $0.95$ for $a=+1$ and $0.55$ for $a=-1$.
Both probabilities lie outside the interval $[0.65,0.85]$ admitted by $\model$, so the real environment is not consistent with $\model$.
\end{example}

\subsection{Problem Statement}
A \emph{monitor} is any adapted $\{0,1\}$-valued process
$(\monitor_t)_{t\geq0}$. The value $\monitor_t=0$ means that the observations
up to time $t$ are considered consistent with the model, while $\monitor_t=1$ raises
an alarm. We denote with $\tau_\nu$ the time of the first alarm:
$\tau_\nu=\inf\{t\geq0:\monitor_t=1\}.$

For $\alpha\in(0,1)$, the monitor is \emph{$\alpha$-valid} if its probability
of ever raising a false alarm is uniformly bounded:
$  \Prob(\tau_\nu<\infty)\leq\alpha$
  under every process consistent with $\model.$
Among monitors satisfying this guarantee, we seek ones that raise an alarm
quickly when the true process is not consistent with the model.

Without loss of generality, we assume the mild support condition:
\(
  \supp(\qstar_{t+1})\subseteq \supp(\model(\HH_t))
\)
on every history consistent with $\model$.
Indeed, if this does not hold, we may observe outcomes that are impossible under every distribution
admitted by the model $\delta$, in which case the monitor would trivially raise an alarm.

\section{Betting Monitors}
\label{sec:betting-intro}
A special class of monitors arises from \emph{test supermartingales}~\cite{ville1939etude,Shafer2011-xg,Ramdas2025-gq}. We view these through the lens of \emph{sequential betting.} Such a monitor can be viewed as a bettor who in every timestep $t$ possesses some wealth $W_t$, starting with $W_0=1$. We call $(W_t)_{t\geq0}$ a \emph{wealth process} induced by the bettor. At every timestep $t$, the monitor and model set up a bet. The model believes that the dynamics are consistent with
$\model$ and accepts any bet that is fair under every distribution it
considers plausible. The bettor bets against the model and aims to
accumulate as much wealth as possible. Formally, at each $t$, the bettor stakes
their entire wealth $W_t$ and announces a payoff function
$e\colon\cX\to[0,\infty]$. If the next state is
$X_{t+1}=x$, the bettor receives $W_t \cdot e(x)$, so that the new
wealth is $W_{t+1}=W_t \cdot e(X_{t+1})$. The model accepts a bet only
if it cannot lose money in expectation under any distribution plausible according to $\cU = \model(h_t)$,
i.e., if
\vspace{-.5em}
\begin{equation}
\label{eq:fair-bet}
\E_{x\sim p}[e(x)]\leq1
\vspace{-.5em}
  \qquad\text{for every }p\in\cU.
\end{equation}
We call any such function $e\colon\cX\to[0,\infty]$ satisfying \eqref{eq:fair-bet}
a \emph{fair} (or \emph{valid}) bet for $\cU$.\footnote{In the statistics literature, functions satisfying \eqref{eq:fair-bet} are known as \emph{e-values}~\cite{vovk2021evalues}.}
With the initial budget $W_0=1$, and repeating this procedure with a fair bet
$e(\,\cdot\,;h_{i-1})$ for $\model(h_{i-1})$ in every step $i$,
the bettor's wealth after $t$ steps is
  $W_t = \prod_{i=1}^{t} e(X_{i};\HH_{i-1})$,
where each bet is chosen based on the
history $h_{i-1}$. If the true process is
consistent with $\model$, the bettor cannot expect to increase their wealth.
If, however, the true dynamics differ from the model, the bettor may choose
payoffs whose expectation is at most one under every model distribution but
ensure the wealth grows under the distribution they believe governs the~environment.

\begin{example}
\label{ex:fair-bet}
In the windy line world of \cref{ex:line-world}, under the action $a=+1$,
the model predicts that the next state is $x+1$ with probability in $[0.65,0.85]$ and $x-1$ with probability in $[0.15,0.35]$.
The bet $e_1(x+1)=1.1$ and $e_1(x-1)=0.4$ is fair, because its maximum expectation under the model is $0.85\cdot1.1 + 0.15\cdot0.4 = 0.995 < 1$.
\end{example}

Formally, a \emph{betting monitor} for a wealth process $W_t$ is the monitor $\monitor_t = \ind{W_t \geq 1/\alpha}$ that raises an alarm in step $t$ if $W_t \geq \frac{1}{\alpha}$, where $\alpha$ is the given validity parameter. We refer the reader to Appendix~\ref{app:test-sm} for deeper background on test supermartingales. The following is a simple consequence of Ville's inequality~\cite{ville1939etude} and of the fact that a betting monitor must produce fair bets:

\begin{theorem}
\label{thm:betting-validity} Every betting monitor is \(\alpha\)-valid.
\end{theorem}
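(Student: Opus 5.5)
The plan is to fix an arbitrary process consistent with $\model$, show that the wealth process $(W_t)_{t\geq0}$ is a nonnegative supermartingale with $W_0=1$ with respect to the filtration generated by the histories $\HH_t$, and then apply Ville's inequality. Since the alarm time of the betting monitor is $\tau_\nu=\inf\{t: W_t\geq 1/\alpha\}$, we have $\{\tau_\nu<\infty\}=\{\sup_t W_t\geq 1/\alpha\}$. Ville's inequality then gives $\Prob(\sup_t W_t\geq 1/\alpha)\leq \alpha\,\E[W_0]=\alpha$, which is exactly $\alpha$-validity.

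For the supermartingale step, write $W_{t+1}=W_t\cdot e(X_{t+1};\HH_t)$, where $W_t$ and the bet $e(\cdot;\HH_t)$ are determined by $\HH_t$. Conditioning on $\HH_t=h_t$ with $\Prob(\HH_t=h_t)>0$ gives
\[
\E[W_{t+1}\mid \HH_t=h_t]=W_t\sum_{x\in\cX}\qstar_{t+1}(x\mid h_t)\,e(x;h_t)\leq W_t .
\]
The inequality holds because consistency gives $\qstar_{t+1}(\cdot\mid h_t)\in\model(h_t)$, and $e(\cdot;h_t)$ is fair for $\model(h_t)$ by \eqref{eq:fair-bet}. Nonnegativity is immediate, since bets take values in $[0,\infty]$.

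The main technical point is that bets may take the value $\infty$. A proper supermartingale argument therefore needs $W_t$ to be a.s. finite. Under a consistent process, fairness forces $e(x;h_t)<\infty$ for every $x\in\supp(\qstar_{t+1}(\cdot\mid h_t))$; otherwise the expectation would be infinite. So infinite payoffs occur with probability zero, and the conventions $0\cdot\infty=0$ handle the remaining terms. A second point is that the monitor observes $X_t$ while the bets depend on $\HH_{t-1}$. Measurability with respect to the coarser filtration generated by $(\HH_{t-1},X_t)$ is enough, and the same conditional computation applies because $A_t$ does not enter $W_{t}$. With these checks done, Ville's inequality applies directly, and since the process was arbitrary the bound holds uniformly over all processes consistent with $\model$.
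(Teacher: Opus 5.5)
Your proof is correct and takes essentially the paper's route. The paper shows the wealth is a test supermartingale through its betting-decomposition lemma, which rests on the same tower-property computation you carry out directly (fairness of $e(\cdot;h_t)$ combined with $\qstar_{t+1}(\cdot\mid h_t)\in\model(h_t)$), and then applies Ville's inequality; the only small correction is that $\{\tau_\nu<\infty\}$ is contained in, not in general equal to, $\{\sup_t W_t\geq 1/\alpha\}$, and that inclusion is all your argument needs.
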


However, it is generally not the case that betting monitors are good at detecting model misalignment. Indeed, in general sequential
hypothesis testing, there are settings in which test supermartingales/betting monitors cannot detect model mismatch and the appropriate tool is a more general class of processes called e-processes
\cite{howard2021timeuniform,Ramdas2022-jn,ramdas2020admissible,Grunwald2024-pk}. In~\Cref{thm:betting-monitor} below (proven in the appendix) we show that for robust polytopic models, this does not happen: betting monitors are \emph{complete} for monitoring of such models, in the sense that
every valid monitor is dominated by a monitor based on a test supermartingale.

\begin{restatable}{theorem}{thmBettingMonitor}
\label{thm:betting-monitor}
Let $(\monitor_t)_{t\geq0}$ be an $\alpha$-valid monitor for a polytopic robust model $\model$.
Then there is a betting monitor $(\hat{\monitor}_t)_{t\geq 0}$
for $\model$
which raises its first alarm no
later than $(\monitor_t)_{t\geq0}$ on every history consistent with $\model$.
\end{restatable}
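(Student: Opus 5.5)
Our plan is to build the dominating betting monitor from the \emph{upper (worst-case) false-alarm probability} of the given monitor. For a history $h_t$ consistent with $\model$ that has not yet triggered an alarm, define
\[
  V(h_t) \;=\; \sup_{P}\; P\bigl(\tau_\nu<\infty \bigm| \HH_t=h_t\bigr),
\]
where the supremum ranges over processes consistent with $\model$ from $h_t$ onward. Set $V(h_t)=1$ if $\monitor$ has already alarmed along $h_t$. At the empty history (before $X_0$) we have $V\le\alpha$ by $\alpha$-validity. This uses rectangularity: the supremum can be taken independently on each continuation, so conditioning on a reachable prefix does not change the feasible set. Our candidate wealth process is $W_t = V(h_t)/\alpha$, normalised so that $W$ starts at most at $1$. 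Then $W_t\ge 1/\alpha$ exactly when $V(h_t)=1$, which holds whenever $\monitor$ has alarmed by time $t$. So $\hat\monitor_t=\ind{W_t\ge 1/\alpha}$ alarms no later than $\monitor$.

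The key step is a one-step dynamic programming identity. By rectangularity, the process can choose its next-step distribution $p\in\model(h_t)$ independently of its choices after each successor. We therefore expect
\[
  V(h_t)=\sup_{p\in\model(h_t)} \sum_{x} p(x)\, V(h_t,x,\cdot),
\]
with the action component handled by also taking the supremum over the action (or noting that the action is not controlled by the model, so we use $\sup_a$). This gives $V(h_t)\ge \E_p[V(h_t x\cdot)]$ for every admissible $p$. Defining the bet
\[
  e(x;h_t)=V(h_t,x,a_{t+1})/V(h_t) \quad (\text{with } 0/0=0),
\]
we get $\E_p[e]\le 1$ for all $p\in\model(h_t)$, so the bet is fair. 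Its running product telescopes to $V(h_t)/V(\text{initial})$. We then rescale: the betting monitor uses wealth $W_t=V(h_t)/V_0\ge V(h_t)/\alpha$. The case $V_0=0$ is handled by noting that then $\monitor$ never alarms on consistent histories, so any monitor dominates.

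Some technical points need care. The action $A_{t+1}$ is part of the history, not predicted by the model, so the bet can only depend on $X_{t+1}$. We would therefore define $V$ at "pre-action" points: $\tilde V(h_t,x)=\sup_a V(h_t,x,a)$ or, better, the supremum over all action policies. The supremum over action-generating mechanisms is absorbed into the definition, since validity must hold for all environments including their action choices. Monotonicity of the wealth in the action then requires $W_{t+1}$ to use $\tilde V$, and we would verify $V(h_{t+1})\le \tilde V(h_t,x_{t+1})$.

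The main obstacle is justifying the interchange in the DP identity: that the supremum of the infinite-horizon alarm probability decomposes stepwise. The step "$\ge$" needs $\varepsilon$-optimal continuations glued measurably across countably many histories, which is fine since histories are countable. The polytope assumption gives compactness of $\model(h_t)$, so the supremum over $p$ is attained. To handle the infinite horizon, I would first work with the finite-horizon events $\{\tau_\nu\le T\}$, where backward induction is exact. I would then pass $T\to\infty$ using monotone convergence of the $V^T$, with the inequality $\E_p[V^T(\cdot)]\le V^{T}$ preserved in the limit. The limit uses sup-of-increasing and, per $p$, monotone convergence. If the sup over $p$ and the limit do not commute, I would instead define the wealth via $\lim_T V^T$ directly and check fairness through that limit argument.
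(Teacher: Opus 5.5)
Your proposal is correct, but it takes a different route from the paper. The paper never builds the supermartingale by hand. It notes that $\monitor_t/\alpha$ is an e-process (Ramdas--Wang, Prop.~7.8). It then invokes Theorem~14 of Ramdas et al.\ (2022), which for a fork-convex, locally dominated $\cP_\model$ gives a dominating $\cP_\model$-supermartingale $L_t$ with $L_{-1}\le 1$. Fork-convexity is exactly where convexity of the sets $\model(h_t)$ is used. Finally, it shifts additively, $W_t=L_t+1-L_0$, to get $W_0=1$. That route is short and ties the result to general e-process theory.

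You instead write the dominating process down explicitly as the worst-case conditional alarm probability. This is the upper-envelope (Snell-envelope) object that such abstract theorems obtain through an essential supremum; here it is a pointwise supremum over countably many histories. So your argument is self-contained and constructive.

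It is also simpler than you fear. The only inequality you need is $\sum_x p(x)\tilde V(h_t,x)\le V(h_t)$ for every $p\in\model(h_t)$. It follows by gluing $p$ with $\varepsilon$-optimal continuations from each $(h_t,x)$. Under the glued process, the conditional law of the future given $(h_t,x)$ is exactly that of the chosen continuation. So the conditional probability of $\{\tau_\monitor<\infty\}$ given $h_t$ splits exactly as a $p$-weighted sum, and the finite-horizon approximation and sup/limit interchange are unnecessary.

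Moreover, neither compactness nor convexity of $\model(h_t)$ is ever used; you only need rectangularity and linearity of $p\mapsto\E_p[\cdot]$. So your construction proves the statement for arbitrary rectangular models, whereas the paper's route needs convex uncertainty sets.

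One bookkeeping fix: with bets $\tilde V(h_t,x)/V(h_t)$ the product does not telescope, because $V(h_{t+1})\le\tilde V(h_t,x_{t+1})$ can be strict. The product still dominates $\tilde V(h_{t-1},x_t)/V(h_0)$. It is cleaner, however, to bet $e(x;h_t)=\tilde V(h_t,x)/\tilde V(h_{t-1},x_t)$. This bet is fair, since $\E_p[\tilde V(h_t,\cdot)]\le V(h_t)\le\tilde V(h_{t-1},x_t)$. Its product telescopes to $\tilde V(h_{t-1},x_t)/\tilde V(x_0)\ge\tilde V(h_{t-1},x_t)/\alpha$, using $\tilde V(x_0)\le\alpha$ because $X_0$ is unconstrained. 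This multiplicative normalisation at time $0$ plays the role of the paper's additive shift.
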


\subsection{Power of Bets}
\label{sec:power}
Validity alone says nothing about detection; for instance, the constant bet $e(x)=1$ for $x\in \cX$ is fair, but
it never increases the wealth. To measure how quickly a bet gathers evidence, we consider the log-wealth, which expresses the product of payoffs (i.e., the accumulated evidence against the model) as a sum:
$\log W_t=\sum_{i=1}^t\log e(X_i;\HH_{i-1})$. If the bettor
repeatedly places a bet $e$ against outcomes drawn independently from a distribution $q$,
then $\frac1t\log W_t\to\E_q[\log e]$ almost surely by the law of large
numbers. From the long-run perspective, it thus makes sense to look at the value $\log W_t$
rather than $W_t$, as $\log W_t$ eventually concentrates around $t\cdot \E_q[\log e]$.
We call $\E_q[\log e]$ the \emph{growth rate} of $e$ under $q$. If it
is positive, the wealth grows like $\exp(t\,\E_q[\log e])$ and crosses
$1/\alpha$ after about $\log(1/\alpha)/\E_q[\log e]$ steps; if it is
negative, the wealth vanishes.

\begin{excont}{ex:fair-bet}
Under the windy dynamics, the bet $e_1$ has growth rate of
$0.95\log(1.1)+0.05\log(0.4)\approx0.045>0$, so, provided the agent always selects the action $a=+1$,
a bettor who always places $e_1$ reaches wealth $1/\alpha=20$ after about $\log(20)/0.045\approx67$ steps.
Also note that requiring $\E_q[e]>1$ is not enough for a consistent growth of the evidence; the fair bet $e_2(x+1)=1/0.85$, $e_2(x-1)=0$ has expected
payoff under the windy dynamics $0.95/0.85\approx1.12$, but it loses
all wealth on the first occurrence of a slip that happens almost surely.
\end{excont}

\subsection{Design Choices for Betting Monitors}
\label{sec:betting-algorithms}

By \Cref{thm:betting-monitor}, restricting attention to betting monitors loses no monitoring power. How quickly the alarm comes depends on the bets. The wealth grows fastest when
the monitor bets on outcomes that the real environment makes more likely than
the model admits. Since the real environment is unknown, the monitor has to
learn such bets from the history, and its performance is governed by two
design choices: 1) what information about the history the betting rule uses, and 2) how rich a family of betting rules the monitor can represent. Rich families that utilize extensive information can represent better bets,
but it takes the monitor longer to learn which bets are effective.
Small, less informed families are easier to optimize over,
but the bets they contain may be weaker.

\begin{example}
\label{ex:tradeoff}
We illustrate this trade-off on the windy line world of
\cref{ex:line-world} with $N=13$, $\alpha=0.05$, and an agent that always picks the action ``left'' with probability $0.9$. In every cell and for both actions,
the real environment makes the right neighbour more likely than the model
admits, so a good monitor should learn to bet more on the right neighbour.
In this specific scenario, the growth-optimal monitor tailored for the real windy
dynamics~\cite{Grunwald2024-pk}\footnote{This monitor serves only as a reference: it ``cheats'' by knowing the true dynamics in advance.} raises the alarm after about $130$ steps on average.

\emph{Transition-wise bets.} The sequential tests for Markov chains
of~\cite{Fields2025-gu,sethi2026markovian} place their bets based on the current state $x$, the
action $a$, and the successor $x'$. They estimate the transition kernel
$\hat{q}(\,\cdot\mid x,a)$ from the history and place the bet that would grow
the wealth fastest if the estimate were correct. As $\hat{q}$ converges to the
true kernel $\qstar$, the bets converge to the optimal ones. However, the monitor
needs to learn one parameter per transition, i.e., separately in every cell, although the wind is the same everywhere.
In practice, it thus takes much longer than the optimal $130$ steps to identify accurate parameters and good bets, and in our evaluation,
the transition-wise monitor \texttt{TW} (\cref{sec:experiments}) raises the alarm after about $2100$ steps.

\emph{Score-based bets.} The monitors of \citet{henzinger2025alignment} bet
depending solely on how well each outcome agrees with the model's prediction, i.e.,
$e(x';h_t)\propto f_\theta(p,p(x'))$, where $p$ is the prediction of
$\model(h_t)$\footnote{The monitors were designed for simple models, so $\model(h_t)$ is a singleton.}
and $f_\theta$ is a function parametrised by $\theta$ that is invariant to permutations of the outcomes: it sees the predicted
probabilities, but not the outcomes they belong to. In the case of the line world, the function thus cannot distinguish between the actions
as in both cases, the appearing values of probabilities are the same.
Such a monitor therefore cannot directly bet on the outcome corresponding to the right neighbour.
Still, it can gain evidence on
average by betting on the less likely successor, which, in $90\%$ of the cases, is the right neighbour.
We made the bets valid for the robust model by scaling them according to the scale construction of \cref{sec:constructions}.
With only a few parameters,
the monitor quickly learns its best bet: the scoring rule-based monitor
\texttt{SR-Br-Scale} (\cref{sec:experiments}) raises the alarm after about
$650$ steps with the best fixed parameter and after about $1200$ steps when
learning it. Its best bets are weaker than the transition-wise ones, but it finds
them much faster. The specific choice of a scoring rule and other components also have a cost:
the best bets of the form $f_\theta$ could raise the alarm after about $380$ steps,
so a better bet representation could improve the performance even with the same information.
In the line world used in the experiments of \cref{sec:experiments} the agent picks each of the actions with probability one half. Although the true transition probabilities differ from the model's predictions for both actions, these discrepancies cancel out when aggregated across actions: outcomes assigned probability $0.25$ by the model occur with that same frequency overall, and the same for the other outcome. Consequently, bets of the form $f_\theta$ cannot detect the misalignment, consistent with our experimental results.
\end{example}

The two designs sit at opposite ends of the trade-off: transition-wise bets use the full
history but learn one parameter per transition, while score-based bets learn only a few
parameters overall but see a very limited observation of the states and actions involved.
To capture both in one framework, we make the two design choices explicit.
The information is given by a \emph{feature map}: the monitor observes a summary of the
history\footnote{The history summary may be chosen by the user or implied by the way the model was obtained.}
$  \feat_t=\feat(h_t)\in\R^{|\cX|\times d}$,
whose $x$-th row $\feat_{t,x}\in\R^d$ describes the transition to the outcome~$x$.

The family of bets is given by a \emph{parametrised betting rule}: every parameter $\theta$
from a parameter space $\Theta$ defines a payoff
\vspace{-.5em}
\[
  e^\theta(x;\feat_t)
  \qquad\text{to every possible outcome }x\in\cX,
  \vspace{-.5em}
\]
which must be a fair bet for $\model(h_t)$ for every feature matrix and every parameter.
The two designs above are the two ends of this spectrum. Transition-wise bets use the
indicators of the current state, the action, and the successor as features, with one
parameter per transition; score-based bets use the model's prediction as features, with
only a few parameters (concretely two). In between, the features
and the family can be chosen independently, including features that neither design uses,
which can make the trade-off much more favourable. The parameter $\theta$ itself may be
fixed in advance, fitted from previous data, or aggregated over as in the mixture
construction of \cref{sec:algorithm}.

\begin{excont}{ex:tradeoff}
\emph{Coordinate-based bets.} In the line world, a natural choice is to use the
coordinates of the current state and the successor as features, giving bets
$e(x';h_t)\propto f_\theta(p(x'),x',x)$. The real kernel is a function of these
features, $\qstar(x'\mid x,a)=p(x')+0.2(x'-x)$, so such a monitor can in
principle place the optimal bets with mean detection time $130$ steps. The
coordinate-based monitor \texttt{ML-RIPr-C} (\cref{sec:experiments}) raises
the alarm after about $175$ steps with the best fixed parameter and after about
$350$ steps when learning it, much faster than the other monitors thanks to
its low-dimensional parametrization and informative features. Moreover, its
best bets are close to the best ones attainable with these features, which we attribute to
the flexibility of the linear model compared to a fixed scoring rule.
\end{excont}

\section{Bet Constructions}
\label{sec:constructions}
A crucial step in designing a betting monitor is to design a parameterised betting rule that
produces a fair bet for every feature matrix $\feat_t$ and every parameter
value $\theta$. A general approach is to first map every
such pair into an $\cX$-indexed vector of \emph{weights}
\(
  w_x(\feat_t,\theta) > 0, x\in\cX,
\)
that express how strongly the monitor prefers to bet on each observation $x$ given the current history (encoded by $\feat_t$);
and then transform the weights into a valid bet for $\model(h_t)$
by a bet construction that guarantees validity for any input~weight.

Formally, a \emph{bet construction} $\cC$ assigns to every nonempty polytope
$\cU\subseteq\dist{\cX}$ and every weight vector $w\in(0,\infty)^{\cX}$
a bet
\(
  \cC(w,\cU)\in(0,\infty]^{\cX}
\)
that is \emph{fair} for $\cU$ and satisfies $\cC(w,\cU)(x)=\infty$ for
every $x\notin\supp(\cU)$.
Given a construction $\cC$ and a weight vector parametrisation $w(\cdot,\theta)$, the
monitor bets with
\vspace{-.5em}
\[
e^\theta(\,\cdot\,;h_t)=\cC\bigl(w(\feat_t,\theta),\model(h_t)\bigr).
\vspace{-.5em}
\]
By definition, this bet satisfies \eqref{eq:fair-bet} for every
history and every parameter. The function producing the weights can thus be chosen freely --- it could be, e.g., a regression tree, a neural network, or a linear model ---
and the validity of the resulting monitor is guaranteed by the transformation alone.

We propose and compare two bet constructions, the \emph{scale construction} (Section~\ref{sec:scale}) and the \emph{RIPr construction} (Section~\ref{sec:ripr}).%

\subsection{Scale Construction}
\label{sec:scale}
The first construction, appearing in the literature on testing by betting
(e.g. proof of Theorem~3.17 in \cite{Ramdas2025-gq}), rescales the weights by a
constant factor. In this case, the weights express the relative stakes of the bet, i.e., if
$w_x=2w_y$, the monitor bets twice as much on the outcome $x$ as on $y$.
The \emph{scale construction} $\cC^{\mathrm{Sc}}$ is given by
\[
  \cC^{\mathrm{Sc}}(w,\cU)(x)=\dfrac{w_x}{\max_{p\in\cU}\E_p[w]}.
\]
The expectation of the resulting bet $e$ is maximised at $p^*=\argmax_{p\in \cU}\E_p[e]$, where it satisfies $\E_{p^*}[e]=1$.
Therefore $\cC^c$ is indeed a valid bet construction.

\subsection{RIPr Construction}
\label{sec:ripr}

The second construction is based on the notion known in the betting literature as reverse information projection~\cite{Lardy2024-gy} defined as follows.
\begin{definition}
Let $\cU\subseteq\dist{\cX}$ be a nonempty uncertainty set and let
$q\in\dist{\cX}$. The \emph{reverse information projection} (RIPr) of $q$
onto $\cU$ is the set of minimisers
\begin{equation}
\label{eq:ripr}
  \RIPr(q,\cU)=
  \argmin_{p\in\cU}\KL(q\Vert p).
\end{equation}
We also abuse the notation and use the term \emph{RIPr} for any member of $\RIPr(q,\cU)$.
\end{definition}

It is a well known result~\cite{li1999mixture,Lardy2024-gy,Grunwald2024-pk} that
whenever a RIPr $p^*$ exists for $q$, the likelihood ratio $q/p^*$ is a fair bet for $\cU$. The \emph{RIPr construction} is then given by
\[
  \cC^{\mathrm{RIPr}}(w,\cU)(x)=q(x)/p^*(x)
  \qquad  \text{ for } \qquad p^*\in\RIPr(q,\cU),
\]
where $q(x)\propto\ind{x\in\supp(\cU)} w_x$
is a distribution with $\supp(q)=\supp(\cU)$, for a positive weight vector $w$.

If $\cU$ is a polytope, the minimum is attained because $\cU$ is compact and
$\KL(q\Vert\,\cdot\,)$ is lower semicontinuous, so $\RIPr(q,\cU)$ is
nonempty. In general it may contain more than one distribution; the
likelihood ratio between $q$ and its RIPr is nevertheless unique.
We formalise the discussion above as follows:
\begin{restatable}{theorem}{thmRiprWellDefined}
\label{thm:ripr-well-defined}
The $\cC^{\mathrm{RIPr}}$ is a well-defined bet construction.
\end{restatable}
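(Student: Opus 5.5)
To prove \Cref{thm:ripr-well-defined}, I will check the three requirements of a bet construction separately: the output must be well defined, it must be fair for $\cU$, and it must take values in $(0,\infty]$ with value $\infty$ exactly off $\supp(\cU)$.

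\textbf{Step 1: existence and finiteness.} Fix a polytope $\cU$ and a weight vector $w\in(0,\infty)^\cX$. Set $S=\supp(\cU)$ and $q(x)=w_x\ind{x\in S}/\sum_{y\in S}w_y$. The denominator is positive because $\cU\neq\varnothing$ forces $S\neq\varnothing$. As noted before the statement, a minimiser $p^*$ of $\KL(q\Vert\cdot)$ over $\cU$ exists by compactness and lower semicontinuity. The minimum is finite. Since $\cU$ is convex, the uniform mixture of finitely many points of $\cU$ whose supports cover $S$ lies in $\cU$ and has support exactly $S$, so it gives a finite KL value. Hence $\supp(q)=S\subseteq\supp(p^*)\subseteq S$. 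So $p^*(x)>0$ exactly on $S$, and $q/p^*$ lies in $(0,\infty)$ on $S$. Off $S$ we have $0/0$, which the paper's convention would set to $0$; I read the definition as setting the ratio to $\infty$ there, as the construction demands. Either way this part is harmless, because every $p\in\cU$ gives zero mass outside $S$.

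\textbf{Step 2: fairness.} This is the main step. Take any $p\in\cU$ and $\lambda\in(0,1]$, and set $p_\lambda=(1-\lambda)p^*+\lambda p\in\cU$. Optimality gives $\KL(q\Vert p_\lambda)\ge\KL(q\Vert p^*)$, that is,
\[
-\sum_{x\in S}q(x)\log\Bigl(1+\lambda\bigl(\tfrac{p(x)}{p^*(x)}-1\bigr)\Bigr)\ge 0 .
\]
I will divide by $\lambda$ and let $\lambda\downarrow0$. The map $\lambda\mapsto\log(1+\lambda c)$ is concave, so the difference quotients are monotone; all terms are finite because $p^*>0$ on $S$. The limit gives $\sum_{x\in S}q(x)\bigl(p(x)/p^*(x)-1\bigr)\le0$, which is exactly $\E_p[q/p^*]\le1$. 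Points outside $S$ contribute $0\cdot\infty=0$. The only care needed here is justifying the passage to the limit, which the monotonicity of the concave difference quotients handles.

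\textbf{Step 3: uniqueness of the ratio.} I must show the bet does not depend on which minimiser is chosen. Let $p_1,p_2$ both be minimisers. The function $p\mapsto-\sum_{x\in S} q(x)\log p(x)$ is convex, and it is strictly convex in the coordinates $\{p(x)\}_{x\in S}$ because $q>0$ on $S$. Its value at $(p_1+p_2)/2\in\cU$ therefore lies strictly below the common minimum unless $p_1(x)=p_2(x)$ for all $x\in S$. So $p_1$ and $p_2$ agree on $S$, and the ratio $q/p^*$ is uniquely determined. Together with Steps 1 and 2, this shows $\cC^{\mathrm{RIPr}}$ is a well-defined bet construction.
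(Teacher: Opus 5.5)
Your proof is correct, but it takes a more self-contained route than the paper's.

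\textbf{Where the two proofs agree.} Your Step~1 matches the paper's opening essentially line for line. The minimum is attained by compactness and lower semicontinuity. It is finite because the convex set $\cU$ contains a point of full support. Hence $\supp(p^*)=\supp(\cU)$.

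\textbf{Where they differ.} The paper then cites Theorem~1 of \cite{Lardy2024-gy} for both remaining facts: that all minimisers coincide on $\supp(q)$, which gives a singleton since $\supp(q)=\supp(\cU)$, and that $q/p^*$ is fair. You prove both directly.
\begin{itemize}
\item Fairness comes from the first-order optimality condition along the segments $p_\lambda=(1-\lambda)p^*+\lambda p$.
\item Uniqueness comes from strict convexity of $p\mapsto-\sum_{x\in S}q(x)\log p(x)$. This is available exactly because the construction forces $q>0$ on all of $S=\supp(\cU)$.
\end{itemize}

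\textbf{What each route buys.} Yours is elementary and citation-free, using only that $\cX$ is finite and $q$ has full support. The paper's citation is shorter and rests on a result that holds more generally. For example, it covers $\supp(q)\subsetneq\supp(\cU)$, as in \Cref{thm:ripr-completeness}, where your convexity argument would only give agreement of minimisers on $\supp(q)$.

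\textbf{Two small remarks.}
\begin{itemize}
\item In Step~2 you do not need monotonicity of difference quotients. The limit $\lambda\downarrow0$ is taken in a finite sum whose terms each converge to $p(x)/p^*(x)-1$. You only need $\lambda<1$, so every term stays finite when $p$ vanishes somewhere on $S$; at $\lambda=1$ a term can be $-\log 0$.
\item You are right about the value outside $\supp(\cU)$. With the paper's convention $\frac{0}{0}=0$, the literal formula gives $0$ there rather than the required $\infty$. The paper's proof passes over this silently, and reading the definition as assigning $\infty$ is the correct fix.
\end{itemize}
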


Ideally, a construction should not limit the bettor: every fair bet the bettor
might want to place should be obtainable from some weights, or at least be
improved upon by such a bet. It is not obvious that the RIPr
construction has this property, since it only produces likelihood ratios $q/p^*$
between a distribution and its projection. While this is indeed not true in general, we prove the result for polytopic $\cU$.

\begin{restatable}{theorem}{thmRiprCompleteness}
\label{thm:ripr-completeness}
Let $\cU$ be a nonempty polytope. For every fair bet $e$ for $\cU$ there exists $e'$ with $e \leq e'$ such that $e' = q/p^*$ for $q$ satisfying $\supp(q)\subseteq\supp(\cU)$ and $p^* \in \RIPr(q, \cU)$.
\end{restatable}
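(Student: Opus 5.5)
The plan is to first replace $e$ by a \emph{maximal} fair bet, and then show that every maximal fair bet is itself a likelihood ratio $q/p^*$ with $p^*$ a RIPr of $q$. Outside $\supp(\cU)$ nothing is constrained, so I will set $e'=\infty$ there, matching the convention of the bet constructions. On $S=\supp(\cU)$, every fair bet is finite and in fact $e(x)\le 1/\max_{p\in\cU}p(x)$.

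\emph{Step 1 (maximal dominating bet).} Write $\cU=\mathrm{conv}\{v_1,\dots,v_k\}$. By linearity, fairness on $\cU$ is equivalent to $\E_{v_j}[e]\le 1$ for all $j$. Consider the linear program
\[
\max \sum_{x\in S} e'(x)\quad\text{s.t.}\quad e'\ge e \text{ on } S,\qquad \E_{v_j}[e']\le 1 \text{ for all } j .
\]
It is feasible (take $e'=e$) and bounded by the bound above, so it has an optimum $e'$. This optimum is pointwise maximal: no fair bet dominates it with a strict inequality anywhere. Scaling then gives $\max_{p\in\cU}\E_p[e']=1$.

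\emph{Step 2 (a full-support maximiser).} Let $F=\argmax_{p\in\cU}\E_p[e']$, which is a face of $\cU$. I claim $e'(x)>0$ implies $x\in\supp(F)$, and in fact I will show $\supp(F)=S$. Suppose $x\in S\setminus\supp(F)$.
\begin{itemize}
\item Raising $e'(x)$ by $\varepsilon$ leaves $\E_p[e']$ unchanged for every $p\in F$.
\item Every vertex $v_j\notin F$ satisfies $\E_{v_j}[e']<1$ strictly, and there are only finitely many such vertices.
\end{itemize}
Hence a small enough $\varepsilon>0$ keeps the bet fair, contradicting maximality. Take $\hat p$ in the relative interior of $F$, so that $\supp(\hat p)=\supp(F)=S$ and $\E_{\hat p}[e']=1$.

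\emph{Step 3 (RIPr identification).} Define $q(x)=\hat p(x)e'(x)$. Then $q$ is a distribution because $\sum_x q(x)=\E_{\hat p}[e']=1$, it satisfies $\supp(q)\subseteq S$, and $q/\hat p=e'$ on $S$. To show $\hat p\in\RIPr(q,\cU)$, use that $\KL(q\Vert\cdot)$ is convex. It therefore suffices to check that its directional derivative at $\hat p$ towards any $p\in\cU$ is nonnegative:
\[
-\sum_x q(x)\frac{p(x)-\hat p(x)}{\hat p(x)} \;=\; -\sum_x e'(x)\bigl(p(x)-\hat p(x)\bigr) \;=\; 1-\E_p[e'] \;\ge\; 0 .
\]
Thus $e\le e'=q/\hat p$, as required.

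\emph{Main obstacle.} I expect Step 2 to be the delicate point. A naive choice of maximiser $\hat p$ may miss outcomes where $e>0$, and then $q/\hat p$ is $0/0=0$ there and no longer dominates $e$. Maximality combined with the finite vertex description of the polytope is exactly what rules this out. This is also where polytopicity is genuinely used: for non-polytopic $\cU$ the strict slack at the vertices off $F$ could fail to be uniform, and the perturbation argument would break.
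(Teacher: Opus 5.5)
Your proof is correct and follows the same three steps as the paper's proof: take a sum-maximising (hence admissible) fair bet $e'\geq e$, find a full-support $p\in\cU$ with $\E_p[e']=1$, and set $q=e'p$ so that $p\in\RIPr(q,\cU)$. The only difference is that you prove directly the two facts the paper imports from the literature, which makes your argument self-contained: your perturbation argument on the argmax face (taking a relative-interior point rather than averaging vertices $v^x$) replaces Proposition~2.16 of \cite{Ramdas2025-gq}, and your first-order optimality check for the convex map $\KL(q\Vert\cdot)$ replaces Proposition~7 of \cite{Lardy2024-gy}.
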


\subsection{Computation}
\label{sec:computation}
In the general case of a polytopic $\cU$, the scale construction is easier to compute than the RIPr construction: in the former, the scale constant is a solution of a linear program, while computing the RIPr requires solving a convex optimisation problem.
However, for interval robust models, both constructions can be computed exactly
and efficiently in time $\mathcal O(n\log n)$, where $n=|\supp(\model(h_t))|$.
We briefly describe the algorithms for computing the scale factor and the RIPr.

\paragraph{\textbf{Scale Construction}}
\label{sec:scale-computation}
For interval robust models, the linear program
$\max_{p\in\model(h_t)}\E_p[w]$ is a fractional knapsack, solved greedily: start from the lower bounds $p=a$ and distribute the remaining
mass $1-\sum_x a_x$ over the outcomes in decreasing order of $w_x$, raising
each $p(x)$ up to $b_x$. The greedy approach requires sorting the outcomes by $w_x$, which yields the $\mathcal O(n\log n)$ complexity, where $n=|\supp(\model(h_t))|$.

\paragraph{\textbf{RIPr Construction}}
\label{sec:ripr-computation}
For interval models, the RIPr is a clipped rescaling of $q$.

\begin{restatable}[Thm. 1 of \cite{DAmico2014-zg}]{lemma}{lemIntervalRipr}
\label{lem:interval-ripr}
Let $\cU=\dist{\cX}\cap\prod_{x}[a_x,b_x]$ be a nonempty interval set and
let $q\in\dist{\cX}$ satisfy $\supp(q)=\supp(\cU)$. Then there is $\mu>0$
such that $\RIPr(q,\cU)=\{p^*\}$ with
\begin{equation}
  \label{eq:interval-ripr}
  p^*(x)=\min\left\{\max\left\{q(x)/\mu,\,a_x\right\},\,b_x\right\}
  \qquad\text{for every }x\in\cX.
\end{equation}
\end{restatable}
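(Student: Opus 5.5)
We prove the lemma through the KKT conditions of the convex program $\min_{p\in\cU}\KL(q\Vert p)$. On $\cU$ this program has a linear feasible set: the box constraints $a_x\le p(x)\le b_x$ and the single equality $\sum_x p(x)=1$. The objective $p\mapsto-\sum_{x\in\supp(q)}q(x)\log p(x)$ (up to a constant) is convex in $p$, and it is strictly convex in the coordinates $p(x)$ with $x\in\supp(q)=\supp(\cU)$.

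\emph{Step 1: uniqueness.} On coordinates outside $\supp(\cU)$, every $p\in\cU$ is zero; since $x\notin\supp(\cU)$ forces $b_x=0$, formula \eqref{eq:interval-ripr} also returns $0$ there. So every feasible $p$ is determined by its restriction to $S=\supp(\cU)$. On $S$ the objective is strictly convex, hence the minimiser is unique. Existence follows from compactness and lower semicontinuity, as already noted before \Cref{thm:ripr-well-defined}. We may therefore write $\RIPr(q,\cU)=\{p^*\}$, and it remains to show that $p^*$ has the clipped form.

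\emph{Step 2: construct a candidate and verify optimality.} For $\mu>0$ define $p_\mu(x)=\min\{\max\{q(x)/\mu,a_x\},b_x\}$ and $g(\mu)=\sum_x p_\mu(x)$. The function $g$ is continuous and nonincreasing in $\mu$. As $\mu\to0$, $g(\mu)\to\sum_{x\in S}b_x\ge1$, because $q(x)>0$ on $S$. As $\mu\to\infty$, $g(\mu)\to\sum_x a_x\le1$. Both inequalities hold because $\cU$ is nonempty. By the intermediate value theorem there is $\mu>0$ with $g(\mu)=1$, so $p_\mu\in\cU$.

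We then check first-order optimality: $\langle\nabla f(p_\mu),p-p_\mu\rangle\ge0$ for all $p\in\cU$, where $\nabla f(p)_x=-q(x)/p(x)$ on $S$. Equivalently, we show that $\sum_{x\in S}\frac{q(x)}{p_\mu(x)}(p(x)-p_\mu(x))\le0$. Split $S$ into three groups:
\begin{itemize}
\item \emph{free} coordinates, where $p_\mu(x)=q(x)/\mu$ and the ratio equals $\mu$;
\item \emph{lower-clipped} coordinates, where $p_\mu(x)=a_x\ge q(x)/\mu$, so the ratio is $\le\mu$ and $p(x)-a_x\ge0$;
\item \emph{upper-clipped} coordinates, where $p_\mu(x)=b_x\le q(x)/\mu$, so the ratio is $\ge\mu$ and $p(x)-b_x\le0$.
\end{itemize}
In every group, $\frac{q(x)}{p_\mu(x)}(p(x)-p_\mu(x))\le\mu(p(x)-p_\mu(x))$. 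Summing over $S$ gives the bound $\mu\sum_x(p(x)-p_\mu(x))=\mu(1-1)=0$. By convexity, $p_\mu$ is a minimiser, and by Step 1 it is the unique one.

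\emph{Main obstacle.} The delicate points are the edge cases. The first is coordinates with $a_x=0$ that are lower-clipped, where $p_\mu(x)$ could be $0$ while $q(x)>0$. This cannot happen for the $\mu$ we chose, since $q(x)/\mu>0$ and the inner $\max$ keeps $p_\mu(x)>0$ unless $b_x=0$, which is excluded on $S$. So the ratios are finite on $S$. The second is degenerate boundary cases in the intermediate value argument, such as $\sum_{S}b_x=1$ exactly, where the limit $\mu\to0$ may not be attained. There I would argue that $g(\mu)=1$ already holds for all sufficiently small $\mu$, since $p_\mu=b$ once $\mu\le\min_{x\in S} q(x)/b_x$. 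The symmetric case $\sum_x a_x=1$ is handled the same way with large $\mu$.
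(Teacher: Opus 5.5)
Your proof is correct, and it takes a more self-contained route than the paper. The paper proves the lemma purely by citation. Uniqueness of the RIPr is inherited from the proof of \cref{thm:ripr-well-defined}, which rests on Theorem~1 of \cite{Lardy2024-gy} (all minimisers coincide on $\supp(q)$). The clipped form is read off from \citet[Thm.~1, eq.~(21)]{DAmico2014-zg}, a general result for separable convex objectives $\sum_x f_x(p(x))$ under box and linear constraints, instantiated with $f_x(p)=-q(x)\log p$.

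You instead re-derive this special case directly. Uniqueness comes from strict convexity of $-\sum_{x\in S}q(x)\log p(x)$ on $S=\supp(\cU)$. This is legitimate because $\supp(q)=\supp(\cU)$ leaves no feasible coordinate unweighted, and because your Step~2 exhibits a feasible point with finite objective. The existence of $\mu$ comes from the intermediate value theorem. Optimality comes from the first-order condition with the free/lower-clipped/upper-clipped split, where $\mu$ acts as the Lagrange multiplier of the sum constraint.

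Your route buys transparency and independence from the two external results. It also shows that every root of $g(\mu)=1$ yields the RIPr, consistent with the level-set remark in \cref{lem:app-ripr-interval-mu-lipschitz}. The paper's route is shorter and, via D'Amico et al., comes bundled with the sweep-line algorithm for $\mu$ that underlies the $\mathcal O(n\log n)$ bound.

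One small correction: $x\notin\supp(\cU)$ does not force $b_x=0$. For example, $a=(1,0)$ and $b=(1,\tfrac12)$ give $\cU=\{(1,0)\}$ with $b_2=\tfrac12$. What it does force is $a_x=0$, since otherwise every $p\in\cU$ would have $p(x)\geq a_x>0$. Together with $q(x)=0$, the clipped formula still returns $0$ there, so your Steps~1 and~2 go through unchanged.
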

Thus, computing the RIPr amounts to finding $\mu$ that makes $p^*$ a distribution.
This is solvable by a sweep line algorithm as described in Algorithm~1 of \citet{DAmico2014-zg}.
In our case, the algorithm runs in time $\mathcal O(n\log n)$.

\subsection{Bet Geometry}
\label{sec:bet-geometry}

For the scale bet, the level sets of $p \mapsto \E_p[w]$ are parallel hyperplanes, so for weights in general position it is maximised at a single vertex
(\cref{fig:bet-geometry}, left).
A larger face requires exact values of $e$, which the scale bet with weights in general position does not~produce.

\begin{wrapfigure}[17]{r}{0.52\textwidth}
  \centering
  \vspace{-2.5em}
  \def\bary#1#2#3{({4.0*(0.5*#1+#3)},{3.464*#1})}
  \def\hexagon{\bary{.5}{.3}{.2} -- \bary{.5}{.2}{.3} -- \bary{.3}{.2}{.5}
    -- \bary{.2}{.3}{.5} -- \bary{.2}{.5}{.3} -- \bary{.3}{.5}{.2} -- cycle}
\resizebox{\linewidth}{!}{%
  \begin{tikzpicture}[font=\small, scale=0.88, transform shape]
    \begin{scope}
      \begin{scope}[shift={(-0.77cm,-0.43cm)}]
        \fill[gray!25] \hexagon;
      \end{scope}
      \begin{scope}[scale=0.76]
        \draw \bary100 -- \bary010 -- \bary001 -- cycle;
        \node[above, font=\footnotesize, transform shape=false] at \bary100 {$1$};
        \node[below=2pt, font=\footnotesize, transform shape=false] at \bary010 {$2$};
        \node[below=2pt, font=\footnotesize, transform shape=false] at \bary001 {$3$};
      \end{scope}
      \begin{scope}
        \begin{scope}[scale=0.76]
          \clip \bary100 -- \bary010 -- \bary001 -- cycle;
          \begin{scope}[scale=1.3158]
            \begin{scope}[shift={(-0.77cm,-0.43cm)}]
              \foreach \c in {0.9,1.1,1.5}
                \draw[blue!60!black, dashed] \bary{-1}{\c+2}{-\c} -- \bary{1.5}{\c-3}{2.5-\c};
              \draw[blue!60!black] \bary{-1}{3.3}{-1.3} -- \bary{1.5}{-1.7}{1.2};
            \end{scope}
          \end{scope}
        \end{scope}
      \end{scope}
      \begin{scope}[shift={(-0.77cm,-0.43cm)}]
        \fill[red!75!black] \bary{.5}{.3}{.2} circle (1.6pt);
        \node[red!75!black, below right, yshift=5pt, font=\small, transform shape=false] at \bary{.5}{.3}{.2} {$\cU^*$};
      \end{scope}
      \begin{scope}[scale=0.76]
        \node[below=3mm, font=\footnotesize, transform shape=false] at \bary{0}{.5}{.5} {Scale, $w=(3,2,1)$};
      \end{scope}
    \end{scope}
    \begin{scope}[xshift=3.55cm]
      \begin{scope}[shift={(-0.77cm,-0.43cm)}]
        \fill[gray!25] \hexagon;
      \end{scope}
      \begin{scope}[scale=0.76]
        \draw \bary100 -- \bary010 -- \bary001 -- cycle;
        \node[above, font=\footnotesize, transform shape=false] at \bary100 {};
        \node[below=2pt, font=\footnotesize, transform shape=false] at \bary010 {};
        \node[below=2pt, font=\footnotesize, transform shape=false] at \bary001 {};
      \end{scope}
      \begin{scope}[shift={(-0.77cm,-0.43cm)}]
        \draw[blue!60!black, dashed] plot[smooth cycle, tension=0.3] coordinates {(2.640,2.099) (2.617,2.124) (2.591,2.149) (2.561,2.173) (2.528,2.191) (2.495,2.200) (2.464,2.197) (2.436,2.184) (2.414,2.164) (2.395,2.139) (2.380,2.111) (2.368,2.083) (2.359,2.054) (2.353,2.026) (2.348,1.998) (2.345,1.972) (2.343,1.946) (2.342,1.922) (2.342,1.898) (2.344,1.875) (2.346,1.853) (2.349,1.832) (2.352,1.811) (2.356,1.791) (2.361,1.772) (2.367,1.753) (2.373,1.735) (2.380,1.717) (2.387,1.699) (2.395,1.682) (2.403,1.665) (2.413,1.648) (2.423,1.631) (2.433,1.615) (2.445,1.599) (2.457,1.583) (2.470,1.567) (2.484,1.552) (2.499,1.536) (2.515,1.521) (2.533,1.506) (2.551,1.491) (2.571,1.477) (2.592,1.463) (2.615,1.450) (2.640,1.437) (2.666,1.426) (2.694,1.415) (2.724,1.407) (2.755,1.401) (2.787,1.398) (2.819,1.399) (2.850,1.406) (2.879,1.420) (2.901,1.442) (2.916,1.473) (2.922,1.510) (2.920,1.549) (2.913,1.588) (2.904,1.623) (2.894,1.655) (2.884,1.682) (2.874,1.707) (2.865,1.728) (2.857,1.747) (2.849,1.764) (2.842,1.780) (2.835,1.794) (2.828,1.808) (2.821,1.820) (2.815,1.832) (2.809,1.843) (2.803,1.854) (2.798,1.865) (2.792,1.875) (2.786,1.886) (2.780,1.896) (2.774,1.906) (2.768,1.917) (2.762,1.927) (2.755,1.938) (2.748,1.950) (2.740,1.962) (2.732,1.975) (2.723,1.988) (2.713,2.003) (2.702,2.019) (2.690,2.036) (2.676,2.055) (2.659,2.076)};
        \draw[blue!60!black] plot[smooth cycle, tension=0.3] coordinates {(2.640,2.189) (2.610,2.227) (2.574,2.268) (2.532,2.308) (2.485,2.340) (2.438,2.356) (2.395,2.351) (2.358,2.331) (2.328,2.301) (2.303,2.266) (2.282,2.228) (2.265,2.190) (2.251,2.151) (2.240,2.114) (2.231,2.077) (2.224,2.042) (2.219,2.007) (2.215,1.973) (2.212,1.940) (2.211,1.908) (2.212,1.877) (2.213,1.846) (2.216,1.816) (2.220,1.787) (2.225,1.758) (2.231,1.729) (2.238,1.701) (2.247,1.674) (2.257,1.647) (2.268,1.620) (2.281,1.594) (2.295,1.568) (2.310,1.543) (2.326,1.519) (2.344,1.495) (2.363,1.472) (2.384,1.449) (2.406,1.427) (2.430,1.406) (2.455,1.386) (2.482,1.366) (2.510,1.347) (2.540,1.329) (2.571,1.313) (2.605,1.297) (2.640,1.282) (2.677,1.268) (2.717,1.257) (2.758,1.247) (2.801,1.240) (2.846,1.237) (2.891,1.238) (2.935,1.246) (2.976,1.264) (3.009,1.294) (3.029,1.338) (3.033,1.394) (3.025,1.455) (3.009,1.513) (2.991,1.565) (2.973,1.609) (2.956,1.647) (2.941,1.680) (2.928,1.708) (2.916,1.733) (2.905,1.755) (2.894,1.775) (2.885,1.793) (2.876,1.810) (2.868,1.825) (2.860,1.840) (2.852,1.854) (2.844,1.868) (2.837,1.881) (2.830,1.894) (2.822,1.907) (2.815,1.919) (2.808,1.932) (2.800,1.945) (2.792,1.959) (2.784,1.973) (2.775,1.987) (2.766,2.002) (2.756,2.019) (2.745,2.036) (2.732,2.055) (2.719,2.076) (2.703,2.100) (2.686,2.126) (2.665,2.156)};
        \draw[red!75!black, very thick] \bary{.5}{.2}{.3} -- \bary{.3}{.2}{.5};
        \fill \bary{.52}{.08}{.40} circle (1.2pt);
        \node[font=\small, transform shape=false] at (2.504,1.942) {$q$};
        \fill[red!75!black] \bary{0.452}{.2}{0.348} circle (1.6pt);
        \node[red!75!black, left, yshift=-3pt, font=\small, transform shape=false] at \bary{0.452}{.2}{0.348} {$p^*$};
        \node[red!75!black, right, yshift=-3pt, font=\small, transform shape=false] at \bary{.34}{.2}{.46} {$\cU^*$};
      \end{scope}
      \begin{scope}[scale=0.76]
        \node[below=3mm, font=\footnotesize, transform shape=false] at \bary{0}{.5}{.5} {RIPr};
      \end{scope}
    \end{scope}
  \end{tikzpicture}
}
  \setlength{\abovecaptionskip}{3pt}
  \caption{Fair-bet maximisers over the shaded polytope $\cU$. Left: for generic weights, the scale bet is maximised at the single vertex $\cU^*$. Right: the RIPr $p^*$ for $q$ is where the level sets $\KL(q\Vert\,\cdot\,)$ first touch $\cU$ (solid); $q/p^*$ is maximised on the whole red edge.}
  \label{fig:bet-geometry}
\end{wrapfigure}

The expectation of the RIPr bet $q/p^*$ is maximised in $p^*$ where the level sets of $\KL(q\Vert\,\cdot\,)$ around a distribution $q$ first touch $\cU$, and therefore also on the whole face containing $p^*$ (as since $p\mapsto\E_p[e]$ is linear, the distributions in $\cU$ maximising the expectation of a bet form a face of $\cU$), see
\cref{fig:bet-geometry}, right.
Such a flexibility in choosing the resulting bet is important.
For instance, in the case of \(\qstar\in\cU\), the constant bet $e(x)=1$ (for all $x$) is maximised on all of $\cU$, and it is the only good bet in that case; every other fair bet $e'$ strictly loses evidence (\Cref{sec:power}), since by Jensen's inequality and fairness, $\E_{\qstar}[\log e']<\log\E_{\qstar}[e']\leq0$.
The RIPr construction produces $e(x)=1$ from every $q\in\cU$ (since the RIPr onto $\cU$ acts as an identity inside $\cU$), whereas the scale construction produces it only from a constant weight vector $w_x = c$ (for all $x$).

\section{Mixture of Linear Monitors}
\label{sec:algorithm}
We now present a concrete monitor that leverages the bet constructions of \cref{sec:constructions}. This monitor is constructed via a mixture of betting rules, where each bet uses weights $w_x(\feat_t,\theta)$ generated via a linear function. We also establish an important regret result: the evidence gap (or difference in accumulated wealth) between the best fixed-parameter monitor and our mixture monitor is logarithmic in the number of time steps. This also implies only a logarithmic detection delay.

\paragraph{Linear monitors} Let
$\feat\in\R^{|\cX|\times d}$ be a feature matrix whose $x$-th
row $\feat_{x}\in\R^d$ contains features of the transition to the outcome
$x$.
For a parameter $\theta\in\R^d$ we define log-linear weights
\begin{equation}
  \label{eq:linear-weights}
  w_x(\feat,\theta)
  =\exp(\feat_{x}\theta),
  \qquad x\in\cX.
\end{equation}
The parameter vector $\theta$ is shared by all outcomes $x\in\cX$, so a low-dimensional
$\theta$ can describe a rich family of bets, and evidence is pooled across
all transitions with similar features.

\begin{example}
In the windy line world of \cref{ex:line-world}, let $p(x)$ be the
probability of the successor $x$ predicted by the model. In our experiments,
the monitor \texttt{ML} uses the prediction-based features
$\feat_{x}=(\log p(x),\,p(x),\,p(x)^2)$, and the monitor \texttt{ML-C}
uses $\feat_{x}=(\log p(x),\,c(x))$, where $c(x)$ is the coordinate of the
successor $x$.
\end{example}

The monitor turns these weights into a bet with either construction of
\cref{sec:constructions}, i.e.,
$e^\theta(\,\cdot\,;h_t)=\cC(w(\feat_t,\theta),\model(h_t))$ for
$\cC = \cC^{\mathrm{Sc}}$ or $\cC = \cC^{\mathrm{RIPr}}$.
For either bet, this parameterised linear monitor accumulates the wealth according to the wealth process
\(
  W^\theta_t=\prod_{s=1}^{t}e^\theta(X_s;\HH_{s-1}).
\)

\paragraph{Mixing over parameters}
The parameter $\theta$ that extracts the most evidence is unknown in advance and
one could replace it by a sequence $(\theta_t)_{t\geq1}$, where $\theta_{t+1}$ is fitted
to $h_t$ by online learning. Instead, we use the method of mixtures \cite{Wald1945-gv,Robbins1970-qa,choe2021comparing}, which
simultaneously tracks a weighted combination of several wealth processes of the form $(W^\theta_t)_{t\geq0}$.
For any parameter distribution $\pi$ on $\R^d$, we define the wealth of the \emph{mixture of linear monitors} as
\begin{equation}
\vspace{-.6em}
\label{eq:mixture-wealth}
  W^{\LM}_t=\int_{\R^d} \pi(\theta)W^\theta_t\ \ \mathrm{d}\theta.
  \vspace{-.1em}
\end{equation}
In particular, for finite support distribution we obtain a
convex combination $\sum_{k=1}^K\pi_k\,W^{\theta_k}_t$ of $K$ wealth processes.
In Appendix~\ref{app:mixtures} we show that the process $W^\LM_t$ is a wealth process of  a single bettor. Consequently, we show that the method of mixtures can be seen as learning the best parameter through a distributional
rather than a point estimate.

\paragraph{Our algorithm}
\Cref{alg:mixture-monitor} implements the mixture monitor for interval
robust models. For a continuous distribution
$\pi$, the integral \eqref{eq:mixture-wealth} cannot be computed exactly.
The algorithm therefore draws $K$ particles $\theta_1,\ldots,\theta_K\sim\pi$
once, before monitoring starts, and sets their log-wealth to zero
(line~\ref{alg:line:init}). It then tracks the finite mixture with uniform
weights $\widehat W^\LM_t=\frac1K\sum_{k=1}^K W^{\theta_k}_t$.

Each step costs $\mathcal O(Kn(d+\log n))$ time for
$n=|\supp(\model(h_t))|$, with either construction.
In \cref{sec:regret}, we analyse the exact mixture
\eqref{eq:mixture-wealth} for a Gaussian distribution $\pi$; in our
experiments, we use its particle approximation as described above, with
$K=300$ particles drawn from the standard Gaussian distribution
(\cref{sec:experiments}).

\begin{algorithm}[t]
\caption{Mixture of linear monitors}
\label{alg:mixture-monitor}
\begin{algorithmic}[1]
\Require Interval robust model $\model$, feature map $\feat$, distribution $\pi$ on $\R^d$,
         number of particles $K$, level $\alpha$, construction (scale or RIPr)
\State Sample $\theta_1,\ldots,\theta_K\sim\pi$; \quad $\log W^k\gets0$ for $k=1,\ldots,K$ \label{alg:line:init}
\For{$t=0,1,2,\ldots$}
  \State Observe $h_t$; obtain bounds $a,b$ of $\model(h_t)$; compute $\feat_t=\feat(h_t)$ \label{alg:line:observe}
  \For{$k=1,\ldots,K$}
    \State $w^k\gets w(\feat_t,\theta_k)$ by
           \eqref{eq:linear-weights} \label{alg:line:weights}
    \State $e^k\gets \cC(w^k, \model(h_t))$ with $\cC \in \{\cC^\mathrm{Sc}, \cC^\mathrm{RIPr}\}$
      \Comment{\cref{sec:computation}} \label{alg:line:bet}
  \EndFor
  \State Observe $X_{t+1}$ \label{alg:line:outcome}
  \State \textbf{if} $X_{t+1}\notin \supp(\model(h_t))$ \textbf{then} raise an alarm  \label{alg:line:impossible}
      \Comment{outcome impossible under the model}
  \State $\log W^k\gets\log W^k+\log e^k(X_{t+1})$
         for $k=1,\ldots,K$ \label{alg:line:update}
  \State \textbf{if} $\log\widehat W^\LM_{t+1}=\log\frac1K\sum_{k=1}^K\exp(\log W^k)\geq\log(1/\alpha)$ \textbf{then} raise an alarm \label{alg:line:alarm}
\EndFor
\end{algorithmic}
\end{algorithm}

\subsection{Performance Analysis}
\label{sec:regret}
As discussed in \cref{sec:power}, we measure evidence by the log-wealth
$\log W^\theta_t=\sum_{s=1}^t\log e^\theta(X_s;\HH_{s-1})$. If a monitor gains on
average $g>0$ log-wealth per step, i.e., $\log W_t\approx gt$, it crosses
the threshold $\log(1/\alpha)$ after about $\log(1/\alpha)/g$ steps.
We therefore compare the mixture with the monitor for a fixed parameter
$\theta^*$ by the \emph{regret}
$\log W^{\theta^*}_T-\log W^{\LM}_T$, the log-wealth the mixture loses by
not knowing $\theta^*$ in advance. A regret of $R$ means that the mixture
has to accumulate $R$ more log-wealth, which delays its alarm by about
$R/g$ steps. For the Gaussian distribution $\pi$, we show that
the regret is logarithmic in $T$ on every trajectory. The proof follows
Theorem~2.2 of \cite{kakade2004online}, but replaces their curvature
assumption by Lipschitz continuity of the log-payoff in $\theta$.

\begin{restatable}{theorem}{thmMixtureRegret}
\label{thm:mixture-regret}
Assume $\lVert\feat_{t,x}\rVert_\infty\leq B$ for all $t$ and $x$, and let
$(W^\theta_t)_{t\geq0}$ be given by either
\begin{enumerate}
  \item the scale construction $\cC^{\mathrm{Sc}}$, where the robust model is
  any polytopic robust model; or
  \item the RIPr construction $\cC^{\mathrm{RIPr}}$, where the robust model is an
  interval model.
\end{enumerate}
Let $\pi$ be the Gaussian $\mathcal N(0,\sigma^2I_d)$ and let
$W^{\LM}_t$ be the corresponding mixture \eqref{eq:mixture-wealth}.
Then, for every
$\theta^*\in\R^d$ and every $T\geq1$, pathwise,
\begin{equation}
\label{eq:mixture-regret}
  \log W^{\LM}_T
  \geq \log W^{\theta^*}_T
  -\frac{\lVert\theta^*\rVert_2^2}{2\sigma^2}
  -d\log(1+2BT\sigma)-\frac d2.
\end{equation}
\end{restatable}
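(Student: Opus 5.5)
The plan follows the Laplace-type argument of Theorem~2.2 of \cite{kakade2004online}. Curvature is replaced by a pathwise Lipschitz bound on $\theta\mapsto\log W^\theta_T$, and the mixture integral is restricted to a small box around $\theta^*$.

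\emph{Step 1 (Lipschitz bound).} We show that for every history and every outcome $x\in\supp(\model(h_t))$, the map $\theta\mapsto\log e^\theta(x;h_t)$ is $2B$-Lipschitz with respect to $\lVert\cdot\rVert_1$ on $\R^d$. Summing over $s\le T$ then gives
\[
\lvert\log W^\theta_T-\log W^{\theta^*}_T\rvert\le 2BT\lVert\theta-\theta^*\rVert_1 .
\]
For a single coordinate change, $\log w_y(\feat,\theta)=\feat_y\theta$ moves by at most $B\lVert\theta-\theta'\rVert_1$ for every $y$. So the weight vectors satisfy $w'\in[e^{-\epsilon}w,e^{\epsilon}w]$ componentwise, with $\epsilon=B\lVert\theta-\theta'\rVert_1$. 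It therefore suffices to show that each construction is ``log-Lipschitz'': if $w'\in[e^{-\epsilon}w,e^{\epsilon}w]$, then $\cC(w',\cU)\in[e^{-2\epsilon}\cC(w,\cU),e^{2\epsilon}\cC(w,\cU)]$ on $\supp(\cU)$.

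For the scale construction this is immediate on any polytope. The numerator changes by a factor in $[e^{-\epsilon},e^{\epsilon}]$, and so does the denominator $\max_{p\in\cU}\E_p[w]$, by monotonicity of the max and its positive homogeneity.

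For the RIPr construction on interval models we use \cref{lem:interval-ripr}. The bet equals $q(x)/p^*(x)$ with $p^*(x)=\min\{\max\{q(x)/\mu,a_x\},b_x\}$. First, normalisation of $q$ costs a factor $e^{\pm\epsilon}$ on top of the $e^{\pm\epsilon}$ change in $w_x$, so $q'(x)\in[e^{-2\epsilon}q(x),e^{2\epsilon}q(x)]$; this is too crude, so instead we work directly with the unnormalised weights. Write the bet as $w_x/\bigl(\min\{\max\{w_x/\lambda,a_x\},b_x\}\bigr)$, where $\lambda$ is chosen so that the clipped vector sums to one; this form is invariant under normalisation of $w$. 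Each clipped coordinate is nondecreasing in $w_x$ and nonincreasing in $\lambda$, so scaling $w$ by $e^{\pm\epsilon}$ moves the normalising $\lambda$ by at most a factor $e^{\pm\epsilon}$. We then case on whether $x$ is clipped:
\begin{itemize}
\item if $x$ is unclipped, the bet equals $\lambda$ and moves by at most $e^{\pm\epsilon}$;
\item if $x$ is clipped, the bet equals $w_x/a_x$ or $w_x/b_x$ and moves by at most $e^{\pm\epsilon}$;
\item at transitions between these regimes the bet is $\min/\max$ of such expressions, which gives at most $e^{\pm2\epsilon}$.
\end{itemize}
A careful version yields $2\epsilon$, matching the constant $2B$ in the statement.

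\emph{Step 2 (Laplace lower bound).} We lower-bound the mixture by restricting the integral to the box $Q=\{\theta:\lVert\theta-\theta^*\rVert_\infty\le r\}$:
\[
W^\LM_T\ge\int_Q\pi(\theta)W^\theta_T\,d\theta\ge W^{\theta^*}_T e^{-2BTdr}\,\pi(Q).
\]
Next, we lower-bound $\log\pi(Q)$. By convexity of $\lVert\theta\rVert^2$, Jensen's inequality over the uniform distribution on $Q$ gives
\[
\log\pi(Q)\ge\log\mathrm{vol}(Q)-\frac d2\log(2\pi\sigma^2)-\frac{1}{2\sigma^2}\Bigl(\lVert\theta^*\rVert_2^2+\frac{dr^2}{3}\Bigr),
\]
using that the mean of $\lVert\theta\rVert^2$ over $Q$ is $\lVert\theta^*\rVert^2+dr^2/3$, and $\mathrm{vol}(Q)=(2r)^d$.

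Finally, we choose $r=\sigma/(1+2BT\sigma)$ or a similar value, so that $2BTdr\le d$ and the term $d\log(\sqrt{2\pi}\sigma/(2r))$ is bounded by $d\log(1+2BT\sigma)$ plus constants. The constants are collected into the $-d/2$ term; if they do not fit exactly, we tune $r$, using $2/\sqrt{2\pi}<1$.

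The main obstacle is Step 1 for RIPr. The difficulty is establishing monotonicity and log-Lipschitz dependence of the normalising constant $\mu$ (equivalently $\lambda$) on $w$, together with the regime changes between clipped and unclipped coordinates. I would handle this with the monotone characterisation of $\mu$ as the unique root of $\sum_x\min\{\max\{q(x)/\mu,a_x\},b_x\}=1$.
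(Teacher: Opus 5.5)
\textbf{Gap in Step~2: the box argument cannot reach the constant $\frac d2$.} Your Step~1 follows essentially the paper's route. The paper also proves an $\ell_1$-Lipschitz constant $2B$ for each log-payoff, using monotonicity and positive homogeneity of $\max_{p\in\cU}\E_p[w]$ for the scale bet and of the normaliser $\mu$ for RIPr, and then sums over $t$. Step~2, however, has a genuine gap. Restricting the integral to a box and paying the worst-case Lipschitz loss on it cannot deliver the constant $\frac d2$ in \eqref{eq:mixture-regret}, and no tuning of $r$ fixes this. Write $y=2BT\sigma$ and $c=r/\sigma$. Beyond $\lVert\theta^*\rVert_2^2/(2\sigma^2)$, your bound loses
\[
d\Bigl(yc-\log c+\log\sqrt{\pi/2}+\tfrac{c^2}{6}\Bigr)\;\geq\;d\bigl(1+\log y+\log\sqrt{\pi/2}\bigr),
\]
because $yc-\log(yc)\geq1$ for every $c>0$. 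For $y\geq1$ the right-hand side already exceeds $d\log(1+y)+\frac d2$, so your estimate fails for all large $T$. With your choice $r=\sigma/(1+y)$, the box loss $2BTdr=dy/(1+y)$ alone exceeds $\frac d2$ once $y>1$. Your appeal to $2/\sqrt{2\pi}<1$ also points the wrong way: the Gaussian-density term contributes $+d\log(\sqrt{2\pi}/2)>0$, a cost rather than a saving. What you actually prove is a logarithmic regret bound of roughly $d\log(2BT\sigma)+1.23\,d$ at best. That suffices for \cref{cor:mixture-detection-time}, but not for the theorem as stated.

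The missing ingredient is the paper's variational argument. By Donsker--Varadhan, $\log\int g\,d\pi\geq\E_\rho[\log g]-\KL(\rho\Vert\pi)$ for the Gaussian $\rho=\mathcal N(\theta^*,\varepsilon^2I_d)$. The Lipschitz loss is then paid only on average, $TL\,\E_\rho\lVert\theta-\theta^*\rVert_2\leq TL\sqrt d\,\varepsilon$ with $L=2B\sqrt d$. The closed form $\KL(\rho\Vert\pi)=\frac{\lVert\theta^*\rVert_2^2}{2\sigma^2}+\frac d2\bigl(\frac{\varepsilon^2}{\sigma^2}-1-\log\frac{\varepsilon^2}{\sigma^2}\bigr)$ carries a $-\frac d2$ that offsets most of the $\approx d$ loss. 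With $\varepsilon=\sigma/(1+y)$ the total overhead is exactly $d\log(1+y)+\frac{dy^2}{2(1+y)^2}\leq d\log(1+y)+\frac d2$.

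\textbf{A slip in Step~1 (RIPr) that lands on the right constant by accident.} The expression $w_x/\clip(w_x/\lambda,a_x,b_x)$ is not the RIPr bet, and it is not invariant under rescaling $w$; it scales linearly with $w$. The actual bet is $q(x)/p^*(x)=(w_x/Z)/\clip(w_x/\lambda,a_x,b_x)$, with $Z=\sum_{y\in\supp(\cU)}w_y$ and $\lambda=Z\mu$. The clean argument, which is the paper's, writes $\log\frac{q(x)}{p^*(x)}=\clip\bigl(\log\mu,\ \log q(x)-\log b_x,\ \log q(x)-\log a_x\bigr)$. Clipping is monotone in all three arguments and commutes with common translations. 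Both $\log q(\cdot)$ and $\log\mu$ move by at most $2\epsilon$; for $\mu$, take the smallest root as the paper does, since the root need not be unique. So the factor $2$ comes from normalising $q$, not from regime changes. A $\min$ or $\max$ of $\epsilon$-perturbed quantities is itself only $\epsilon$-perturbed.
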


Since the regret is sublinear in $T$, the mixture monitor assymptotically performs as well as the best fixed-parameter monitor.
Concretely, if the best fixed-parameter monitor accumulates log-wealth at a positive rate $g>0$,
the delay becomes negligible compared to the time $T_\alpha = \log(1/\alpha)/g$
needed to raise the alarm for small $\alpha$.
The following corollary makes this precise.

\begin{restatable}{corollary}{corMixtureDetectionTime}
\label{cor:mixture-detection-time}
Under the assumptions of \cref{thm:mixture-regret},
let $g>0$, let $\theta^*\in\R^d$, and for $\alpha\in(0,1)$, let $\tau_\alpha^{\LM}
  =\inf\{t\geq1:\log W_t^{\LM}\geq\log(1/\alpha)\}$ and $
  T_\alpha = \log(1/\alpha)/g.$

\noindent
Then on every trajectory with
 $ \liminf_{t\to\infty}\frac1t\log W_t^{\theta^*}\geq g$ we have
 $ \limsup_{\alpha\downarrow0}
  \tau_\alpha^{\LM} / T_\alpha
  \leq 1.$
\end{restatable}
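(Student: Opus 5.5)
We derive the corollary directly from the pathwise regret bound of \cref{thm:mixture-regret}; no further probabilistic argument is needed, since both the hypothesis and the conclusion concern a single fixed trajectory. Fix such a trajectory and write $R(t)=\frac{\lVert\theta^*\rVert_2^2}{2\sigma^2}+d\log(1+2Bt\sigma)+\frac d2$. By \eqref{eq:mixture-regret}, for every $t\geq1$,
\[
  \log W^{\LM}_t \geq \log W^{\theta^*}_t - R(t).
\]
Fix $\varepsilon\in(0,g)$. The hypothesis $\liminf_{t\to\infty}\frac1t\log W^{\theta^*}_t\geq g$ gives a trajectory-dependent $t_0$ such that $\log W^{\theta^*}_t\geq (g-\varepsilon/2)t$ for all $t\geq t_0$. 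Since $R(t)=O(\log t)$, we can enlarge $t_0$ so that also $R(t)\leq \frac{\varepsilon}{2}t$ for all $t\geq t_0$. Combining the two bounds,
\[
  \log W^{\LM}_t\geq (g-\varepsilon)t \qquad\text{for all } t\geq t_0 .
\]

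Next we bound the stopping time. Set $t_\alpha=\max\{t_0,\lceil \log(1/\alpha)/(g-\varepsilon)\rceil\}$. At $t=t_\alpha$ we have $\log W^{\LM}_{t_\alpha}\geq (g-\varepsilon)t_\alpha\geq\log(1/\alpha)$, so the infimum defining $\tau^{\LM}_\alpha$ is attained no later than $t_\alpha$. Hence $\tau^{\LM}_\alpha\leq t_\alpha$.

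It remains to divide by $T_\alpha=\log(1/\alpha)/g$ and let $\alpha\downarrow0$. Then $T_\alpha\to\infty$. The term $t_0/T_\alpha$ tends to $0$, because $t_0$ does not depend on $\alpha$. The term $\lceil\log(1/\alpha)/(g-\varepsilon)\rceil/T_\alpha$ tends to $g/(g-\varepsilon)$, since the ceiling adds at most $1$. Therefore
\[
  \limsup_{\alpha\downarrow0}\tau^{\LM}_\alpha/T_\alpha\leq \frac{g}{g-\varepsilon},
\]
and letting $\varepsilon\downarrow0$ gives the claim.

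The only delicate point is the order of quantifiers. The threshold $t_0$ depends on the trajectory and on $\varepsilon$, but it must not depend on $\alpha$; this is what makes $t_0/T_\alpha$ vanish. This independence holds because the regret bound of \cref{thm:mixture-regret} is pathwise and uniform in $T$, and the liminf hypothesis does not involve $\alpha$.
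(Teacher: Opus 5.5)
Your proof is correct and follows essentially the same argument as the paper: you combine the pathwise regret bound with $R_t=o(t)$ and the liminf hypothesis to get a linear lower bound beyond a trajectory-dependent $t_0$, then bound the hitting time by $\max\{t_0,\lceil\log(1/\alpha)/(g-\varepsilon)\rceil\}$. The only cosmetic difference is that the paper routes this through an auxiliary stopping time $\tau^+_\alpha$ defined via $\log W^{\theta^*}_t - R_t$, whereas you lower-bound $\log W^{\LM}_t$ directly.
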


\section{Experiments}
\label{sec:experiments}

We evaluate the proposed monitoring framework with a prototype implementation.\footnote{Link hidden for the review process.} We focus on the following questions:
\begin{description}
    \item[\textbf{Q1}] Are the proposed monitors $\alpha$-valid?
    \item[\textbf{Q2}] Does empirical performance support the theoretical advantage of RIPr over Scale?
    \item[\textbf{Q3}] Does the mixture of linear monitors outperform the existing monitors, namely, the transition-wise monitors of~\cite{sethi2026markovian, Fields2025-gu}, and the scoring-function monitors of~\cite{henzinger2025alignment}?
    \item[\textbf{Q4}] How effectively do mixtures of linear monitors exploit observation features?
 \end{description}

\paragraph{Benchmarks}
We use two benchmark suites.
The first contains 18 Markov chains from seven PRISM benchmark families~\cite{Kwiatkowsa2012ThePB}, also used in~\cite{henzinger2025alignment}.
We sample trajectories from each original chain and monitor them against perturbed models generated using nine transition-probability perturbations, similar to~\cite{henzinger2025alignment}.
The second suite contains five variants of 1D and 2D slippery, windy gridworlds: trajectories come from environments with wind, while the models omit it.
Further benchmark details are in Appendix~\ref{sec:appendix-benchmarks}.

We evaluate monitors for both simple and robust models.
A robust model is formed from a perturbed distribution~$p$ using either the multiplicative band $[(1-\varepsilon)p,(1+\varepsilon)p]$ or the additive band $[p-\varepsilon,p+\varepsilon]$, clipped to $[0,1]$, with $\varepsilon\in\{0.05,0.1\}$.
In total, we consider 925 \emph{cases}, each combining a benchmark, a perturbation, and a model variant.

\paragraph{Considered monitors}
(details are described in Appendix~\ref{sec:appendix-monitors}).

\begin{description}
  \item[\texttt{ML}] is the mixture of linear monitors of Section~\ref{sec:algorithm}, instantiated with the features $\feat(h_t)_x = \left(\log(p(x)), p(x), p(x)^2\right)$, where $p = \left(a(h_t) + b(h_t)\right)/2$ for the bounds $a, b$ of the interval model.
  Both constructions are considered, yielding monitors \texttt{ML-Scale} and \texttt{ML-RIPr}.
  The Gaussian distribution $\pi$ is approximated by $300$ parameter vectors drawn from the standard Gaussian distribution.
  We also consider a gridworld-only variant, which uses the features $\feat(h_t)_x = \left(\log(p(x)), c(x)\right)$, where $c(x)$ are the coordinates of the outcome $x$; we consider only its RIPr variant, named \texttt{ML-RIPr-C}.
  \item[\texttt{TW}] is a transition-wise monitor akin to~\cite{Fields2025-gu,sethi2026markovian} that empirically estimates the true transition distribution in each state and computes the RIPr ratio from it.
  We use a robustified version of~\cite{Fields2025-gu}, which  outperforms~\cite{sethi2026markovian}, as shown in Appendix~\ref{sec:appendix-results}.
  \item[\texttt{SR}] is a monitor that uses the scoring rule-based supermartingale of~\cite{henzinger2025alignment}, instantiated with the Brier score and the spherical score.
  We report the spherical score, which performs comparably to or better than the Brier score, as we show in the Appendix~\ref{sec:appendix-results}.
  Since these supermartingales are valid only for simple models, we robustify them using the Scale and the RIPr constructions, yielding \texttt{SR-Scale} and \texttt{SR-RIPr}.
  For simple models, the Scale construction never produces worse bets than those of~\cite{henzinger2025alignment}.
  \item[\texttt{Oracle}] is an idealised reference monitor that has access to the true transition distribution and places asymptotically optimal bets based on it, approximating the theoretical lower bound on the detection time.

\end{description}

\paragraph{Performance metrics}
We measure efficiency by \emph{detection time}, the number of steps until an alarm, and assess validity by checking the empirical \emph{false-alarm rate}.

\paragraph{General setting and hardware}
Following~\cite{henzinger2025alignment}, we sample $300$ trajectories of length $10{,}000$ and run all monitors on the same trajectories.
All monitors use the level $\alpha=0.05$, i.e., they raise an alarm when the log-wealth reaches $\log(1/\alpha)$.
The monitors are implemented in JAX and were run on an NVIDIA GeForce RTX 2060 SUPER GPU.

\paragraph{Monitor latency}
We also ran the monitors on trajectories of $1{,}000$ steps on a single core of an AMD Ryzen 9 3900X CPU.
As predicted in Section~\ref{sec:algorithm}, the per-step computation depends on the number of successors of the current state and importantly does not grow with $\lvert\cX\rvert$ and $\lvert\cA\rvert$.
The slowest environment, \textsf{leader-3-5}, contains only $273$ states, a fraction of our largest benchmark (\textsf{nand-60-4}, 18 M states), yet has up to $125$ successors per state.
For this environment, the average per-step times range from $0.02$\,ms for \texttt{TW} and \texttt{Oracle} to $6.5$\,ms for \texttt{ML-RIPr}, i.e., negligible for monitoring applications.

\subsection*{Q1: Are our monitors $\alpha$-valid?}

\paragraph{Setting and results}
We measure the false-alarm rate of the \texttt{\{SR,ML\}-\{RIPr,Scale\}} and \texttt{TW} monitors on the 90 \texttt{identity} cases, where the model is unperturbed, and the 58 \texttt{model-aligned} cases, in which the perturbed environment is contained in the robust model.
In every case, the empirical false-alarm rate over 300 traces stays below $\alpha$, confirming the theoretical validity of the proposed constructions.
Appendix~\ref{sec:appendix-results} reports the monitor-specific rates, which show no significant differences among the variants.

\subsection*{Q2: The empirical benefits of the RIPr construction.}

\paragraph{Setting}
After removing the \texttt{identity} cases, we retain 835 cases.
Of the 300 traces per case, we pool the 210{,}221 traces in which the \texttt{Oracle} raised an alarm.
We compare each \texttt{RIPr} monitor with its \texttt{Scale} counterpart and with the \texttt{Oracle} on the same trace.
\texttt{TW} and \texttt{ML-RIPr-C} have no \texttt{Scale} counterpart and are not covered.

\paragraph{Results}
\Cref{fig:rq2-ripr-vs-scale} shows that the RIPr construction generally reduces detection times.\footnote{Detection times faster than the \texttt{Oracle} occur occasionally, which is expected: the \texttt{Oracle} is itself a betting monitor and can perform poorly by chance.}
\texttt{SR-RIPr} beats its \texttt{Scale} counterpart on 77.2\% of traces and loses on only 5.0\%, although \texttt{SR-Scale} detects more traces overall.
For \texttt{ML}, the two constructions are barely distinguishable: they often alarm at the same step, and otherwise the gap is small.
This matches the discussion in \cref{sec:bet-geometry}: the rigid parametrisation of the \texttt{SR} monitors cannot place bets whose expectation is maximised on nontrivial faces of the polytopes $\model(h_t)$, in particular the constant bet $e\equiv1$, the only bet that does not lose evidence if $\qstar(\,\cdot\, \mid h_t) \in \model(h_t)$.
The RIPr construction partially mitigates this by transforming a larger set of weights into such bets, whereas the log-linear parametrization of the linear monitors is already flexible enough to approximate them, which also appears to compensate for the inadmissibility of the Scale construction.

\subsection*{Q3: The empirical benefits of the monitors using linear models.}
\paragraph{Setting}
Based on the previous results, we drop the \texttt{Scale} construction and compare \texttt{ML-RIPr} with \texttt{SR-RIPr} and \texttt{TW} on the same pool of 210{,}221 traces as in \textbf{Q2}.

\paragraph{Results}
\Cref{fig:rq3-lm-vs-existing} shows that \texttt{ML-RIPr} detects faster than both competitors: it alarms earlier on 35.3\% of traces against \texttt{SR-RIPr} (later on 11.6\%) and on 20.4\% against \texttt{TW} (later on 12.0\%), with the rest mostly ties.
Moreover, it leaves the fewest traces undetected: 5.0\%, against 15.8\% for \texttt{SR-RIPr} and 21.5\% for \texttt{TW}.
The \texttt{ML} monitors thus retain a clear advantage over the \texttt{SR} monitors even when the latter use the RIPr construction.
\texttt{TW} is competitive only on small models, models whose states are heavily revisited, or when the true transition distributions are close to uniform, which is its base estimate.

\begin{figure}[t]
  \centering
  \begin{subfigure}[t]{0.32\linewidth}
    \centering
    \includegraphics[width=\linewidth]{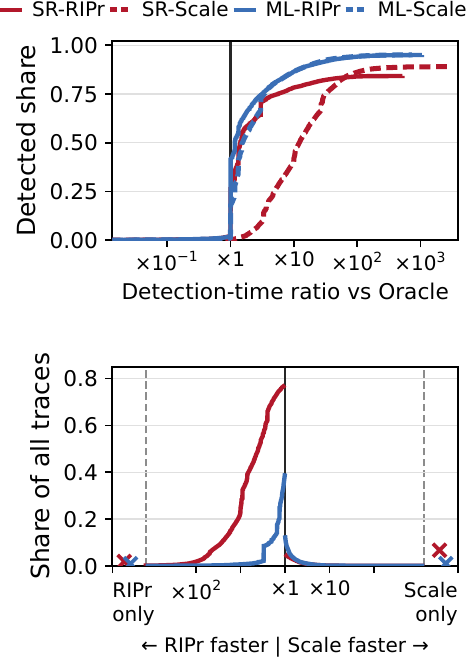}
    \caption{RIPr versus Scale (Q2)}
    \label{fig:rq2-ripr-vs-scale}
  \end{subfigure}\hfill
  \begin{subfigure}[t]{0.32\linewidth}
    \centering
    \includegraphics[width=\linewidth]{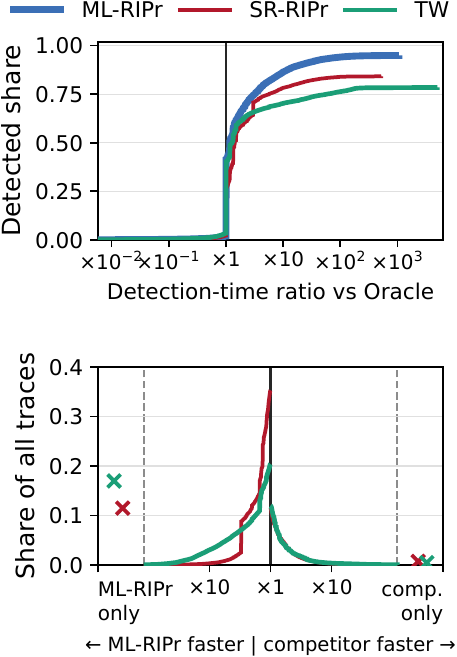}
    \caption{ML versus SR and TW (Q3)}
    \label{fig:rq3-lm-vs-existing}
  \end{subfigure}\hfill
  \begin{subfigure}[t]{0.32\linewidth}
    \centering
    \includegraphics[width=\linewidth]{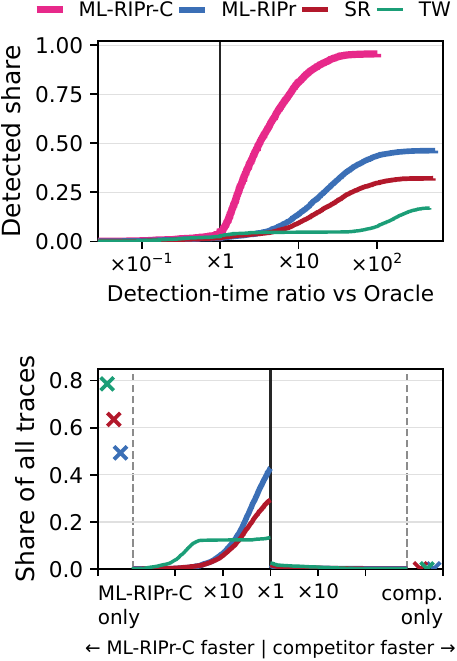}
    \caption{Feature monitoring (Q4)}
    \label{fig:lm-coord}
  \end{subfigure}
  \caption{
All three columns share the same format; only the monitors being compared differ.
  \textbf{Top}: detection time of each monitor relative to the \texttt{Oracle}, i.e., the cumulative share of traces with \(t_{\mathrm{monitor}}/t_{\mathrm{Oracle}}\leq x\), where \(x=1\) marks the \texttt{Oracle}'s time.
  Each curve ends at its monitor's largest finite ratio, at the level corresponding to the share of detected traces.
  \textbf{Bottom}: paired speed-up of the focal monitor against each competitor, on traces detected by both.
  For each threshold \(k\), the curve shows the cumulative share of traces detected more than \(k\)-fold faster: the focal monitor is faster on the left, the competitor on the right.
  Endpoint Xs show the shares of traces detected by only one of the two; traces detected by neither are excluded, and ties are not shown.
  In all panels, 0.1\% of extreme data points are cropped to improve visual clarity.
  The columns compare:
  \textbf{(\subref*{fig:rq2-ripr-vs-scale})} \texttt{\{SR,ML\}-\{RIPr,Scale\}}, pairing each RIPr monitor (focal) with its Scale counterpart;
  \textbf{(\subref*{fig:rq3-lm-vs-existing})} \texttt{ML-RIPr} (focal) against \texttt{SR-RIPr} and \texttt{TW};
  \textbf{(\subref*{fig:lm-coord})} on the gridworld benchmarks with coordinate features, \texttt{ML-RIPr-C} (focal) against \texttt{ML-RIPr}, \texttt{SR-RIPr}, and \texttt{TW}.
  }
  \label{fig:linear-monitor}
\end{figure}

\subsection*{Q4: Can the mixture of linear monitors exploit the available observation features?}

\paragraph{Setting}
We focus on the 25 \texttt{gridworld} cases, which provide coordinate features (the position of the agent in the 1D or 2D grid), and ask whether the coordinate-feature monitor \texttt{ML-RIPr-C} detects faster than the feature-agnostic \texttt{ML-RIPr}, \texttt{SR}, and \texttt{TW}.
Of the 300 traces per case, we pool the 7{,}220 traces in which the \texttt{Oracle} raised an alarm.

\paragraph{Results}
\Cref{fig:lm-coord} demonstrates the superiority of \texttt{ML-RIPr-C} over its competitors.
On traces detected by both monitors, it alarms earlier than \texttt{ML-RIPr}, \texttt{SR}, and \texttt{TW} in 92.4\%, 91.4\%, and 78.8\% of the comparisons, respectively, with a median paired speed-up of at least 2.96$\times$ against each.
Against \texttt{TW}, it is at least an order of magnitude faster on more than 70\% of the shared detections.
Most importantly, feature-awareness enables unprecedented detection coverage: \texttt{ML-RIPr-C} detects nearly twice as many traces as \texttt{ML-RIPr}, 3.0 times as many as \texttt{SR}, and 5.6 times as many as \texttt{TW}, while missing only 0, 4, and 2 traces detected by those monitors, respectively.
\texttt{TW} performs the worst: although it can, in principle, detect any model-environment misalignment, in practice it fails unless the agent repeatedly cycles through a small set of states.

\section{Related Work}
\paragraph{Runtime model monitoring.}  The closest works to ours are the transition-wise monitors of~\cite{Fields2025-gu,sethi2026markovian} and the scoring rule-based monitors of~\cite{henzinger2025alignment}, as these support stochastic systems and anytime-valid guarantees. We explain these methods in detail in Section~\ref{sec:betting-algorithms}. Our bets generalise these two approaches through feature maps that allow sharing evidence across transitions while retaining informative state- and action-dependent features. As our experiments show, this substantially improves detection times over both approaches. More generally, runtime monitoring with stochastic models is a well-established field (see e.g.~\cite{babaee2018predictive,junges2021runtime,cleaveland2023conservative,cairoli2023conformal}), but traditionally, its goal has been monitoring system properties, not checking alignment between a model and environment observations.

\paragraph{Martingales in system analysis.} Test martingales have also been used to detect distribution shifts~\cite{vovk2021retrain}, by accumulating evidence against the exchangeability hypothesis using conformal p-values \cite{vovk2005algorithmic}. However, exchangeability is generally inappropriate for our setting due to the sequential dependencies in the observation sequence.
In contrast, our monitors directly test the model's conditional transition predictions, without assuming exchangeability, stationarity, or Markovianity of the true process.
Supermartingales have been extensively used as certificates for stochastic verification and control, from early work on barrier-based probabilistic safety~\cite{prajna2007framework} to the verification of more general temporal properties~\cite{Chatterjee_Fu_Novotny_2020,cav22martingales,abate2024stochastic,abate2025quantitative,Kaushik26cav,abate26lics}\footnote{We note the contemporaneous work of~\cite{giacobbe2026runtime} (appeared in arXiv only a few days before ours) which develops betting-based monitors similar to ours but for a different problem (certificate monitoring).}. These approaches use supermartingales to establish properties of a stochastic system under assumed dynamics, while our focus is on assessing whether model dynamics remain consistent with runtime observations. Finally, test supermartingales generalise likelihood-ratio-based methods for sequential hypothesis testing~\cite{Wald1945-gv}, widely used in statistical model checking~\cite{younes2006statistical}.

\paragraph{Conformance testing/checking.} These techniques assess the agreement between model and system behaviours, typically through trace relations~\cite{abbas2014formal} or reachable sets~\cite{roehm2016reachset}. While these methods largely focus on non-stochastic or hybrid dynamics~\cite{roehm2019model}, Qin et al.~\cite{qin2023conformance} introduced a stochastic notion of conformance. They estimate this via probabilistic bounds on the distances between two systems' trajectories using offline, data-driven inference, whereas our approach monitors model consistency online, providing anytime-valid guarantees without distributional assumptions beyond model consistency.

\section{Conclusion}

We have described a scalable, universal monitoring framework that can be applied across various contexts such as Markov chains, (partially observable) MDPs, and even multi-agent systems or games. We show theoretical and practical advantages compared to existing monitors. Beyond tabular models, we see the potential to apply our methods to monitoring of learned world models, namely their discrete fragments (as seen, for example, in Dreamer~\cite{hafner2023mastering}) that are inherently equipped with interesting feature vectors.

\subsubsection*{Data availability.}
The experimental results and source code will be made publicly available in a documented repository upon acceptance of the paper.

\bibliographystyle{plainnat}
\bibliography{references}

\newpage
\appendix

\section{Examples}
\label{sec:examples}

\Cref{fig:line-world} restates and visualises the windy line world model from~\cref{ex:line-world}.

\begin{figure}[t]
  \centering
  \begin{tikzpicture}[
      cell/.style={draw, minimum width=7mm, minimum height=7mm, inner sep=0pt},
      >=stealth]
    \foreach \i/\lab in {0/1, 1/2, 2/\cdots, 3/x{-}1, 4/x, 5/x{+}1, 6/\cdots, 7/N}
      \node[cell] (c\i) at (0.7*\i, 0) {\scriptsize$\lab$};
    \draw[thick] (c4) circle (2.9mm);
    \draw[->, thick] (c4.north) to[bend left=50]
      node[above, font=\scriptsize] {$a{=}{+}1$} (c5.north);
    \draw[->, thick, dashed] (c4.south) to[bend left=50]
      node[below, font=\scriptsize] {slip} (c3.south);
    \draw[->, very thick, gray] (0.7, 0.95) -- (4.2, 0.95)
      node[midway, above, font=\scriptsize, text=gray] {wind};

    \begin{scope}[xshift=6.6cm, yshift=-1.0cm, y=40mm]
      \draw[->] (0, 0) -- (0, 0.55)
        node[above, font=\scriptsize] {$p(x'=x+a\mid x,a)$};
      \foreach \v/\y in {0.5/0, 0.75/0.25, 1.0/0.5}
        \draw (-0.05, \y) -- (0.05, \y) node[left=1mm, font=\scriptsize] {$\v$};
      \fill[blue!12] (0.2, 0.15) rectangle (2.4, 0.35);
      \node[font=\scriptsize, text=blue!70!black, anchor=west] at (2.45, 0.32)
        {robust band};
      \draw[blue!70!black, thick, dashed] (0.2, 0.25) -- (2.4, 0.25);
      \node[font=\scriptsize, text=blue!70!black, anchor=west] at (2.45, 0.23)
        {model env};
      \fill[red!75!black] (0.75, 0.45) circle (2pt);
      \fill[red!75!black] (1.85, 0.05) circle (2pt);
      \node[font=\scriptsize, text=red!75!black, anchor=west] at (2.45, 0.05)
        {real env};
      \draw (0.2, 0) -- (2.4, 0);
      \node[below, font=\scriptsize] at (0.75, 0) {$a{=}{+}1$};
      \node[below, font=\scriptsize] at (1.85, 0) {$a{=}{-}1$};
    \end{scope}
  \end{tikzpicture}
  \caption{\emph{Windy line world}. Left: the states are the cells
  $\cX=\{1,\dots,N\}$ and the actions are $\cA=\{-1,+1\}$ (left/right).
  From cell $x$, the agent reaches the intended cell $x+a$ or
  slips to $x-a$; a move leaving the line keeps the agent in place.
  Right: the probability $p(x+a\mid x,a)$ that the intended move succeeds. In
  the model environment (dashed), it is $0.75$ for both actions. The robust
  model $\model$ is an interval MDP that widens every transition
  probability $p$ to $[p-0.1,p+0.1]$ (shaded). In the real environment (dots), the wind shifts
  probability $0.2$ to the right neighbour, so the intended move succeeds with
  probability $0.95$ for $a=+1$ and $0.55$ for $a=-1$. Both probabilities lie outside the
  band, so the true process is not consistent with $\model$;
  in both cases the model underestimates the
  probability of moving right.}
  \label{fig:line-world}
\end{figure}

\section{Test Supermartingales and Betting Monitors}
\label{app:test-sm}
In this section, we introduce test supermartingales, which play an important
role in our monitors. They were first studied by Ville~\cite{ville1939etude}
in the context of sequential hypothesis testing although named only later by \citet{Shafer2011-xg}.
We present its robust version as formulated in \cite{Ramdas2025-gq}.

We begin by introducing necessary notation.
Every process $(X_t,A_t)_{t\geq0}$ consistent with $\model$ induces a distribution $\Prob$ on the set
$(\cX\times\cA)^{\N}$ of trajectories; we call such a distribution \emph{consistent} with $\model$ and write
$\cP_\model$, or simply $\cP$, for the set of consistent distributions.
Since $\cX$ and $\cA$ are finite, a history is consistent with $\model$ if and only if it has
positive probability under some $\Prob\in\cP_\model$, so ``on every history consistent with $\model$''
means ``$\Prob$-almost surely for every $\Prob\in\cP_\model$''.

The monitor observes $X_t$ before $A_t$ is chosen, so we use the \emph{natural filtration}
$\cF_t=\sigma(\HH_{t-1},X_t)$. A process is adapted in the sense of \cref{sec:setup} if and only if
it is adapted to $(\cF_t)_{t\geq0}$.
\begin{definition}
\label{def:test-supermartingale}
A \emph{test supermartingale for $\model$} is a non-negative adapted process
$(W_t)_{t\geq0}$ with $W_0=1$ such that, for every $t\geq0$ and every $\Prob\in\cP_\model$,
\begin{equation}
  \label{eq:additive-supermartingale}
  \E_\Prob[W_{t+1}\mid\cF_t]\leq W_t
  \qquad\text{$\Prob$-almost surely.}
\end{equation}
\end{definition}
We call \eqref{eq:additive-supermartingale} the \emph{supermartingale property}.
Whenever $W_t>0$, we can divide by $W_t$, which is $\cF_t$-measurable, and obtain the
\emph{multiplicative supermartingale property}
\begin{equation}
  \label{eq:supermartingale}
  \E_\Prob\Bigl[\frac{W_{t+1}}{W_t}\Bigm|\cF_t\Bigr]\leq 1.
\end{equation}
Whenever $W_t=0$, the supermartingale property and non-negativity force
$W_{t+1}=0$ $\Prob$-almost surely, so the process remains at zero.

\subsection{From Test Supermartingales to Betting Monitors}
The multiplicative supermartingale property \eqref{eq:supermartingale} has a
natural interpretation as a betting game between the
model and a bettor. The model believes that the dynamics are consistent with
$\model$ and accepts any bet that is fair under every distribution it
considers plausible. The bettor bets against the model and aims to
accumulate as much wealth as possible. At each step $t$, the bettor stakes
their entire wealth $W_t$ and announces a payoff function
$e\colon\cX\to[0,\infty]$. If the next state is
$X_{t+1}=x$, the bettor receives $W_t \cdot e(x)$, so that the new
wealth is $W_{t+1}=W_t \cdot e(X_{t+1})$. The model accepts the bet only
if it cannot lose money in expectation under any distribution plausible according to $\cU = \model(h_t)$,
i.e., if
\begin{equation}
\label{eq:fair-bet-restated}
  \E_{x\sim p}[e(x)]\leq1
  \qquad\text{for every }p\in\cU.
\end{equation}
We call any such function $e\colon\cX\to[0,\infty]$ satisfying \eqref{eq:fair-bet-restated}
a \emph{valid} or \emph{fair} bet for $\cU$.%
With the initial budget $W_0=1$, and repeating this procedure with a fair bet
$e(\,\cdot\,;h_{i-1})$ for $\model(h_{i-1})$ in every step $i$,
the bettor's wealth after $t$ steps is
\begin{equation}
\label{eq:product-supermartingale}
  W_t = \prod_{i=1}^{t} e(X_{i};\HH_{i-1}),
\end{equation}
where each bet is chosen based on the
history $h_{i-1}$. Since every bet is fair from the model's perspective,
the wealth process $(W_t)_{t\geq0}$ is a test supermartingale. Conversely, every test supermartingale
is the wealth of a bettor by choosing the fair bet $e(X_{t+1};H_t)=W_{t+1}/W_t$.
This yields an important characterization of test supermartingales:

\begin{lemma}
\label{thm:betting-decomposition}
  A process $(W_t)_{t\geq0}$ is a test supermartingale for $\model$
  if and only if it can be decomposed as in \eqref{eq:product-supermartingale} for a bettor $e(\,\cdot\,;h_t)$
  that is fair for every history consistent with $\model$.
\end{lemma}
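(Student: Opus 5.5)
Both directions are checked directly against \cref{def:test-supermartingale}, working conditionally on $\cF_t$. Throughout I use that $X_{t+1}$ has conditional law $\qstar_{t+1}(\cdot\mid \HH_t)\in\model(\HH_t)$ under every $\Prob\in\cP_\model$. I also use that, since $\cX,\cA$ are finite, ``$\Prob$-a.s.\ for every $\Prob\in\cP_\model$'' is the same as ``on every history consistent with $\model$''.

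\emph{($\Leftarrow$)} Let $W_t=\prod_{i\le t} e(X_i;\HH_{i-1})$ with each $e(\cdot;h)$ fair for $\model(h)$. Then $W_0=1$, $W_t\ge 0$, and $W_t$ is adapted because it depends on $(\HH_{t-1},X_t)$. Since $W_t$ is $\cF_t$-measurable and $A_t$ is chosen after $X_t$, I condition further on $\HH_t$ by the tower property. This gives $\E_\Prob[W_{t+1}\mid \HH_t]=W_t\,\E_{x\sim\qstar_{t+1}(\cdot\mid\HH_t)}[e(x;\HH_t)]\le W_t$, where the inequality is fairness \eqref{eq:fair-bet} applied to $p=\qstar_{t+1}(\cdot\mid\HH_t)\in\model(\HH_t)$. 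The convention $0\cdot\infty=0$ handles infinite payoffs on $\Prob$-null outcomes. Taking $\E[\cdot\mid\cF_t]$ yields \eqref{eq:additive-supermartingale}.

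\emph{($\Rightarrow$)} Given a test supermartingale $W$, I define the bet on each consistent history $h_t$ (with $x_{t}$ the last state of $h_t$) as follows.
\begin{itemize}
\item If $W_t(h_{t-1},x_t)>0$, set $e(x;h_t)=W_{t+1}(h_t,x)/W_t(h_{t-1},x_t)$.
\item If $W_t=0$, set $e\equiv 1$.
\item On inconsistent histories, set $e\equiv 1$ as well; this is harmless.
\end{itemize}
The product then telescopes back to $W_t$. Where $W_t=0$, the appendix remark shows $W_{t+1}=0$ a.s., so $W_s=0=W_t\cdot\prod e$ for all $s\ge t$.

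\emph{Main step: fairness.} The bet must satisfy \eqref{eq:fair-bet} for \emph{every} $p\in\model(h_t)$, not just for the true conditional law. Here I use rectangularity. Given a consistent $h_t$ and any $p\in\model(h_t)$, I build a process $\Prob'\in\cP_\model$ in three pieces. It follows some consistent process reaching $h_t$ with positive probability, which exists by consistency of $h_t$. It uses the same action choices, and it uses $p$ as the next-state law after $h_t$. Elsewhere it uses arbitrary admissible laws from $\model$. The new process is still consistent with $\model$, and $\Prob'(\HH_t=h_t)>0$. Since $\cF$ is discrete, the multiplicative property \eqref{eq:supermartingale} under $\Prob'$ on the atom $\{\HH_t=h_t\}$ reads $\sum_x p(x)\,e(x;h_t)\le 1$.

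One subtlety is that conditioning on $\cF_t$ does not fix $A_t$, whereas the bet is indexed by $h_t$, which includes $a_t$. I would resolve this as follows. Because $W_{t+1}$ depends on $(\HH_t,X_{t+1})$, the supermartingale inequality conditioned on $\cF_t$ averages over $a_t$. To get the inequality for each action separately, I choose $\Prob'$ to play $a_t$ deterministically after $(h_{t-1},x_t)$; the agent's policy is unrestricted in $\cP_\model$. Then conditioning on $\cF_t$ coincides with conditioning on $\HH_t=h_t$. This per-action, per-$p$ construction of a witness process is the step I expect to need the most care.
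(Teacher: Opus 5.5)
Your proof is correct and follows the paper's argument. The $(\Leftarrow)$ direction uses the tower property together with fairness under $\qstar_{t+1}(\cdot\mid \HH_t)\in\model(\HH_t)$. The $(\Rightarrow)$ direction uses a witness process in $\cP_\model$ that reaches $h_t$, plays $a_t$ deterministically after $(h_{t-1},x_t)$, and uses $p$ as the next-state law. The only difference is that you build the one-step ratios $W_{t+1}/W_t$ directly, setting $e\equiv 1$ after absorption at zero, whereas the paper obtains the multiplicative decomposition by citing Proposition~3 of Waudby-Smith et al.
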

\begin{proof}
  Proposition~3 of \citet{Waudby-Smith2024-dt} states that $(W_t)_{t\geq0}$ is a test supermartingale for $\model$ if and only if
  $W_t=\prod_{i=1}^tY_i$ for non-negative adapted $Y_i$ such that $\E_\Prob[Y_{t+1}\mid\cF_t]\leq1$
  $\Prob$-almost surely for every $\Prob\in\cP_\model$.

  Therefore, any wealth process $(W_t)_{t\geq0}$ of a bettor is a test supermartingale,
  since by letting $Y_{t+1}=e(X_{t+1};\HH_t)$, we have $\E_\Prob[Y_{t+1}\mid\cF_t] = \E_\Prob[\E_\Prob[e(X_{t+1};\HH_t)\mid\HH_t]\mid\cF_t] \leq \E_\Prob[1\mid\cF_t] = 1$ for every $\Prob\in\cP_\model$.
  The first equality follows from the tower property of conditional expectation, and the inequality from the fairness of the bet $e(\,\cdot\,;\HH_t)$ for $\model(\HH_t)$.

  Conversely, if $(W_t)_{t\geq0}$ is a test supermartingale, let $(Y_t)_{t\geq1}$ be as above.
  Since $Y_{t+1}$ is adapted, it is a function of $(\HH_t,X_{t+1})$, and we define the bettor's bet by $e(x;h_t)=Y_{t+1}(h_t,x)$,
  so that $W_t=\prod_{i=1}^te(X_i;\HH_{i-1})$.
  We need to show that the bets are fair for every history consistent with $\model$.
  Let $h_{t}$ be a history consistent with $\model$ that decomposes into $h_{t-1}$ followed by $x_{t}$ and $a_{t}$,
  and let $p \in \model(h_t)$.
  Since the actions are unconstrained and $\model$ constrains each transition separately, there is
  $\Prob\in\cP_\model$ such that $\Prob(\HH_t=h_t)>0$, $A_t=a_t$ almost surely given $(\HH_{t-1},X_t)=(h_{t-1},x_t)$,
  and $\Prob(X_{t+1}=\cdot\mid\HH_t=h_t)=p$.
  Then, we have
  \begin{align*}
    1 &\geq \E_\Prob[Y_{t+1}\mid\HH_{t-1}=h_{t-1}, X_t=x_t]\\
    &= \E_\Prob[Y_{t+1}\mid\HH_{t-1}=h_{t-1}, X_t=x_t, A_t=a_t]\\
    &= \E_\Prob[e(X_{t+1};h_t)\mid\HH_t=h_t]\\
    &= \E_{x\sim p}[e(x;h_t)],
  \end{align*}
  where the inequality is the property of $Y_{t+1}$ above,
  the first equality follows from $A_t=a_t$ almost surely given $(\HH_{t-1},X_t)=(h_{t-1},x_t)$ under $\Prob$,
  the second equality is the definition of $e(\,\cdot\,;h_t)$, and the last equality follows from the choice of $\Prob$.
  Since $p\in\model(h_t)$ was arbitrary, we conclude that $e(\,\cdot\,;h_t)$ is a fair bet for $\model(h_t)$.
  \qed
\end{proof}

Thus, we can indeed identify test supermartingales with betting monitors. It remains to determine the wealth threshold at which such a monitor should raise an alarm in order to stay $\alpha$-valid.

\subsection{Anytime Validity}
The intuitive correspondence between wealth and evidence can be made
precise. Consider any process consistent with $\model$. The
supermartingale property gives $\E[W_{t+1}\mid\cF_t]\leq W_t$ almost
surely; in particular, $\E[W_{t+1}]\leq\E[W_t]$. By induction,
\(
  \E[W_{t+1}]\leq \E[W_0]=1
\)
for every $t\geq0$.
Since $W_t$ is non-negative, Markov's inequality applies and gives
$\Prob\!\left(W_t\geq1/\alpha\right)\leq\alpha\,\E[W_t]\leq\alpha$ for every fixed $t\geq0$.
A monitor, however, runs for an indefinite amount of time and
inspects the wealth after every step, so a bound for each fixed time does
not suffice; {for instance, a union bound would be vacuous, while a Bonferroni correction would require increasingly conservative thresholds. We ask: under a process consistent with $\delta$, given enough time, could the wealth eventually exceed
$1/\alpha$ with probability greater than $\alpha$? A well-known result
due to Ville~\cite{ville1939etude} shows that this cannot happen; the same bound holds uniformly over all times.}

\begin{proposition}[Ville's inequality]
\label{prop:ville}
If $(W_t)_{t\geq0}$ is a test supermartingale for $\model$, then, for every
$\alpha\in(0,1)$ and every process consistent with $\model$ it holds that
$\Prob\left(\sup_{t\geq0}W_t\geq\frac1\alpha\right)\leq\alpha$.
Consequently, the process $(\monitor_t)_{t\geq0}$ with $\monitor_t=\ind{\sup_{s\leq t}W_s\geq1/\alpha}$ is an
$\alpha$-valid monitor.
\end{proposition}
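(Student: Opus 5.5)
The proof has two parts: a uniform bound on the first-passage probability of the test supermartingale, and then the translation of that bound into $\alpha$-validity of the monitor. The key tool is the optional stopping theorem for non-negative supermartingales, applied at the first time the wealth crosses the threshold $1/\alpha$.

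Fix a process consistent with $\model$, with law $\Prob\in\cP_\model$, and fix $\alpha\in(0,1)$. Define the stopping time
\[
\tau=\inf\{t\geq0: W_t\geq 1/\alpha\}
\]
with respect to $(\cF_t)_{t\geq0}$. This is a stopping time because $W$ is adapted. For each fixed horizon $T\in\N$, consider the bounded stopping time $\tau\wedge T$. The supermartingale property \eqref{eq:additive-supermartingale} holds $\Prob$-almost surely at every $t$, so the stopped process $(W_{t\wedge\tau})_{t\geq0}$ is again a non-negative supermartingale under $\Prob$. I would verify this directly rather than cite it. Since $\{\tau>t\}\in\cF_t$, we have
\[
\E[W_{(t+1)\wedge\tau}-W_{t\wedge\tau}\mid\cF_t]=\ind{\tau>t}\,\E[W_{t+1}-W_t\mid\cF_t]\leq0 .
\]
Iterating this gives $\E[W_{\tau\wedge T}]\leq \E[W_0]=1$.

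Next, bound $\E[W_{\tau\wedge T}]$ from below using non-negativity. On the event $\{\tau\leq T\}$ we have $W_{\tau\wedge T}=W_\tau\geq1/\alpha$, and elsewhere $W_{\tau\wedge T}\geq0$. Hence
\[
1\geq\E[W_{\tau\wedge T}]\geq \tfrac1\alpha\,\Prob(\tau\leq T),
\]
that is, $\Prob(\tau\leq T)\leq\alpha$. The events $\{\tau\leq T\}$ increase to $\{\tau<\infty\}=\{\sup_{t}W_t\geq1/\alpha\}$ as $T\to\infty$. Note that this uses the fact that the supremum over discrete time is attained whenever it is at least $1/\alpha$, which holds for the non-strict inequality once some $W_t\geq 1/\alpha$. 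Continuity from below of $\Prob$ then yields $\Prob(\sup_t W_t\geq1/\alpha)\leq\alpha$.

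For the second claim, the process $\monitor_t=\ind{\sup_{s\leq t}W_s\geq1/\alpha}$ is adapted and $\{0,1\}$-valued. Its first alarm time is exactly $\tau$, so $\Prob(\tau_\monitor<\infty)\leq\alpha$. Since $\Prob\in\cP_\model$ was arbitrary, the monitor is $\alpha$-valid.

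The only delicate point is the treatment of possibly infinite values, since bets take values in $[0,\infty]$. If $W_t=\infty$ with positive probability, the inequality $\E[W_t]\leq1$ is violated. So non-negativity together with the supermartingale property already forces $W_t<\infty$ almost surely, and the conditional expectations above are well-defined in $[0,\infty]$. I would note this remark explicitly and otherwise expect no obstacle.
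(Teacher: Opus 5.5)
Your optional-stopping argument is the standard proof of Ville's inequality. The paper gives no proof of its own; it bounds $\E[W_t]\le 1$ for each fixed $t$ via Markov's inequality and then cites Ville~\cite{ville1939etude} for the uniform-in-time bound. Your proof correctly establishes $\Prob(\tau<\infty)\le\alpha$ for $\tau=\inf\{t: W_t\ge 1/\alpha\}$. It therefore also establishes the second claim: $\sup_{s\le t}W_s$ is a maximum over finitely many values, so the monitor's first alarm time is exactly $\tau$.

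There is a genuine gap in the first claim. The identification $\{\tau<\infty\}=\{\sup_t W_t\ge 1/\alpha\}$ is false in general. A supremum over countably many times need not be attained. Your justification, that the supremum is attained ``once some $W_t\ge 1/\alpha$'', assumes the very conclusion it is meant to prove. A nonnegative martingale can converge to $1/\alpha$ strictly from below on an event of positive probability. For example, take a fair-coin simple model and $W_t=\alpha^{-1}\Prob(A\mid\cF_t)$, where $A=\{\sum_i X_i2^{-i}\in K\}$ and $K$ is a closed nowhere-dense set of measure $\alpha$. Then $\Prob(A\mid\cF_t)<1$ for every $t$, yet it tends to $1$ on $A$. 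So $\tau=\infty$ almost surely, while $\sup_t W_t = 1/\alpha$ with probability $\alpha$.

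The fix is a one-line limiting step. For any $c<1/\alpha$ we have $\{\sup_t W_t\ge 1/\alpha\}\subseteq\{\exists t: W_t\ge c\}$. Running your stopping argument at threshold $c$ gives $\Prob(\sup_t W_t\ge 1/\alpha)\le 1/c$. Letting $c\uparrow 1/\alpha$ yields the stated bound $\alpha$.
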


{
Beyond validity, test supermartingales have further practical advantages.
For example, any construction of a test supermartingale yields, essentially for free, a so-called
e-detector \cite{shin2023detectors}, which detects a change point when the true process is
initially consistent with the model but departs from it at an unknown time.
Moreover, unlike classical hypothesis tests which can suffer inflated false-positive rates when $p$-values are repeatedly inspected (so-called \textit{peeking}), test supermartingales allow history-dependent decisions about when to stop collecting observations, without compromising validity. We refer the reader to \cite{Ramdas2025-gq} for a detailed discussion of advantages of supermartingales.

Finally, test supermartingales generalise likelihood-ratio-based methods for sequential hypothesis testing, notably the sequential probability ratio test~\cite{Wald1945-gv}, widely used in statistical model checking~\cite{younes2006statistical}. Indeed, the cumulative likelihood ratio between an alternative and a null hypothesis is non-negative, and under the null its expected multiplicative increase is bounded by $1$, so it is a test supermartingale.
}
\subsection{Completeness of Test Supermartingales and Betting Monitors}

The betting interpretation suggests a simple way to accumulate evidence: a
test supermartingale multiplies one-step bets, each designed to extract as much
evidence as possible given the current knowledge. It is not clear a priori,
however, that this way of aggregating evidence is optimal, i.e., that no
other valid monitor can combine the observations more cleverly and raise an
alarm faster than every test supermartingale. Indeed, in general sequential
hypothesis testing, there are settings in which test supermartingales are not
optimal and the appropriate tool is a more general class of processes called e-processes
\cite{Ramdas2022-jn,ramdas2020admissible,Grunwald2024-pk}. In~\Cref{thm:monitor-supermartingale} below we show that for robust models considered here, this does not happen: test supermartingales form a complete language for monitoring, in the sense that
every valid monitor is dominated by a monitor based on a test supermartingale.

\begin{restatable}{theorem}{thmMonitorSupermartingale}
\label{thm:monitor-supermartingale}
Let $(\monitor_t)_{t\geq0}$ be an $\alpha$-valid monitor for a polytopic robust model $\model$.
Then there is a test supermartingale $(W_t)_{t\geq0}$
for $\model$ that dominates the rescaled alarm process under $\model$, i.e.,
$W_t\geq\frac{\monitor_t}{\alpha}$ for every $h_t$ consistent with $\model$. In particular,
the monitor $(\ind{W_t\geq1/\alpha})_{t\geq0}$ raises its first alarm no
later than $(\monitor_t)_{t\geq0}$ on every history consistent with $\model$.
\end{restatable}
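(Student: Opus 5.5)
The plan is to take the \emph{worst-case conditional alarm probability} as the supermartingale, i.e.\ the upper Snell envelope of the alarm event under the robust model. For every history $h=(h_{t-1},x_t)$ consistent with $\model$, set
\[
  V_t(h)\;=\;\sup_{\Prob\in\cP_\model,\ \Prob(\HH_{t-1}=h_{t-1},X_t=x_t)>0}\ \Prob\bigl(\exists s\geq t:\ \monitor_s=1 \bigm| \HH_{t-1}=h_{t-1},X_t=x_t\bigr).
\]
Because $\model$ is rectangular, this conditional probability depends only on how $\Prob$ behaves after $h$: the actions from $x_t$ onward and the successor distributions after histories extending $h$. So $V_t(h)$ is the value of an adversary who chooses future actions and admissible transition functions in order to make the monitor fire. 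I would then put
\[
  W_t=\frac{V_t}{\alpha}+\Bigl(1-\frac{V_0(X_0)}{\alpha}\Bigr),
\]
where the constant correction makes $W_0=1$ exactly.

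The properties of $W$ would be checked in four steps.
\begin{enumerate}
  \item \emph{Domination.} If $\monitor_t(h)=1$, the event already holds, so $V_t(h)=1$. Hence $W_t\geq V_t/\alpha\geq\monitor_t/\alpha$, provided the correction term is non-negative.
  \item \emph{Non-negativity of the correction.} This needs $V_0(x_0)\leq\alpha$ for every $x_0$. The model does not constrain $X_0$, so every consistent $\Prob$ may be modified to start at $x_0$ with probability one. Taking the supremum in the definition of $V_0(x_0)$ over such $\Prob$ and using $\alpha$-validity, $\Prob(\tau_\monitor<\infty)\leq\alpha$, gives $V_0(x_0)\leq\alpha$. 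Non-negativity of $W$ then follows immediately.
  \item \emph{Adaptedness.} $V_t$ is a function of $(\HH_{t-1},X_t)$ by construction, so $W$ is adapted.
  \item \emph{Supermartingale property.} By \Cref{thm:betting-decomposition} and the argument in its proof, it suffices to show
  \[
    \E_{x\sim p}\bigl[V_{t+1}(h_{t-1},x_t,a,x)\bigr]\leq V_t(h_{t-1},x_t)
  \]
  for every action $a$ and every $p\in\model(h_{t-1},x_t,a)$, whenever $\monitor_t(h)=0$. When $\monitor_t(h)=1$ we have $V_t=1$, and the inequality is trivial because $V_{t+1}\leq1$.
\end{enumerate}

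The core of the argument, and the step I expect to be the main obstacle, is the pasting argument behind the supermartingale inequality in step~4. Fix $\varepsilon>0$, an action $a$ and $p\in\model(h_t)$. For each of the finitely many successors $x$ with $p(x)>0$, pick a consistent $\Prob^x$ that is $\varepsilon$-optimal in the definition of $V_{t+1}(h_t,x)$. Then build one process $\Prob$ as follows: it reproduces some consistent law up to $h_{t-1}$, reaches $x_t$ with positive probability, plays $a$ at $x_t$, draws $X_{t+1}\sim p$, and after $X_{t+1}=x$ continues as $\Prob^x$ does conditionally on $(h_t,x)$.

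The facts that make this work are:
\begin{itemize}
  \item rectangularity, which guarantees that the pasted process is again consistent with $\model$;
  \item the freedom of the actions, which lets us play $a$ at $x_t$;
  \item finiteness of $\cX$, which makes the pasting involve only finitely many pieces.
\end{itemize}
Since $\monitor_t(h)=0$, the event $\{\exists s\geq t:\monitor_s=1\}$ coincides on $h$ with $\{\exists s\geq t+1:\monitor_s=1\}$. The law of total probability then gives
\[
  V_t(h)\geq\Prob(\cdot\mid h)\geq\sum_x p(x)\bigl(V_{t+1}(h_t,x)-\varepsilon\bigr),
\]
and letting $\varepsilon\downarrow0$ yields the inequality.

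Some care is needed with the conditional probabilities over the infinite horizon. If these cause trouble, I would first work with horizon-$N$ truncations $V^N_t$ and pass to $N\to\infty$ by monotone convergence. For this I would use compactness of the polytopes, where linear functionals attain their maxima at finitely many vertices. That compactness is also where I expect the polytopic assumption to be needed, so that the suprema are attained and the limit interchange is justified.

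Finally, the ``in particular'' clause follows at once. Whenever $\monitor_t=1$ we have $W_t\geq1/\alpha$, so the betting monitor $\ind{W_t\geq1/\alpha}$ fires no later than $\monitor$. By \Cref{thm:betting-decomposition}, $W$ is the wealth process of a bettor placing fair bets, which gives \Cref{thm:betting-monitor}.
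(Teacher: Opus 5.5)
Your proof is correct, but it takes a different route from the paper's. The paper argues by citation. It notes that $\monitor_t/\alpha$ is an e-process for $\cP_\model$ (Proposition~7.8 of \cite{Ramdas2025-gq}). It then applies Theorem~14 of \cite{Ramdas2022-jn}: for a fork-convex, locally dominated family, every e-process is dominated by a nonnegative supermartingale $L$ with $L_{-1}\le 1$. Fork-convexity is where the paper uses convexity of the sets $\model(h_t)$. The step you share with the paper is the final normalisation $W_t=L_t+1-L_0$, justified in both cases by the unconstrained initial state (your $V_0\le\alpha$).

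Your $V_t/\alpha$ is a concrete, discrete version of the upper Snell envelope behind such domination results. Here the essential supremum becomes a pointwise supremum over histories, and pasting $\varepsilon$-optimal continuations gives the supermartingale inequality directly. Your route buys three things:
\begin{itemize}
\item a self-contained proof;
\item an explicit, interpretable witness, namely the worst-case conditional probability of a future alarm; by Ville's inequality applied conditionally on $h$, $V/\alpha$ is in fact the smallest nonnegative supermartingale dominating $\monitor/\alpha$;
\item the observation that the polytope assumption is not needed.
\end{itemize}
On the last point: your pasting uses only rectangularity, free actions and a free initial state. The inequality $\E_{x\sim p}[V_{t+1}]\le V_t$ is checked for each single $p\in\model(h_t)$, so neither convexity nor compactness enters.

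For the same reason, your fallback via horizon truncations and compactness of the polytopes is unnecessary:
\begin{itemize}
\item the alarm event is a countable union of cylinders;
\item all conditioning is on positive-probability histories;
\item the pasted law is determined by its kernels;
\item $\varepsilon$-optimal choices remove any need for attained suprema.
\end{itemize}
The paper's route, in exchange, takes only a few lines and places the result within the general theory of e-processes.
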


We prove the above theorem using the machinery of sequential testing in Appendix~\ref{app:proofs-monitoring}. Given the previously established equivalence between test supermartingales and betting monitors, \Cref{thm:betting-monitor} follows immediately.

\section{Learning Interpretation of Mixture Monitors}
\label{app:mixtures}
The mixture wealth process $W^\LM$ of \eqref{eq:mixture-wealth} evolves as $W^\LM_{t+1}=W^\LM_t\,e^\LM(X_{t+1};\HH_t)$ with the bets
\[
  e^\LM(\,\cdot\,;h_t)=\int_{\R^d}\pi_t(\theta)\,e^\theta(\,\cdot\,;\HH_t)\,\mathrm{d}\theta,
  \qquad
  \pi_t(\theta)\propto\pi(\theta)\,W^\theta_t.
\]
Since every such bet is a convex mixture of fair bets it is itself a fair bet due to Tonelli's theorem. Note that the weight of $\pi_t$ shifts towards the parameters that have performed best so far, so the mixture method can be seen as a learning technique that uses a distributional instead of a point estimate.

\section{Details of the Considered Monitors}
\label{sec:appendix-monitors}

\paragraph{Transition-wise monitor (\texttt{TW})}
The sequential test of \citet{Fields2025-gu} bets sequentially. In every
step, it uses the empirical estimate of the transition distribution of the
current state to place an assymptotically optimal bet.
However, it does not consider robust models.
\citet{sethi2026markovian} do consider robust models, but while
their monitor is asymptotically optimal, it is too slow in practice.
We therefore robustify the approach of \citet{Fields2025-gu} and apply
the RIPr construction to the estimate of the true transition distribution.

Formally, for a state $u$, let $N_t(u,y)$ denote the number of transitions
from $u$ to $y$ observed in $h_t$, $N_t(u)=\sum_y N_t(u,y)$, and
$d_u=|\supp(\model(h_t))|$ the number of possible successors of $u$. When the
current state is $u=x_t$, the monitor estimates the true transition
distribution by the Laplace estimate
\[
  \widehat q_{t+1}(y)=\frac{N_t(u,y)+1}{N_t(u)+d_u},
  \qquad y\in\supp(\model(h_t)),
\]
and bets with the RIPr ratio
$e(x;h_t)=\widehat q_{t+1}(x)/p^*(x)$, where
$p^*\in\RIPr(\widehat q_{t+1},\model(h_t))$. Since $\widehat q_{t+1}$ is
determined by $h_t$, this is a fair bet for $\model(h_t)$ by
\Cref{thm:ripr-well-defined}. Moreover, as $\widehat q_{t+1}$ converges to the true transition distribution,
these bets become asymptotically optimal.

\paragraph{Scoring rule monitor (\texttt{SR})} uses fair bets introduced in \cite{henzinger2025alignment} that are based on scoring rules;
functions $f(x, p): \cX \times \dist{\cX} \to [a, b] \subseteq \R$ that assign a numerical score of how well
a prediction is aligned with the observed outcome. The paper uses two scoring rules: the Brier score (\texttt{SR-Br}) and the spherical score (\texttt{SR-Sph}).
For the given scoring function, the bet is then defined as
$$e(X) = \exp\left(\lambda(F-\mu) - \kappa(F - \gamma)^2\right)$$
where $X$ is the random outcome, $F = f(X, p)$, $\mu = \E_{X \sim p}[f(X, p)]$,
$\gamma$ is the running mean of the past scores,
and $\kappa$ is a suitable constant.
The values of $\gamma$ and $\lambda$ are thus parameters.
For $\gamma$, we use the running mean of the past scores as in \cite{henzinger2025alignment};
however, the value of $\lambda$ is not discussed in the paper,
hence we consider the Bayesian mixture over several such values as in the mixture of linear monitors.

The bet $e(X)$ is a fair bet only for a simple model, since it depends on the true expected score $\mu$
which is unknown in the robust model setting. To address this,
we define weights $w_x = E(x)$ and apply the \emph{Scale} constructions
described in Sec.~\ref{sec:scale} and the \emph{RIPr} construction described in Sec.\ref{sec:ripr}
to derive a fair bet $e'(X)$ for the robust model. Clearly, the scale construction can only improve a fair bet,
hence for simple models it holds that $e'(X) \geq e(X)$.

\paragraph{Oracle monitor (\texttt{Oracle})}
The \texttt{Oracle} knows the true transition distribution
$\qstar_{t+1}(\,\cdot\mid h_t)$. Following \cref{sec:power}, in every step
it places the fair bet for $\model(h_t)$ with the largest growth rate
$\E_{\qstar_{t+1}}[\log e]$. This bet is exactly the RIPr ratio
\cite{Grunwald2024-pk,Lardy2024-gy}
\[
  e(x;h_t)=\frac{\qstar_{t+1}(x\mid h_t)}{p^*(x)},
  \qquad
  p^*\in\RIPr\bigl(\qstar_{t+1}(\,\cdot\mid h_t),\model(h_t)\bigr),
\]
i.e., the RIPr construction applied to the weights
$w=\qstar_{t+1}(\,\cdot\mid h_t)$.

\section{Details of the Benchmarks}
\label{sec:appendix-benchmarks}

All benchmarks are discrete-time Markov chains given as PRISM programs and
built with the probabilistic model checker Storm~\cite{hensel2022storm}.
\Cref{tab:benchmarks} lists their sizes.

\begin{table}[t]
  \centering
  \caption{Sizes of the benchmarks: the number of states, the number of
    transitions, the maximal number of successors of a state, and the number
    of terminal states.}
  \label{tab:benchmarks}
  \small
  \begin{tabular}{llrrrr}
    \toprule
    & benchmark & states & transitions & max.\ succ. & terminal \\
    \midrule
    \multirow{8}{*}{\rotatebox{90}{Storm}}
    & \textsf{brp-16-2}    & 677    & 867    & 2   & 35   \\
    & \textsf{conditional} & 4      & 5      & 2   & 2    \\
    & \textsf{crowds-4-3}  & 726    & 1\,146 & 4   & 35   \\
    & \textsf{crowds-5-5}  & 8\,607 & 15\,113 & 5  & 1\,260 \\
    & \textsf{die}         & 13     & 20     & 2   & 6    \\
    & \textsf{leader-3-5}  & 273    & 397    & 125 & 1    \\
    & \textsf{nand-5-2}    & 1\,728 & 2\,505 & 2   & 6    \\
    & \textsf{quantiles}   & 4      & 6      & 3   & 1    \\
    \midrule
    \multirow{10}{*}{\rotatebox{90}{scaled Storm}}
    & \textsf{brp-32-3}    & 1\,766      & 2\,307       & 2  & 68 \\
    & \textsf{brp-64-5}    & 5\,192      & 6\,915       & 2  & 134 \\
    & \textsf{brp-256-5}   & 20\,744     & 27\,651      & 2  & 518 \\
    & \textsf{brp-1024-5}  & 82\,952     & 110\,595     & 2  & 2\,054 \\
    & \textsf{crowds-10-5} & 111\,294    & 261\,444     & 10 & 3\,003 \\
    & \textsf{crowds-20-5} & 2\,061\,951 & 7\,374\,951  & 20 & 53\,130 \\
    & \textsf{crowds-20-6} & 10\,633\,591 & 38\,261\,191 & 20 & 230\,230 \\
    & \textsf{nand-20-2}   & 154\,942    & 239\,832     & 2  & 21 \\
    & \textsf{nand-40-4}   & 3\,999\,522 & 6\,288\,542  & 2  & 41 \\
    & \textsf{nand-60-4}   & 18\,826\,082 & 29\,772\,212 & 2 & 61 \\
    \midrule
    \multirow{5}{*}{\rotatebox{90}{grid}}
    & \textsf{windy}            & 3\,196  & 12\,760  & 4  & 4 \\
    & \textsf{windy-obs}        & 3\,156  & 12\,600  & 4  & 4 \\
    & \textsf{windy-obs-moving} & 29\,504 & 465\,320 & 36 & 704 \\
    & \textsf{line}             & 398     & 794      & 2  & 2 \\
    & \textsf{line-large}       & 4\,998  & 9\,994   & 2  & 2 \\
    \bottomrule
  \end{tabular}
\end{table}

\paragraph{Storm benchmarks}
The first suite consists of the eight Markov chains from the Storm benchmark
suite\footnote{\url{https://github.com/moves-rwth/storm/tree/master/resources/examples/testfiles/dtmc}}
that were considered by \citet{henzinger2025alignment}: the bounded
retransmission protocol \textsf{brp}, the \textsf{crowds} anonymity protocol,
Knuth's \textsf{die}, the synchronous \textsf{leader} election protocol, the
NAND multiplexing \textsf{nand}, and two small test models
\textsf{conditional} and \textsf{quantiles}. To evaluate scalability, we add
ten larger instances of \textsf{brp}, \textsf{crowds}, and \textsf{nand},
obtained from the same PRISM programs by increasing their parameters.

\paragraph{Perturbations}
Trajectories are sampled from the original chain with transition matrix $P$,
and the model is obtained by perturbing $P$ row by row. We use the nine
perturbations of \citet{henzinger2025alignment}. After each perturbation,
the values are clipped from below at $10^{-3}$ and renormalised to form a probability distribution.
\begin{description}
  \item[\textsf{sharp}] raises every entry of a row to the power $4$.
  \item[\textsf{supp}] randomly redistributes the probability mass of a row.
    Every transition keeps its original probability with probability $0.3$
    (and at least one randomly chosen transition always keeps it); every
    other transition gets a new random weight drawn uniformly from $[0,1]$.
  \item[\textsf{noisy}] adds $R$ with $R\sim\mathrm{Uniform}([-0.05,0.05])$
    to every entry.
  \item[\textsf{drop}] sets every entry to $10^{-3}$ with probability $0.4$,
    keeping at least one randomly chosen entry of the row.
  \item[\textsf{swap}] swaps the largest and the smallest entry of a row;
  \item[\textsf{collapse}] puts nearly all mass of a row on one randomly chosen
    entry: the chosen entry gets weight $1$, all others $10^{-3}$.
  \item[\textsf{bias}] mixes a row with the point mass on its first successor
    (in the order of the state indices), with weight $0.55$ on the point mass.
  \item[\textsf{inv}] replaces every entry $p$ by $1/p$ and renormalises, so
    that low probabilities become dominant.
  \item[\textsf{flip}] replaces every entry $p$ by $1-p$.
\end{description}
In addition, we include the unperturbed chain (\textsf{identity}) as the model
to evaluate false alarms. The random perturbations are drawn once per
benchmark with a fixed seed.
Our perturbations differ from those of \citet{henzinger2025alignment} in that
they preserve the support of every row. This is consistent with our assumption
from \Cref{sec:robust-model} that $\supp(\qstar(h_t))\subseteq\supp(\model(h_t))$.

\paragraph{Terminal states}
To avoid bottom strongly connected components, \citet{henzinger2025alignment}
add a transition to every state with probability $0.01$ to the initial state.
We do not modify the chains in this way, since it changes the dynamics of both
the system and the model. Instead, whenever a trajectory enters a terminal
state, i.e., a state whose only transition is a self-loop (Storm represents
deadlocks in the same way), the next transition starts from the initial state
again. The transition into the terminal state is part of the trajectory, while
the reset is not observed by the monitor.

\paragraph{Gridworlds}
The second suite consists of five grid worlds in which an agent moves under a
uniformly random policy. Every move takes two steps of the chain: first, the
agent chooses one of the four directions (up, down, left, right) uniformly at
random, and then it moves. With probability $0.75$, the agent moves in the
chosen direction, and with probability $0.25$, it slips to one of the three
other directions, chosen uniformly at random; a move that would leave the grid
or enter an obstacle keeps the agent in place. In the system, a wind blows the
agent down: when the chosen direction is up, left, or right, the agent is moved
down with an additional probability $0.2$, taken from the intended move; when
the chosen direction is down, the probability of moving down increases by $0.2$,
taken from the slips. The model is the same grid world without wind. Since the
wind is weaker than the slips, the model and the system have the same support.
\begin{description}
  \item[\textsf{windy}] is a $20\times 20$ grid; the agent starts in the
    bottom-left corner, and the trajectory is reset when it reaches the
    top-right corner.
  \item[\textsf{windy-obs}] adds five static obstacles, which the agent cannot
    enter, to the \textsf{windy} grid.
  \item[\textsf{windy-obs-moving}] is an $8\times 8$ grid with five static
    obstacles and two moving ones; one moves vertically and the other
    horizontally, by one cell in each direction with probability $0.2$ per move
    of the agent. The trajectory is reset when the agent reaches the
    top-right corner or collides with a moving obstacle.
  \item[\textsf{line}] is the one-dimensional variant of \textsf{windy}
    shown in \cref{ex:line-world}: a line of $100$ cells, where the agent
    chooses between left and right and the wind blows to the right.
  \item[\textsf{line-large}] is the same line world with $1250$ cells.
\end{description}

\section{Further Experimental Results}
\label{sec:appendix-results}

\Cref{fig:validity-and-projection} reports cumulative false alarm rates for the experiment in \cref{sec:experiments} (\textbf{Q1}); \Cref{fig:appendix-hist} extends the cumulative histograms of \Cref{fig:rq2-ripr-vs-scale,fig:rq3-lm-vs-existing,fig:lm-coord} (\textbf{Q2-Q4}).
Both figures also include results for the \texttt{SR} monitors that use the Brier scoring rule.

\begin{figure}[t]
  \centering
 \begin{subfigure}[t]{0.48\linewidth}
   \centering
   \includegraphics[width=\linewidth]{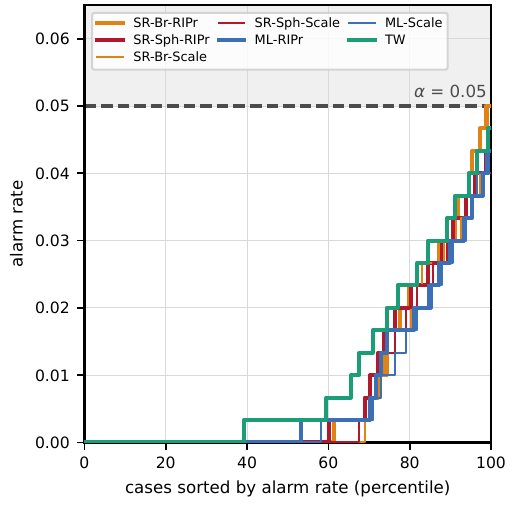}
   \label{fig:rq1-false-alarm-rate}
 \end{subfigure}
  \caption{
  Cumulative false-alarm rate. Each line shows case alarm rates, ordered by rate and plotted as case percentiles. Lower is better.
}
  \label{fig:validity-and-projection}
\end{figure}

\begin{figure}[!htbp]
  \centering
  \begin{subfigure}[t]{\linewidth}
    \centering
    \includegraphics[width=\linewidth]{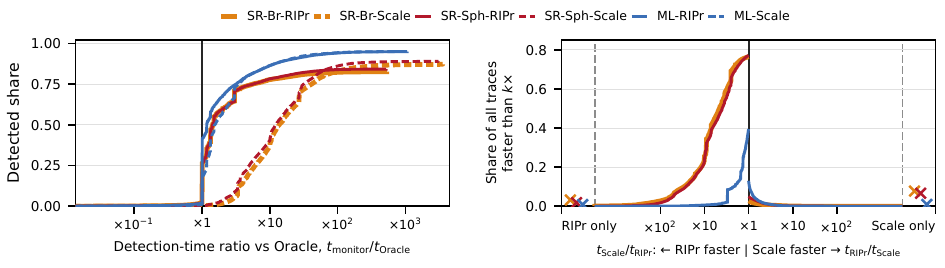}
    \caption{RIPr versus Scale (as \Cref{fig:rq2-ripr-vs-scale})}
    \label{fig:rq2-app}
  \end{subfigure}

  \medskip
  \begin{subfigure}[t]{\linewidth}
    \centering
    \includegraphics[width=\linewidth]{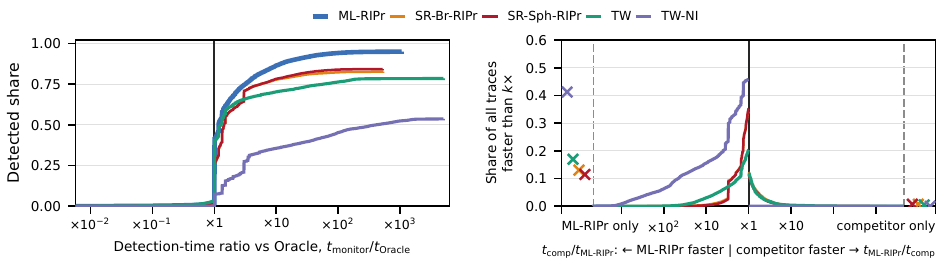}
    \caption{ML versus SR and TW (as \Cref{fig:rq3-lm-vs-existing})}
    \label{fig:rq3-app}
  \end{subfigure}

  \medskip
  \begin{subfigure}[t]{\linewidth}
    \centering
    \includegraphics[width=\linewidth]{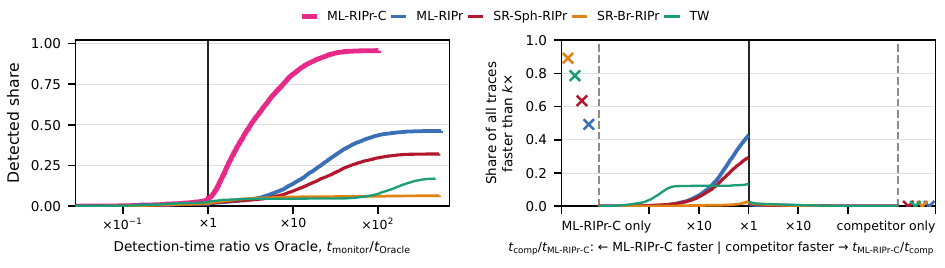}
    \caption{Coordinate features on the gridworlds (as \Cref{fig:lm-coord})}
    \label{fig:rq4-app}
  \end{subfigure}
  \caption{Extension of the cumulative histograms of \Cref{fig:rq2-ripr-vs-scale,fig:rq3-lm-vs-existing,fig:lm-coord}
with the Brier scoring rule of the \texttt{SR} monitors.
\texttt{SR-Br-RIPr} behaves similarly to \texttt{SR-Sph-RIPr} on all benchmarks, except on the gridworlds (\subref*{fig:rq4-app}), where \texttt{SR-Sph-RIPr} never detects $66\%$ of the traces against $93\%$ for \texttt{SR-Br-RIPr}. \texttt{SR-Br-RIPr} is also significantly slower on traces detected by both variants. (\subref*{fig:rq3-app}) also includes comparison to the non-incremental transition-wise monitor of~\cite{sethi2026markovian}, labelled as \texttt{TW-NI}; it is excluded from (\subref*{fig:rq4-app}) where it detects zero traces.
  \label{fig:appendix-hist}}
\end{figure}

\newpage
\section{Proofs}
\label{app:proofs}
This appendix collects the proofs of the results stated in the main text.

\subsection{Additional Notation}
For brevity, we use the notation $\clip(v, a, b) := \min\left\{\max\left\{v, a\right\}, b\right\}$ to denote the clipping of a value $v$ to the interval $[a, b]$.

Whenever $(W_t)_{t\geq0}$ and $(V_t)_{t\geq0}$ are two adapted processes,
we say that $(W_t)_{t\geq0}$ \emph{dominates} $(V_t)_{t\geq0}$ under $\model$
if $V_t\leq W_t$ on every history consistent with $\model$.

We also need a generalization of test supermartingales, called e-processes.
An \emph{e-process} for $\model$ is a non-negative adapted process
$(E_t)_{t\geq0}$ such that, for every $\Prob\in\cP_\model$, there is a
test supermartingale $(V_t)_{t\geq0}$ for $\Prob$ that dominates $(E_t)_{t\geq0}$
$\Prob$-almost surely \cite[Def.~7.3]{Ramdas2025-gq}.
In general, $(V_t)_{t\geq0}$ depends on $\Prob$. \citet{Ramdas2022-jn}
call such processes $\cP$-safe e-processes.

A set $\mathcal{G}$ of distributions on trajectories is \emph{locally dominated} by a distribution $R$ if, for
every $t$, the law of $\HH_t$ under every $\Prob\in\mathcal{G}$ is absolutely
continuous with respect to its law under $R$.
In the case of finite $\cX$ and $\cA$ this means that every history with positive probability under $\Prob \in \mathcal{G}$
has also positive probability under $R$. This holds trivially for every set of distributions,
including $\cP$, since there exists a distribution that assigns a positive probability to every history $h_t$.

Finally, we introduce the concept of \emph{fork-convexity} \cite[Def.~10]{Ramdas2022-jn}, phrased in terms of
the observations. For $s\geq0$, let $g_s=(h_{s-1},x_s)$ and $g_{-1}$ be the empty sequence,
and write $\Prob(\,\cdot\mid g_s)$ for conditioning on the event
$(\HH_{s-1},X_s)=g_s$. Further, let $\Prob,\Prob'$ be distributions on trajectories, $s\geq-1$,
and let $\lambda$ assign to every $g_s$ a weight $\lambda(g_s)\in[0,1]$ with $\lambda(g_s)=1$
whenever $\Prob'(g_s)=0$.
The \emph{fork-convex combination} of $\Prob$ and $\Prob'$ at time $s$ with weights
$\lambda$ is the distribution $\Prob''$ that follows $\Prob$ up to $X_s$ and then
mixes the continuations of $\Prob$ and $\Prob'$ with weight $\lambda(g_s)$, i.e.,
$\Prob''(g_s)=\Prob(g_s)$ and, for every $t\geq s$ and every history $h_t$
extending $g_s$ with $\Prob(g_s)>0$,
\[
  \Prob''(\HH_t=h_t\mid g_s)
  =\lambda(g_s)\,\Prob(\HH_t=h_t\mid g_s)
  +\bigl(1-\lambda(g_s)\bigr)\,\Prob'(\HH_t=h_t\mid g_s).
\]
A set $\cG$ of distributions on trajectories is \emph{fork-convex} if it contains every
fork-convex combination of any two of its elements.
Note that, in particular, $\cP_\model$ is fork-convex whenever every $\model(h_t)$ is
convex. Indeed, the fork-convex combination may choose the actions arbitrarily, and
for $t\geq s$ and every $h_t$ with $\Prob''(\HH_t=h_t)>0$, its
conditional transition distribution satisfies
\[
  \Prob''(X_{t+1}=\cdot\mid \HH_t=h_t)
  =\mu\,\Prob(X_{t+1}=\cdot\mid \HH_t=h_t)+(1-\mu)\,\Prob'(X_{t+1}=\cdot\mid \HH_t=h_t),
\]
where
\[
  \mu=\frac{\lambda(g_s)\,\Prob(\HH_t=h_t\mid g_s)}{\Prob''(\HH_t=h_t\mid g_s)},
\]
and is therefore a member of $\model(h_t)$ whenever $\Prob,\Prob'\in\cP_\model$ and
$\model(h_t)$ is convex; for $t<s$, it coincides with that of $\Prob$.

\subsection{Proof of~\Cref{thm:betting-validity}}
\begin{proof}[Proof of \cref{thm:betting-validity}]
  The wealth of a betting monitor is a product \eqref{eq:product-supermartingale} of bets
  that are fair for every history consistent with $\model$, so it is a test
  supermartingale for $\model$ by \cref{thm:betting-decomposition}, and the monitor is
  $\alpha$-valid by \cref{prop:ville}. \qed
\end{proof}

\subsection{Proof of~\Cref{thm:monitor-supermartingale} (and thus, of~\Cref{thm:betting-monitor})}
\label{app:proofs-monitoring}

\thmMonitorSupermartingale*
\begin{proof}[Proof of \cref{thm:monitor-supermartingale}]
  By our definition, an $\alpha$-valid monitor $(\monitor_t)_{t\geq0}$ is precisely
  a level-$\alpha$ sequential test for $\cP$ as in eq.~(7.1) of \cite{Ramdas2025-gq}.
  Proposition~7.8 \cite{Ramdas2025-gq} states that every level-$\alpha$ sequential test for $\cP$ is in fact a thresholded e-process for $\cP$,
  i.e., $E_t:=\monitor_t/\alpha$ is an e-process for $\cP$.
  The results of \cite{Ramdas2022-jn} assume a trivial initial $\sigma$-algebra, so we prepend the
  trivial $\cF_{-1}$ to the filtration and set $E_{-1}=0$.
  Theorem~14 in \cite{Ramdas2022-jn} then states that whenever $\cP$ is fork-convex and locally dominated,
  every e-process for $\cP$ is dominated by a non-negative process $(L_t)_{t\geq-1}$ with $L_{-1}\leq1$ that
  is a supermartingale for $\cP$. Since the initial distribution is unconstrained, $\cP$ contains, for every $x_0$,
  a distribution with $X_0=x_0$ almost surely, and so $L_0(x_0)\leq L_{-1}\leq1$, by the supermartingale property.
  Let us define $W_t = L_t + 1 - L_0$.
  Then $(W_t)_{t\geq0}$ is a supermartingale for $\cP$ with $W_0=1$, and since $L_0 \leq 1$, it is non-negative.
  Hence $(W_t)_{t\geq0}$ is a test supermartingale for $\model$ that dominates $(E_t)_{t\geq0}=(\monitor_t/\alpha)_{t\geq0}$
  $\Prob$-almost surely for every $\Prob\in\cP$, i.e., on every history consistent with $\model$.

  To conclude, whenever $\monitor_t=1$, then we have $W_t \geq E_t=\monitor_t/\alpha=1/\alpha$, so the monitor $\ind{W_t \geq 1/\alpha} = 1$.
  \qed
\end{proof}

\subsection{Proofs for \cref{sec:constructions}}
\label{app:proofs-constructions}

\thmRiprWellDefined*
\begin{proof}[Proof of \cref{thm:ripr-well-defined}]
  We need to show that for every $w > 0$ the RIPr construction yields a unique fair bet $e$, positive on $\supp(\cU)$.
  Let $\cU$ be a nonempty polytope and $w\in(0,\infty)^{\cX}$, so that the normalised
  weights $q$ satisfy $\supp(q)=\supp(\cU)$.
  The set $\cU$ is compact and $p\mapsto\KL(q\Vert p)$ is lower semicontinuous,
  so the minimum of $\KL(q\Vert p)$ over $p\in\cU$ is attained.
  Since $\cU$ is convex, it contains some $p$ with $\supp(p)=\supp(\cU)$, for which
  $\KL(q\Vert p)<\infty$; hence the minimum is finite.
  As $\KL(q\Vert p)=\infty$ whenever $p(x)=0$ for some $x\in\supp(q)$, every
  $p^*\in\RIPr(q,\cU)$ satisfies $\supp(p^*)=\supp(\cU)$.
  Theorem~1 in \cite{Lardy2024-gy} then implies that all members of $\RIPr(q,\cU)$
  coincide on $\supp(q)=\supp(\cU)$, so $\RIPr(q,\cU)=\{p^*\}$ is a singleton,
  hence $q/p^*$ is well-defined. Theorem~1 in \cite{Lardy2024-gy} also states that
  $q/p^*$ is a fair bet for $\cU$. Finally, since $\supp(q)=\supp(\cU)$, $q/p^*$ is positive on $\supp(\cU)$,
  which completes the proof. \qed
\end{proof}

\thmRiprCompleteness*
\begin{proof}[Proof of \cref{thm:ripr-completeness}]
  Let $V$ be the finite set of vertices of $\cU$. Since
  $\E_p[e]$ is linear in $p$, a function $e$ is a fair bet for $\cU$ if and only if
  $\E_v[e]\leq1$ for every $v\in V$. For every $x\in\supp(\cU)$ there is $v\in V$ with $v(x)>0$,
  so every fair bet satisfies $e(x)\leq1/v(x)$. Hence the set $\mathcal E_e$ of fair bets
  $e'$ with $e'\geq e$ on $\supp(\cU)$, viewed as vectors in $[0,\infty)^{\supp(\cU)}$, is nonempty, closed,
  and bounded, and $e'\mapsto\sum_{x\in\supp(\cU)}e'(x)$ attains its maximum on $\mathcal E_e$
  at some $e'$.

  Every fair bet $e''\geq e'$ belongs to $\mathcal E_e$, so the maximality of the sum gives
  $e''=e'$ on $\supp(\cU)$; that is, $e'$ is an admissible e-variable for $V$
  \cite[Def.~2.11]{Ramdas2025-gq}. Proposition~2.16 in \cite{Ramdas2025-gq},
  applied to the finite set $V$ and the events $\{x\}$, then gives for every $x\in\supp(\cU)$
  some $v^x\in V$ with $v^x(x)>0$ and $\E_{v^x}[e']=1$.
  The average $p=\frac1{|\supp(\cU)|}\sum_{x\in\supp(\cU)}v^x\in\cU$ thus satisfies $\supp(p)=\supp(\cU)$ and
  $\E_p[e']=1$.

  Let $q(x)=e'(x)p(x)$ for $x\in\supp(\cU)$ and $q(x)=0$ otherwise. Then $q$ is a distribution,
  since $\sum_xq(x)=\E_p[e']=1$, $\supp(q)\subseteq\supp(\cU)$, and $e'=q/p$ on $\supp(\cU)$, since $p>0$
  on $\supp(\cU)$. As $q/p$ is a fair bet for $\cU$ and $p\in\cU$, Proposition~7 in
  \cite{Lardy2024-gy} gives $p\in\RIPr(q,\cU)$. \qed
\end{proof}

\lemIntervalRipr*
\begin{proof}[Proof of \cref{lem:interval-ripr}]
The set $\RIPr(q,\cU)$ is a singleton by the proof of \cref{thm:ripr-well-defined}, as $\supp(q)=\supp(\cU)$. The
clipped form is \citet[Thm.~1, eq.~(21)]{DAmico2014-zg} with
$f_x(p)=-q(x)\log p$. \qed
\end{proof}

\subsection{Proofs for \cref{sec:algorithm}}
\label{app:proofs-algorithm}

\begin{lemma}
\label{lem:app-mixture-lipschitz-regret}
Let $T\geq1$, $L\geq0$, and let $g_1,\ldots,g_T\colon\R^d\to(0,\infty)$ be
functions such that $\log g_t$ is $L$-Lipschitz with respect to the $L^2$
norm for every $t$, and define $g(\theta)=\prod_{t=1}^Tg_t(\theta)$.
Further, let $\sigma>0$ and let $\pi=\mathcal N(0,\sigma^2I_d)$ be a Gaussian
prior on $\R^d$. Then for every $\theta^{*}\in\R^d$ we have
\[
  \log\int_{\R^d}g(\theta)\,d\pi(\theta)
  \geq\log g(\theta^{*})
  -\frac{\Vert\theta^{*}\Vert^2}{2\sigma^2}
  -d\log\left(1+\frac{TL\sigma}{\sqrt d}\right)
  -\frac d2.
\]
\end{lemma}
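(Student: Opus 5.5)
The plan is to restrict the Gaussian integral to a small ball around $\theta^*$, on which the Lipschitz property keeps $g$ close to $g(\theta^*)$, and then to lower-bound the Gaussian mass of that ball. This follows Theorem~2.2 of \cite{kakade2004online} with curvature replaced by Lipschitz continuity.

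Since each $\log g_t$ is $L$-Lipschitz, $\log g$ is $TL$-Lipschitz. Hence for every $\theta$ with $\lVert\theta-\theta^*\rVert_2\leq r$ we have $g(\theta)\geq g(\theta^*)e^{-TLr}$. Writing $B_r=\{\theta:\lVert\theta-\theta^*\rVert_2\leq r\}$, we get
\[
\int g\,d\pi\;\geq\;\int_{B_r} g\,d\pi\;\geq\; g(\theta^*)\,e^{-TLr}\,\pi(B_r).
\]

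It remains to lower-bound $\pi(B_r)$. On $B_r$ we have $\lVert\theta\rVert^2\leq(\lVert\theta^*\rVert+r)^2$. A crude bound using $(a+b)^2\le 2a^2+2b^2$ would double $\lVert\theta^*\rVert^2$ and so lose the exact constant $\frac{1}{2\sigma^2}$. The cleaner route is the change of variables $\theta=\theta^*+u$. This gives
\[
\pi(B_r)=\int_{\lVert u\rVert\le r}\phi_\sigma(u)\,e^{-\langle\theta^*,u\rangle/\sigma^2}\,du\;e^{-\lVert\theta^*\rVert^2/(2\sigma^2)} .
\]
The integrand region is symmetric in $u$, so we may average $u$ and $-u$. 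Since $\cosh\ge1$, this yields $\pi(B_r)\ge e^{-\lVert\theta^*\rVert^2/(2\sigma^2)}\,\mathcal N(0,\sigma^2I)(\lVert u\rVert\le r)$. This produces exactly the term $-\frac{\lVert\theta^*\rVert^2}{2\sigma^2}$.

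Next, lower-bound the centred Gaussian ball mass. Since $e^{-\lVert u\rVert^2/2\sigma^2}\ge e^{-r^2/2\sigma^2}$ on the ball,
\[
\mathcal N(0,\sigma^2 I)(\lVert u\rVert\le r)\;\geq\;(2\pi\sigma^2)^{-d/2}e^{-r^2/(2\sigma^2)}\,\mathrm{vol}(B_r),
\qquad \mathrm{vol}(B_r)=\frac{\pi^{d/2}r^d}{\Gamma(d/2+1)}.
\]
Using $\Gamma(d/2+1)\le (d/2)^{d/2}$ up to harmless factors, or more safely comparing with the cube of side $2r/\sqrt d$ inscribed in the ball, this gives a bound of the form $\bigl(c\,r/(\sigma\sqrt d)\bigr)^d e^{-r^2/2\sigma^2}$.

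Finally, choose $r=\sigma\sqrt d\,/(\sqrt d/(\sigma)\cdot\ldots)$, or more concretely $r$ proportional to $\sigma\sqrt d/(1+TL\sigma/\sqrt d)$. With this choice, $TLr\lesssim d$ and $r^2/(2\sigma^2)\le d/2$, and the volume factor becomes $(1+TL\sigma/\sqrt d)^{-d}$. Collecting the logarithms gives $-d\log(1+TL\sigma/\sqrt d)-O(d)$.

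\textbf{Main obstacle.} The hard step is matching the constants exactly, in particular obtaining $-\frac d2$ rather than a larger $O(d)$ term. If the ball/cube bookkeeping does not close, I would instead use the inscribed cube with a per-coordinate one-dimensional argument. That reduces the problem to showing that, for a 1D Gaussian, the mass of $[-s,s]$ weighted by $e^{-TLs\ldots}$ is at least $(1+TL\sigma/\sqrt d)^{-1}e^{-1/2}$. One would then optimise $s$ in one dimension and tensorise.
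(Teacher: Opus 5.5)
\textbf{Gap: the final ball-mass step cannot give the stated constants.}

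Your first two steps are correct. $\log g$ is $TL$-Lipschitz, and the symmetrisation $u\leftrightarrow -u$ (via $\cosh\geq 1$) does give $\pi(B_r)\geq e^{-\lVert\theta^*\rVert^2/(2\sigma^2)}\,\mathcal N(0,\sigma^2 I_d)(\lVert u\rVert_2\leq r)$.

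The problem is not bookkeeping. Lower-bounding $g$ on $B_r$ by its worst value $g(\theta^*)e^{-TLr}$ is intrinsically too lossy to reach $-d\log(1+y)-\frac d2$ with $y=TL\sigma/\sqrt d$. Take $\theta^*=0$ and $g_t(\theta)=e^{-L\lVert\theta\rVert_2}$. The Gaussian density is at most $(2\pi\sigma^2)^{-d/2}$. Using Stirling's bound $\Gamma(z+1)\geq\sqrt{2\pi z}\,(z/e)^z$, the best your argument can certify is $\sup_{r>0}\{-TLr+\log\mathcal N(0,\sigma^2I_d)(B_r)\}\leq -d-d\log y+\frac d2\log\frac d2-\log\Gamma(\frac d2+1)\leq -d\log y-\frac d2-\frac12\log(\pi d)$. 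This lies strictly below the claimed $-d\log(1+y)-\frac d2$ as soon as $d\log(1+1/y)<\frac12\log(\pi d)$, in particular for all large $y$.

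A concrete instance: take $d=1$, $\sigma=1$, $TL=1$, and let $\Phi$ be the standard normal CDF. The lemma asserts $\log\int g\,d\pi\geq-\log 2-\frac12\approx-1.19$. This is true: the integral is $2e^{1/2}(1-\Phi(1))$, with logarithm $\approx-0.65$. Yet even with the exact interval mass, your method gives at most $\sup_{r>0}\{-r+\log(2\Phi(r)-1)\}\approx-1.35$.

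The inscribed-cube fallback does not help. Per coordinate it is exactly this one-dimensional problem with slope $y$. It would need $\sup_{\rho>0}\{-y\rho+\log(2\Phi(\rho)-1)\}\geq-\log(1+y)-\frac12$, which already fails at $y=1$. So the proposal, whose choice of $r$ is also left unfinished, can deliver at best a bound of the form $-d\log(1+y)-O(d)$, not the stated constants.

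\textbf{Missing idea: pay the Lipschitz penalty on average, not in the worst case.} Replace the indicator of a ball with a soft Gaussian ``posterior'' $\rho=\mathcal N(\theta^*,\varepsilon^2 I_d)$.

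The Donsker--Varadhan inequality (equivalently, Jensen applied to $\log\int g\,\frac{d\pi}{d\rho}\,d\rho$) gives $\log\int g\,d\pi\geq\E_\rho[\log g]-\KL(\rho\Vert\pi)$. Lipschitzness plus Jensen give $\E_\rho[\log g]\geq\log g(\theta^*)-TL\sqrt d\,\varepsilon$. The Gaussian divergence is explicit: $\KL(\rho\Vert\pi)=\frac{\lVert\theta^*\rVert^2}{2\sigma^2}+\frac d2\bigl(\frac{\varepsilon^2}{\sigma^2}-1-\log\frac{\varepsilon^2}{\sigma^2}\bigr)$. This produces the exact $\frac{\lVert\theta^*\rVert^2}{2\sigma^2}$ term without any symmetrisation.

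Choosing $\varepsilon=\sigma/(1+y)$ makes the remainder exactly $d\log(1+y)+\frac{dy^2}{2(1+y)^2}\leq d\log(1+y)+\frac d2$. This is the paper's proof.
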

\begin{proof}
  The proof is a standard variational argument. Let $\rho$ be a Gaussian
  distribution with mean $\theta^{*}$ and covariance $\varepsilon^2I_d$ for
  an $\varepsilon>0$ to be specified later. Then, by the Donsker--Varadhan
  variational formula, we have
  \[
    \log\int_{\R^d}\exp\left(\log g(\theta)\right)d\pi(\theta)
    \geq\E_{\rho}[\log g(\theta)]-\KL(\rho\Vert\pi).
  \]
  Since $\log g_t(\theta)$ is $L$-Lipschitz, the function $\log g(\theta)$ is
  $TL$-Lipschitz, and we have
  \[
    \E_{\rho}[\log g(\theta)]\geq\log g(\theta^{*})-TL\,\E_{\rho}[\Vert\theta-\theta^{*}\Vert_2]
    \geq\log g(\theta^{*})-TL\sqrt{\E_{\rho}[\Vert\theta-\theta^{*}\Vert^2_2]}
    =\log g(\theta^{*})-TL\sqrt d\,\varepsilon,
  \]
  where we used Jensen's inequality for the second inequality. The KL
  divergence between two Gaussians is
  \[
    \KL(\rho\Vert\pi)=\frac{\Vert\theta^{*}\Vert^2}{2\sigma^2}
    +\frac d2\left(\frac{\varepsilon^2}{\sigma^2}-1-\log\left(\frac{\varepsilon^2}{\sigma^2}\right)\right).
  \]
  Combining these, we have
  \[
    \log\int_{\R^d}g(\theta)\,d\pi(\theta)\geq\log g(\theta^{*})-TL\sqrt d\,\varepsilon
    -\frac{\Vert\theta^{*}\Vert^2}{2\sigma^2}
    -\frac d2\left(\frac{\varepsilon^2}{\sigma^2}-1-\log\left(\frac{\varepsilon^2}{\sigma^2}\right)\right).
  \]
  Finally, we choose $\varepsilon$ to make the right-hand side small.
  Put $y=TL\sigma/\sqrt d$ and choose $\varepsilon=\sigma/(1+y)$. Then
  \[
    TL\sqrt d\,\varepsilon
    +\frac d2\left(\frac{\varepsilon^2}{\sigma^2}-1
      -\log\frac{\varepsilon^2}{\sigma^2}\right)
    =d\log(1+y)+\frac{dy^2}{2(1+y)^2}
    \leq d\log(1+y)+\frac d2.
  \]
  Substituting this estimate into the preceding bound proves the claim. \qed
\end{proof}

\begin{lemma}
\label{lem:app-ripr-interval-mu-lipschitz}
Let $\cU\subseteq\dist{\cX}$ be an interval uncertainty set with at least two
elements, let $q,q'\in\dist{\cX}$ be two distributions with
$\supp(q)=\supp(q')=\supp(\cU)$, let $r\in\RIPr(q,\cU)$ and
$r'\in\RIPr(q',\cU)$, and let $\mu,\mu'$ be the smallest corresponding
normalizing constants in \eqref{eq:interval-ripr}; that is,
\[
  \mu=\min\left\{m>0:\sum_{x\in\supp(q)}
    \clip\left(\frac{q(x)}{m},\,a_x,\,b_x\right)=1\right\},
\]
and analogously for $\mu'$ and $q'$. Then the following inequality holds:
\[
  \lvert\log\mu-\log\mu'\rvert\leq\max_{x\in\supp(q)}\lvert\log q(x)-\log q'(x)\rvert.
\]
\end{lemma}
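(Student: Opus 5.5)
We treat the normalising equation as a monotone function of $m$ and compare $q$ with $q'$ pointwise. Set $S=\supp(q)=\supp(\cU)$ and $\varepsilon=\max_{x\in S}\lvert\log q(x)-\log q'(x)\rvert$, so that $e^{-\varepsilon}q'(x)\leq q(x)\leq e^{\varepsilon}q'(x)$ for every $x\in S$. Define
\[
  F_q(m)=\sum_{x\in S}\clip\bigl(q(x)/m,\,a_x,\,b_x\bigr).
\]
Each summand is continuous and nonincreasing in $m$. Hence $F_q$ is continuous and nonincreasing on $(0,\infty)$, and by \cref{lem:interval-ripr} the level set $\{m:F_q(m)=1\}$ is nonempty. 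Because that level set is closed in $(0,\infty)$ and bounded away from $0$, the minimum defining $\mu$ is attained.

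The key observation is the scaling identity. For $c>0$ we have $F_{cq}(cm)=F_q(m)$. Pointwise monotonicity in $q$ then gives, for every $m$,
\[
  F_{q'}(e^{\varepsilon}m)\leq F_q(m)\leq F_{q'}(e^{-\varepsilon}m).
\]
Here the left inequality holds because $q'(x)/(e^{\varepsilon}m)\leq q(x)/m$, and the right one follows symmetrically.

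We first derive $\mu'\leq e^{\varepsilon}\mu$. From $F_q(\mu)=1$ we get $F_{q'}(e^{-\varepsilon}\mu)\geq1$ and $F_{q'}(e^{\varepsilon}\mu)\leq1$. Continuity and the intermediate value theorem then produce a root of $F_{q'}=1$ in $[e^{-\varepsilon}\mu,e^{\varepsilon}\mu]$, so $\mu'\leq e^{\varepsilon}\mu$. Swapping the roles of $q$ and $q'$ gives $\mu\leq e^{\varepsilon}\mu'$ in the same way. Taking logarithms yields $\lvert\log\mu-\log\mu'\rvert\leq\varepsilon$.

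The main obstacle is that the root set of $F_q=1$ may be a whole interval $[\mu,\bar\mu]$, because $F_q$ is flat wherever every coordinate is clipped. The two-sided bound asks us to compare the \emph{smallest} roots, so the intermediate value theorem alone is not enough. The argument above gives only one direction for each pair: a root of $F_{q'}=1$ at or below $e^{\varepsilon}\mu$, and a root of $F_q=1$ at or below $e^{\varepsilon}\mu'$. Fortunately these are exactly the two inequalities we need, since each one bounds a smallest root from above.

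It remains to check that each such root exists. Applying the intermediate value theorem to $F_{q'}$ on $[e^{-\varepsilon}\mu,e^{\varepsilon}\mu]$ is legitimate, and any root $m'$ found there satisfies $\mu'\leq m'\leq e^{\varepsilon}\mu$. The symmetric argument gives $\mu\leq e^{\varepsilon}\mu'$. The hypothesis that $\cU$ has at least two elements only ensures that the clipping is not degenerate, so we do not need it beyond the nonemptiness of the root sets.
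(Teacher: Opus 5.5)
Your proof is correct and follows essentially the paper's route: the scaling identity $F_{cq}(cm)=F_q(m)$ together with coordinatewise monotonicity in $q$ is used to bound each \emph{smallest} root from above. Your intermediate-value step on $[e^{-\varepsilon}\mu,e^{\varepsilon}\mu]$ is a slightly longer way to reach what the paper gets directly from $F_{q'}(e^{\varepsilon}\mu)\leq 1$ and the minimality of $\mu'$.

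One correction to your closing remark. Nonemptiness of the root set already follows from \cref{lem:interval-ripr}. The two-element hypothesis is what you use, without justifying it, when you claim the root set is bounded away from $0$. It forces $\sum_{x\in S}b_x>1$, so $F_q(m)>1$ for small $m$. Without it the minimum can fail to exist: if $a=b$, every $m>0$ is a root.
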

\begin{proof}
  For a non-negative vector $p$ supported on $\supp(\cU)$ and $m>0$, define
  \[
    f(p,m)=\sum_{x\in\supp(\cU)}\clip\left(\frac{p(x)}{m},\,a_x,\,b_x\right).
  \]
  Since $\cU$ has at least two elements, $\sum_xa_x<1<\sum_xb_x$.
  Consequently, for every $p$ that is positive on $\supp(\cU)$, the
  continuous non-increasing function $m\mapsto f(p,m)$ has a non-empty
  compact level set $\{m>0:f(p,m)=1\}$, so its smallest element is
  well-defined. Moreover, any two elements of this level set yield the same
  clipped vector. Indeed if $m_1\leq m_2$ then every clipped coordinate at $m_1$
  is at least the corresponding coordinate at $m_2$, and equality of their
  sums forces equality coordinatewise.

  Let $l=\max_{x\in\supp(q)}\lvert\log q(x)-\log q'(x)\rvert$.
  Then for every $x\in\supp(q)$ we have
  \[
    \log q(x)-l\leq\log q'(x)\leq\log q(x)+l,
  \]
  which implies that
  \[
    \frac{q(x)}{e^l}\leq q'(x)\leq q(x)e^l.
  \]
  The function $f(p,m)$ is non-decreasing in $p$ with respect to the
  coordinatewise order, and satisfies $f(cp,cm)=f(p,m)$ for every $c>0$.
  Hence
  \[
    f(q,\mu)=1=f(q',\mu')=f(q'e^l,\mu'e^l)\geq f(q,\mu'e^l).
  \]
  Hence $\mu'e^l$ is either in the same level set of $f(q,\cdot)$ as $\mu$
  or it is strictly larger than $\mu$. By the minimality characterisation above,
  this implies $\mu\leq\mu'e^l$.
  Interchanging $q$ and $q'$ gives similarly $\mu'\leq\mu e^l$, and together
  these imply
  \[
    \lvert\log\mu-\log\mu'\rvert\leq l=\max_{x\in\supp(q)}\lvert\log q(x)-\log q'(x)\rvert.\tag*{\qed}
  \]
\end{proof}

\begin{lemma}
\label{lem:app-ripr-interval-log-payoff-lipschitz}
Let $\cU\subseteq\dist{\cX}$ be a non-empty interval uncertainty set, let
$q,q'\in\dist{\cX}$ satisfy $\supp(q)=\supp(q')=\supp(\cU)$, and let
$r\in\RIPr(q,\cU)$ and $r'\in\RIPr(q',\cU)$. Then, for every
$x\in\supp(\cU)$,
\[
  \left|\log\frac{q(x)}{r(x)}-\log\frac{q'(x)}{r'(x)}\right|
  \leq\max_{u\in\supp(\cU)}|\log q(u)-\log q'(u)|.
\]
\end{lemma}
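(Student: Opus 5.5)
Using \cref{lem:interval-ripr}, we write the two RIPrs in clipped form, $r(x)=\clip(q(x)/\mu,a_x,b_x)$ and $r'(x)=\clip(q'(x)/\mu',a_x,b_x)$. Here $\mu,\mu'$ are the smallest normalising constants, as in \cref{lem:app-ripr-interval-mu-lipschitz}. By the proof of that lemma, any admissible constant yields the same clipped vector, so we lose nothing by choosing the smallest ones.

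We first dispose of the degenerate case where $\cU$ is a singleton $\{p\}$. Then $r=r'=p$, and the left-hand side is $|\log q(x)-\log q'(x)|$, which is trivially bounded by the maximum. Otherwise $\cU$ has at least two elements. Put $l=\max_u|\log q(u)-\log q'(u)|$. \cref{lem:app-ripr-interval-mu-lipschitz} then gives $|\log\mu-\log\mu'|\le l$.

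The key step is a one-dimensional observation. For fixed $x$, write $s=\log q(x)-\log\mu$ and $s'=\log q'(x)-\log\mu'$, so that $\log r(x)=\clip(s,\log a_x,\log b_x)$, with the convention $\log 0=-\infty$. The log-payoff is then
\[
  \log\frac{q(x)}{r(x)}=\log q(x)-\clip(s,\log a_x,\log b_x),
\]
and similarly for the primed quantities. We write $\log q(x)=s+\log\mu$. The log-payoff becomes $\phi(s)+\log\mu$, where $\phi(s)=s-\clip(s,\alpha,\beta)$ with $\alpha=\log a_x$ and $\beta=\log b_x$. Since $x\in\supp(\cU)$, we have $b_x>0$, so $\beta$ is finite. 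The function $\phi$ is non-decreasing and $1$-Lipschitz: it equals $s-\alpha$ below $\alpha$, $0$ on $[\alpha,\beta]$, and $s-\beta$ above $\beta$.

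Alternatively, the log-payoff can be written as $\min\{\max\{\log\mu,\ \log q(x)-\log b_x\},\ \log q(x)-\log a_x\}$. This is a composition of max and min of quantities that each move by at most $l$, namely $\log\mu$ and $\log q(x)$. Since max and min are $1$-Lipschitz in the sup-norm, the difference of the log-payoffs is at most $\max\{|\log\mu-\log\mu'|,\ |\log q(x)-\log q'(x)|\}\le l$. This is the argument I would present, and it avoids tracking $s$ and $s'$ separately.

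To verify the identity, note that $q/r=q/\clip(q/\mu,a,b)=\clip(\mu,q/b,q/a)$. This holds because $t\mapsto q/t$ is decreasing, so it turns $\clip(\cdot,a,b)$ into $\clip(\cdot,q/b,q/a)$. When $a_x=0$, the upper bound $q/a$ is infinite, which is harmless.

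I expect the main obstacle to be exactly this identity and its edge cases: $a_x=0$, and the ordering $q/b\le q/a$ that makes the clip well-defined. Everything else follows immediately from \cref{lem:app-ripr-interval-mu-lipschitz} and the $1$-Lipschitzness of $\clip$ in all its arguments.
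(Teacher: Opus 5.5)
Your proposal is correct and follows essentially the same route as the paper: dispose of the singleton case, invoke \cref{lem:app-ripr-interval-mu-lipschitz} to get $|\log\mu-\log\mu'|\le l$, rewrite the log-payoff as $\clip(\log\mu,\,\log q(x)-\log b_x,\,\log q(x)-\log a_x)$, and conclude because each of the three arguments moves by at most $l$. The only cosmetic difference is the last step: you use $1$-Lipschitzness of $\min/\max$ in the sup-norm, whereas the paper uses monotonicity of $\clip$ together with the fact that it commutes with a common translation.
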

\begin{proof}
  Write
  \[
    l=\max_{u\in\supp(\cU)}|\log q(u)-\log q'(u)|.
  \]
  If $\cU$ is a singleton, then $r=r'$ and the claim follows directly from
  $|\log q(x)-\log q'(x)|\leq l$.

  Otherwise, let $\mu,\mu'$ be the smallest normalizing constants for
  $q,q'$. By \cref{lem:app-ripr-interval-mu-lipschitz},
  $|\log\mu-\log\mu'|\leq l$. For a fixed $x\in\supp(\cU)$, put
  $z=\log q(x)$ and $z'=\log q'(x)$. The clipped form
  \eqref{eq:interval-ripr} gives
  \[
    \log\frac{q(x)}{r(x)}
    = \clip\left(\log\mu,\,z-\log b_x,\,z-\log a_x\right),
  \]
  where the upper endpoint is understood as $+\infty$ if $a_x=0$; the same
  identity holds for the primed quantities. Each of the three arguments of
  the primed clipping operation differs from its unprimed counterpart by at
  most $l$. Since clipping is monotone in all three arguments and commutes
  with a common translation, we have
  \begin{align*}
    \log\frac{q(x)}{r(x)}+l
    &= \clip\left(\log\mu+l,\,z-\log b_x+l,\,z-\log a_x+l\right)\\
    &\geq \clip\left(\log\mu',\,z'-\log b_x,\,z'-\log a_x\right)
    = \log\frac{q'(x)}{r'(x)}.
  \end{align*}
  Symmetrically, we derive $\log\frac{q'(x)}{r'(x)}+l\geq\log\frac{q(x)}{r(x)}$, which together proves the claim. \qed
\end{proof}

\begin{lemma}
\label{lem:app-ripr-interval-lipschitz}
Let $\cU\subseteq\dist{\cX}$ be a non-empty interval uncertainty set, let
$\feat_x\in\R^d$ with $\|\feat_x\|_\infty\leq B$ for $x\in\cX$, and define
\[
  q_\theta(x)
  =\frac{\ind{x\in\supp(\cU)}\exp(\feat_x\theta)}
    {\sum_{u\in\supp(\cU)}\exp(\feat_u\theta)},
  \qquad
  r_\theta\in\RIPr(q_\theta,\cU),
\]
for $\theta\in\R^d$. Then, for every $x\in\supp(\cU)$, the function
\[
  g_x(\theta)=\log q_\theta(x)-\log r_\theta(x)
\]
is $2B$-Lipschitz with respect to the $L^1$ norm on $\R^d$.
\end{lemma}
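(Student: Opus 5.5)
The plan is to reduce the claim to \cref{lem:app-ripr-interval-log-payoff-lipschitz}, which already controls the RIPr log-payoff by the sup-norm distance between the log-weights. It therefore suffices to bound $\max_{u\in\supp(\cU)}\lvert\log q_\theta(u)-\log q_{\theta'}(u)\rvert$ by $2B\lVert\theta-\theta'\rVert_1$.

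Fix $\theta,\theta'\in\R^d$ and put $q=q_\theta$, $q'=q_{\theta'}$. Both have support exactly $\supp(\cU)$, since the exponentials are positive, so \cref{lem:app-ripr-interval-log-payoff-lipschitz} applies and gives
\[
  \lvert g_x(\theta)-g_x(\theta')\rvert\leq\max_{u\in\supp(\cU)}\lvert\log q_\theta(u)-\log q_{\theta'}(u)\rvert .
\]
For $u\in\supp(\cU)$ we write
\[
  \log q_\theta(u)=\feat_u\theta-\log\sum_{v\in\supp(\cU)}\exp(\feat_v\theta).
\]
The linear term changes by $\lvert\feat_u(\theta-\theta')\rvert\leq\lVert\feat_u\rVert_\infty\lVert\theta-\theta'\rVert_1\leq B\lVert\theta-\theta'\rVert_1$, by H\"older's inequality.

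For the log-sum-exp term, note that each exponent satisfies
\[
  \feat_v\theta'\leq\feat_v\theta+B\lVert\theta-\theta'\rVert_1 .
\]
Since log-sum-exp is monotone and shifts additively under a common translation, its value changes by at most $B\lVert\theta-\theta'\rVert_1$. Equivalently, its gradient is a convex combination of the rows $\feat_v$, whose $\infty$-norm is at most $B$. The triangle inequality then gives $\lvert\log q_\theta(u)-\log q_{\theta'}(u)\rvert\leq 2B\lVert\theta-\theta'\rVert_1$, and combining this with the display above yields the claim.

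There is no real obstacle here; the substantive work was already done in \cref{lem:app-ripr-interval-mu-lipschitz,lem:app-ripr-interval-log-payoff-lipschitz}. The only points needing care are the use of the correct dual pairing, $\infty$-norm of the features against the $L^1$ norm on $\theta$, and the verification that the support hypothesis of \cref{lem:app-ripr-interval-log-payoff-lipschitz} holds. The singleton case of $\cU$ is already covered inside that lemma.
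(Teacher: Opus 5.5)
Your proof is correct and follows the paper's route: you reduce to \cref{lem:app-ripr-interval-log-payoff-lipschitz} and then show $\max_{u}\lvert\log q_\theta(u)-\log q_{\theta'}(u)\rvert\leq 2B\lVert\theta-\theta'\rVert_1$. The only difference is cosmetic. The paper obtains the constant $2B$ by bounding the $\infty$-norm of the log-softmax gradient $\feat_u-\sum_v q_\theta(v)\feat_v$, whereas you bound the linear term and the log-sum-exp term by $B$ each, using H\"older's inequality and the monotonicity and translation property of log-sum-exp; this avoids differentiation but is otherwise equivalent.
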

\begin{proof}
  For every $u\in\supp(\cU)$, the gradient of the softmax log-probability is
  \[
    \nabla_{\theta}\log q_\theta(u)
    =\feat_u-\sum_{v\in\supp(\cU)}q_\theta(v)\feat_v,
  \]
  and thus
  \[
    \left\|\nabla_\theta\log q_\theta(u)\right\|_\infty
    \leq \max_{v}\|\feat_u - \feat_v\|_\infty\leq 2B.
  \]
  By the standard bounded-gradient characterisation of Lipschitz functions on
  convex domains, for all $\theta,\theta'\in\R^d$,
  \[
    \max_{u\in\supp(\cU)}
      \left|\log q_{\theta'}(u)-\log q_\theta(u)\right|
    \leq 2B\|\theta'-\theta\|_1.
  \]
  Applying \cref{lem:app-ripr-interval-log-payoff-lipschitz} to the
  distributions corresponding to $\theta$ and $\theta'$ now gives
  \[
    |g_x(\theta')-g_x(\theta)|\leq2B\|\theta'-\theta\|_1,
  \]
  as required. \qed
\end{proof}

\begin{lemma}
\label{lem:app-scale-lipschitz}
Let $\cU\subseteq\dist{\cX}$ be a nonempty polytope, let $\feat_x\in\R^d$
with $\|\feat_x\|_\infty\leq B$ for $x\in\cX$, and let
$w^\theta_x=\exp(\feat_x\theta)$. Then, for every $x\in\supp(\cU)$, the
function $\theta\mapsto\log\cC^{\mathrm{Sc}}(w^\theta,\cU)(x)$ is
$2B$-Lipschitz with respect to the $L^1$ norm on $\R^d$.
\end{lemma}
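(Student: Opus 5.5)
We split the log-payoff into a part that depends on $x$ and a normalising part that does not, and bound the gradient of each. By definition,
\[
  \log\cC^{\mathrm{Sc}}(w^\theta,\cU)(x)=\feat_x\theta-M(\theta),
  \qquad M(\theta)=\log\max_{p\in\cU}\E_p[w^\theta]=\max_{v\in V}\log\sum_{u\in\supp(\cU)}v(u)\exp(\feat_u\theta),
\]
where $V$ is the finite vertex set of $\cU$. The maximum over $\cU$ may be taken over $V$ because the objective is linear in $p$. The first term is linear with gradient $\feat_x$. The second is a pointwise maximum of finitely many smooth functions $M_v(\theta)=\log\sum_u v(u)\exp(\feat_u\theta)$. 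Only vertices with $\E_v[w^\theta]>0$ matter, which is every $v$ since $v\neq0$ and the weights are positive.

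Rather than differentiating the maximum directly, we bound the difference $g_v(\theta)=\feat_x\theta-M_v(\theta)$ for each fixed vertex $v$. Its gradient is
\[
  \nabla g_v(\theta)=\feat_x-\sum_u \rho_v^\theta(u)\feat_u,
  \qquad \rho_v^\theta(u)=\frac{v(u)\exp(\feat_u\theta)}{\sum_{u'}v(u')\exp(\feat_{u'}\theta)}.
\]
Here $\rho_v^\theta$ is a probability distribution, a tilted version of $v$. Hence
\[
  \|\nabla g_v(\theta)\|_\infty\leq\max_u\|\feat_x-\feat_u\|_\infty\leq2B.
\]
This is the same computation as in the proof of \cref{lem:app-ripr-interval-lipschitz}. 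By the mean value theorem along the segment and Hölder's inequality ($\ell_\infty$ against $\ell_1$), each $g_v$ is $2B$-Lipschitz in the $L^1$ norm.

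It remains to handle the maximum. Our function equals $\feat_x\theta-\max_v M_v(\theta)=\min_v g_v(\theta)$, a pointwise minimum of finitely many $2B$-Lipschitz functions, and such a minimum is again $2B$-Lipschitz. The standard argument is as follows. For $\theta,\theta'$, pick $v'$ attaining the minimum at $\theta'$. Then
\[
  \min_v g_v(\theta)-\min_v g_v(\theta')\leq g_{v'}(\theta)-g_{v'}(\theta')\leq2B\|\theta-\theta'\|_1,
\]
and the symmetric bound follows by swapping $\theta$ and $\theta'$.

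The main point needing care is the choice of decomposition. A naive bound of $\feat_x\theta$ and $M(\theta)$ separately gives Lipschitz constants of $B$ each, for $2B$ in total. That also suffices, but only if one first checks that $M$ is Lipschitz despite the maximum. The version above bundles the $x$-dependent term into each $g_v$ and avoids nondifferentiability cleanly. The only subtle point is ensuring $\rho_v^\theta$ is a genuine distribution supported in $\supp(\cU)$. This holds because $\supp(v)\subseteq\supp(\cU)$ and $v\neq0$.
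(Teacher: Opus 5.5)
Your proof is correct, but it takes a different route from the paper's. The paper uses exactly the ``naive'' split you set aside. The linear term $\feat_x\theta$ is $B$-Lipschitz. For $M(\theta)=\log\max_{p\in\cU}\E_p[w^\theta]$, the paper uses the multiplicative sandwich $e^{-\delta B}w^\theta_y\le w^{\theta'}_y\le e^{\delta B}w^\theta_y$ whenever $\lVert\theta-\theta'\rVert_1\le\delta$. This inequality passes through $\E_p$, and then through $\max_{p\in\cU}$ by monotonicity, giving $\lvert M(\theta)-M(\theta')\rvert\le B\delta$ directly. So the nondifferentiability of the maximum, which your vertex decomposition is built to avoid, never arises there. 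The paper's argument is shorter and needs neither vertices nor derivatives. It uses only monotonicity and positive homogeneity of $w\mapsto\max_{p\in\cU}\E_p[w]$, so it holds verbatim for any nonempty $\cU$, polytopic or not.

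Your version instead writes the log-payoff as $\min_{v\in V}g_v$ and bounds each smooth piece by the tilted-distribution gradient computation from \cref{lem:app-ripr-interval-lipschitz}. It relies on the vertex representation and the smoothness of each piece. In return, it treats the scale and RIPr lemmas uniformly. It also naturally yields the sharper, translation-invariant constant $\max_{u\in\supp(\cU)}\lVert\feat_x-\feat_u\rVert_\infty\le 2B$, instead of bounding the two terms by $B$ separately.
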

\begin{proof}
We have
\[
  \log\cC^{\mathrm{Sc}}(w^\theta,\cU)(x)
  =\feat_x\theta
   -\log\max_{p\in\cU}\sum_{y\in\supp(\cU)}p(y)\exp(\feat_y\theta).
\]
The term $\feat_y\theta$ is $B$-Lipschitz as a function of $\theta$
for every $y\in\supp(\cU)$; in particular, the first term is $B$-Lipschitz.
For the second term, if $\lVert\theta-\theta'\rVert_1\leq\delta$, we have
$$
w^{\theta'}_y=\exp(\feat_y\theta')\leq\exp(\feat_y\theta+\delta B)\leq e^{\delta B}w^\theta_y.
$$
Hence
$\E_p[w^{\theta'}]\leq e^{\delta B}\E_p[w^\theta]$ for every $p$,
and therefore also $\max_{p\in\cU}\E_p[w^{\theta'}]\leq e^{\delta B}\max_{p\in\cU}\E_p[w^\theta]$.
By the symmetric argument, we also have $\max_{p\in\cU}\E_p[w^{\theta}]\leq e^{\delta B}\max_{p\in\cU}\E_p[w^{\theta'}]$,
and the second term is also $B$-Lipschitz, thus proving the claim. \qed
\end{proof}

\thmMixtureRegret*
\begin{proof}
  Fix a history $h_T= (x_0,a_0,\dots,x_T,a_T)$ and a parameter $\theta^*\in\R^d$.
  If for some $0 \leq t<T$ we have $x_{t+1}\notin\supp(\model(h_t))$, then
  every monitor has infinite value and the claim is trivial.
  So assume that $x_{t+1}\in\supp(\model(h_t))$ for every $0\leq t<T$.

  Let further $g_{t+1}(\theta) = e^\theta(x_{t+1};h_t)$ for $0\leq t < T$ be
  either the RIPr bet or the scale bet.
  By \cref{lem:app-ripr-interval-lipschitz} and
  \cref{lem:app-scale-lipschitz}, $\log g_t(\theta)$ is
  $2B$-Lipschitz with respect to the $L^1$ norm. Since
  $\lVert\theta-\theta'\rVert_1\leq\sqrt d\,\lVert\theta-\theta'\rVert_2$, it
  is therefore $2B\sqrt d$-Lipschitz with respect to the $L^2$ norm. Applying
  \cref{lem:app-mixture-lipschitz-regret} with $L=2B\sqrt d$ gives
  \[
    \log\int_{\R^d}\prod_{t=1}^Tg_t(\theta)\,d\pi(\theta)
    \geq\log\prod_{t=1}^Tg_t(\theta^*)
    -\frac{\lVert\theta^*\rVert^2}{2\sigma^2}
    -d\log(1+2BT\sigma)-\frac d2.
  \]
  The integral on the left is $W_T^{\LM}$, while the product on the
  right is $W_T^{\theta^*}$, proving the claim. \qed
\end{proof}

\corMixtureDetectionTime*
\begin{proof}
  Fix a trajectory on which
  $\liminf_{t\to\infty}\frac1t\log W_t^{\theta^*}\geq g$, and put
  \[
    b_\alpha=\log(1/\alpha),
    \qquad
    R_t
    =\frac{\lVert\theta^*\rVert_2^2}{2\sigma^2}
     +d\log(1+2Bt\sigma)+\frac d2,
    \qquad
    \tau^+_\alpha
    =\inf\{t\geq1:\log W_t^{\theta^*}
                      \geq b_\alpha+R_t\}.
  \]
  By \cref{thm:mixture-regret}, we have for $t=\tau^+_\alpha$,
  \[
    \log W_{\tau^+_\alpha}^{\LM}
    \geq
    \log W_{\tau^+_\alpha}^{\theta^*}
    -
    R_{\tau^+_\alpha}
    \geq b_\alpha+R_{\tau^+_\alpha}
    -
    R_{\tau^+_\alpha}
    =b_\alpha.
  \]
  Hence $\tau_\alpha^{\LM}\leq\tau^+_\alpha$.
  Now fix $\varepsilon\in(0,g/2)$. Since $R_t=o(t)$ and due to the growth
  condition, there exists $t_0$ such that for all
  $t\geq t_0$, $\log W_t^{\theta^*}\geq t(g-\varepsilon)$ and
  $R_t\leq\varepsilon t$, and hence
  $\log W_t^{\theta^*}-R_t\geq t(g-2\varepsilon)$.
  For $\alpha\in(0,1)$, the time
  $t_\alpha=\max\{t_0,\lceil b_\alpha/(g-2\varepsilon)\rceil\}$ satisfies
  $t_\alpha\geq t_0$ and $t_\alpha(g-2\varepsilon)\geq b_\alpha$, so
  $\log W_{t_\alpha}^{\theta^*}-R_{t_\alpha}\geq b_\alpha$.
  As $\tau^+_\alpha$ is the first time with this property,
  $\tau^+_\alpha\leq t_\alpha\leq\max\{t_0,1+b_\alpha/(g-2\varepsilon)\}$. Since $b_\alpha$ tends to infinity as $\alpha\downarrow0$,
  we conclude that
  \[
    \limsup_{\alpha\downarrow0}
    \frac{\tau_\alpha^{\LM}}{T_\alpha}
    =
    g\limsup_{\alpha\downarrow0}
    \frac{\tau_\alpha^{\LM}}{b_\alpha}
    \leq
    g\limsup_{\alpha\downarrow0}
    \frac{\tau^+_\alpha}{b_\alpha}
    \leq
    \frac{g}{g-2\varepsilon}.
  \]
  Since $\varepsilon$ can be chosen arbitrarily small, the left-hand side is at most $1$. \qed
\end{proof}

\end{document}